\newif\iflong
\newif\ifshort
\tikzstyle{avertex} = [color=black,fill=white,draw,circle,inner sep=0pt,minimum size=6pt]
\tikzstyle{bvertex} = [color=black,fill=black,draw,circle,semithick,inner sep=0pt,minimum size=6pt]
\tikzstyle{cvertex} = [color=black,fill=cyan,circle,semithick,inner sep=0pt,minimum size=12pt]
\tikzstyle{ivertex} = [color=black,fill=gray,draw,circle,semithick,inner sep=0pt,minimum size=6pt]
\tikzstyle{dial} = [draw = black, dashed, fill = blue, fill opacity = 0.3, semithick]
\tikzstyle{selgad} = [draw = black, fill = green, fill opacity = 0.3, semithick]
\tikzstyle{forb} = [draw = black, fill = red, fill opacity = 0.3, semithick]
\theoremstyle{plain}
\newtheorem{theorem}{Theorem}[section]
\newtheorem{lemma}[theorem]{Lemma}
\newtheorem{observation}[theorem]{Observation}
\newtheorem{proposition}[theorem]{Proposition}
\newtheorem{rrule}[theorem]{Reduction Rule}
\newtheorem{fact}[theorem]{Fact}% 
\newtheorem{invariant}[theorem]{Invariant}% 
\theoremstyle{definition}
\newtheorem{definition}[theorem]{Definition}
\newenvironment{lemenum}{\begin{compactenum}[(i)]}{\end{compactenum}}
\newcommand{\proofparagraph}[1]{\smallskip\emph{#1}}
\newcommand{\proofparagraphf}[1]{\emph{#1}}
\newlength{\capitalheight}
\newcommand{\app}{$\mathrm{(\spadesuit)}$}
\crefname{lemma}{Lemma}{Lemmas}
\crefname{rrule}{Reduction Rule}{Reduction Rules}
\crefname{brule}{Branching Rule}{Branching Rules}
\crefname{fact}{Fact}{Facts}
\crefname{invariant}{Invariant}{Invariants}
\crefname{observation}{Observation}{Observations}
\crefname{subsection}{Section}{Sections}
\crefname{section}{Section}{Sections}
\crefname{figure}{Fig.}{Figs.}
\newcommand{\nat}{\mathbb{N}}
\newcommand{\Oh}{\mathcal{O}}
\newcommand{\NP}{\text{\normalfont NP}}
\newcommand{\FPT}{\text{\normalfont FPT}}
\newcommand{\poly}{\ensuremath{\operatorname{poly}}}
\newcommand{\monopolar}{{\sc Monopolar Recognition}}
\newcommand{\deltacluster}{{\sc Cluster-$\Pi_\Delta$-Partition}}
\newcommand{\deltaclustertitle}{{\sc Cluster-Pi-Delta-Partition}}
\newcommand{\picluster}{\textsc{Cluster-$\Pi$-Partition}}
\newcommand{\pCIS}{\textsc{Colorful Independent Set}}
\newcommand{\pipartition}{cluster\nobreakdash-$\Pi$ partition}
\newcommand{\cNP}{\ensuremath{\mathsf{NP}}}
\newcommand{\containment}{\ensuremath{\cNP \subseteq \mathsf{coNP}/\mathsf{poly}}}
\newcommand{\nocontainment}{\ensuremath{\cNP \not\subseteq \mathsf{coNP}/\mathsf{poly}}}
\newcommand{\AT}{A_{\mathsf{true}}}
\newcommand{\BT}{B_{\mathsf{true}}}
\newcommand{\VL}{V(\Lambda)}
\newcommand{\EL}{E(\Lambda)}
\newcommand{\D}{dist_{\Lambda}}
\newcommand{\N}{N_{\Lambda}}
\def\ie{{\em i.e.}}
\def\etal{{\em et al.}}
\newcommand{\inter}{\ensuremath{\mathsf{inter}}}
\newcommand{\important}{\ensuremath{\mathsf{rep}}}
\newcommand{\even}{\ensuremath{\mathsf{even}}}
\newcommand{\edge}{\ensuremath{\mathsf{edge}}}
\newcommand{\fixed}{\ensuremath{\mathsf{fixed}}}
\begin{document}

\title{
Solving Partition Problems Almost Always Requires\\Pushing Many Vertices Around% 
\footnote{This manuscript is the full version of an article in the
  \emph{Proceedings of the 26th Annual European Symposium on
    Algorithms (ESA~'18)}~\cite{kanj_solving_2018}. CK gratefully
  acknowledges support by the DFG, project MAGZ, KO 3669/4-1. MS
  gratefully acknowledges support by the People Programme (Marie Curie
  Actions) of the European Union's Seventh Framework Programme
  (FP7/2007-2013) under REA grant agreement number 631163.11, by the
  Israel Science Foundation (grant number 551145/14), and by the
  European Research Council (ERC) under the European Union’s Horizon
  2020 research and innovation programme under grant agreement number
  714704.}}

\author[1]{Iyad Kanj}
\affil[1]{School of Computing, DePaul University
Chicago, USA, \texttt{ikanj@cs.depaul.edu}}

\author[2]{Christian Komusiewicz}
\affil[2]{Fachbereich Mathematik und Informatik, Philipps-Universität Marburg, Germany, \texttt{komusiewicz@informatik.uni-marburg.de}}

\author[3,4]{Manuel Sorge}
\affil[3]{Department of Industrial Engineering and Management, Ben-Gurion University of the Negev, Beer Sheva, Israel}
\affil[4]{Faculty of Mathematics, Informatics and Mechanics, University of Warsaw, Warsaw, Poland, \texttt{manuel.sorge@mimuw.edu.pl}}

\author[5]{Erik Jan van Leeuwen}
\affil[5]{Department of Information and Computing Sciences, Utrecht University, {The Netherlands} \texttt{e.j.vanleeuwen@uu.nl}}

\maketitle

\begin{textblock}{20}(.63, 14.8)
\includegraphics[width=40px]{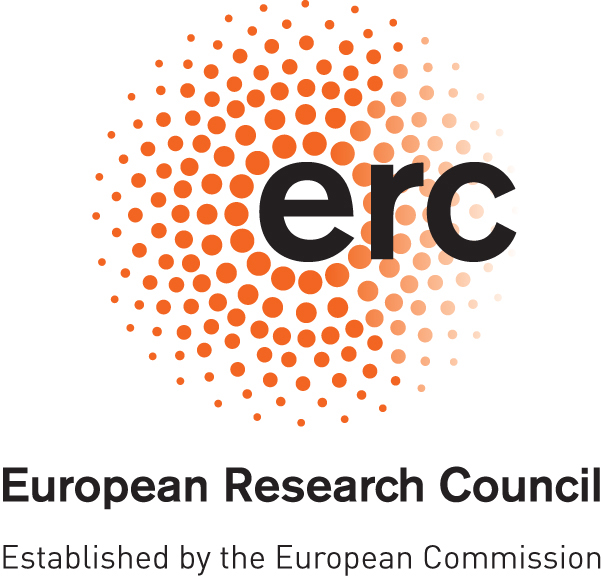}% 
\end{textblock}
\begin{textblock}{20}(2.6, 14.78)
\includegraphics[width=60px]{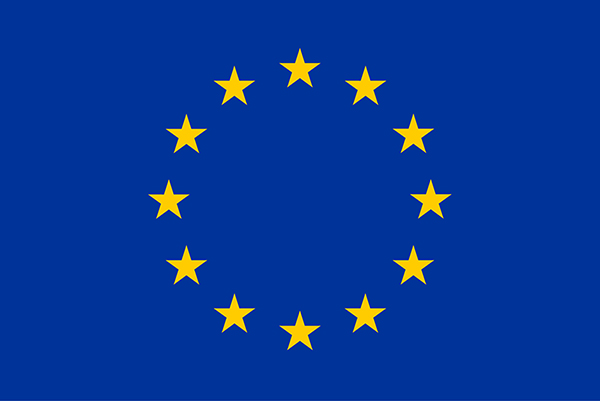}% 
\end{textblock}

\begin{abstract}
\looseness=-1
A fundamental graph problem is to recognize whether the vertex set of a graph $G$ can be bipartitioned into sets $A$ and $B$ such that $G[A]$ and $G[B]$ satisfy properties $\Pi_A$ and $\Pi_B$, respectively. This so-called \textsc{$(\Pi_A,\Pi_B)$-Recognition} problem generalizes amongst others the recognition of $3$-colorable, bipartite, split, and monopolar graphs. % 

In this paper, we study whether certain fixed-parameter tractable \textsc{$(\Pi_A,\Pi_B)$-Recognition} problems admit polynomial kernels. In our study, we focus on the first level above triviality, where $\Pi_A$ is the set of $P_3$-free graphs (disjoint unions of cliques, or cluster graphs), the parameter is the number of clusters in the cluster graph $G[A]$, and~$\Pi_B$ is characterized by a set $\mathcal{H}$ of connected forbidden induced subgraphs. We prove that, under the assumption that \nocontainment{}, \textsc{$(\Pi_A,\Pi_B)$-Recognition} admits a polynomial kernel if and only if $\mathcal{H}$ contains a graph with at most~$2$ vertices. In both the kernelization and the lower bound results, we exploit the properties of a \emph{pushing process}, which is an algorithmic technique used recently by Heggerness~\etal~and by Kanj \etal~to obtain fixed-parameter algorithms for many cases of \textsc{$(\Pi_A,\Pi_B)$-Recognition}, as well as several other problems.

\end{abstract}

\section{Introduction}\label{sec:intro}
Given two (induced-)hereditary graph properties $\Pi_A$ and $\Pi_B$, a graph $G$ is a \emph{$(\Pi_A,\Pi_B)$-graph} if $V(G)$ can be partitioned into two sets $A,B$ such that $G[A] \in \Pi_A$ and $G[B] \in \Pi_B$. We call $(A,B)$ a \emph{$(\Pi_{A},\Pi_{B})$-partition} of $G$. The \textsc{$(\Pi_{A},\Pi_{B})$-Recognition} problem is to recognize whether a given graph is a $(\Pi_A,\Pi_B)$-graph. This generic problem captures a wealth of famous problems, including the recognition of $3$-colorable, bipartite, co-bipartite, and split graphs, and \textsc{$\Pi$-Vertex Deletion}, which asks for a partition~$(A,B)$ such that~$G[A]\in \Pi$ and~$G[B]$ has order at most~$k$ for some given~$k$.\footnote{The order of a graph is its number of vertices.}

Note that, since $\Pi_A$ and $\Pi_B$ are hereditary, they are characterized by a (not necessarily finite) set of forbidden induced subgraphs.
In the most interesting cases, the characterization of one of the two properties $\Pi_A$ and $\Pi_B$ includes only forbidden induced subgraphs whose order is at least three.
Indeed, if the characterizations of $\Pi_A$~and~$\Pi_B$ both include a forbidden induced subgraph of order exactly~$2$, then~\textsc{$(\Pi_{A},\Pi_{B})$-Recognition} can be solved in linear time: The property~$\Pi_A$, and similarly~$\Pi_B$, is either the class of complete graphs or the class of edgeless graphs.
Thus, the set of~$(\Pi_{A},\Pi_{B})$-graphs is either the set of split graphs, which can be recognized in linear time~\cite{FH77}, the set of bipartite graphs, or the set of co-bipartite graphs, which both are also well known to be recognizable in linear time.
In this manuscript we thus focus on the case where the forbidden induced subgraph characterization of $\Pi_A$ or $\Pi_B$ does not include graphs of size at most two.

When~$\Pi_A$~or~$\Pi_B$ is characterized by larger forbidden induced subgraphs, then \textsc{$(\Pi_{A},\Pi_{B})$-Re\-cog\-ni\-tion} is more computationally complex~\cite{Ach97,Far04,KS97}.
In particular, \textsc{$(\Pi_{A},\Pi_{B})$-Re\-cog\-ni\-tion} is NP-hard for all $\Pi_A$ and $\Pi_B$ that are characterized by a set of \emph{connected} forbidden induced subgraphs as long as at least one of~$\Pi_A$ and~$\Pi_B$ has only forbidden induced subgraphs of order at least three~\cite{Far04}.
The restriction to connected forbidden induced subgraphs in this statement is necessary in order to get such a sweeping classification result. If we allow disconnected forbidden induced subgraphs in the characterizations of $\Pi_A$ or $\Pi_B$, then there are some polynomial-time solvable cases.
Consider, for example, the set of \emph{unipolar graphs} which can be recognized in polynomial time~\cite{FH77,EW14,MY15,TC85}.
Unipolar graphs are $(\Pi_A,\Pi_B)$-graphs wherein~$\Pi_A$ is the set of complete graphs, which is characterized by forbidding the graph containing two nonadjacent vertices, and~$\Pi_B$ is the set of cluster graphs, which is characterized by forbidding the (simple) path $P_3$ on three vertices.
Indeed, the general NP-hardness result from above does not apply to unipolar graphs, because, while forbidden subgraph characterization for~$\Pi_B$ does not contain graph of order at most two, the set of complete graphs is characterized by a \emph{disconnected} forbidden induced subgraph.
There are further similar polynomial-time solvable cases of \textsc{$(\Pi_{A},\Pi_{B})$-Re\-cog\-ni\-tion}~\cite{KKSL17}.
We are here concerned with the NP-hard cases of \textsc{$(\Pi_{A},\Pi_{B})$-Re\-cog\-ni\-tion} and thus we focus on the cases where both $\Pi_A$ and $\Pi_B$ are characterized only by connected forbidden subgraphs.
Note that, equivalently, $\Pi_A$ and~$\Pi_B$ are each closed under the disjoint union of graphs.\footnote{That is, the disjoint union of two graphs that each satisfy $\Pi_A$ (resp.~$\Pi_B$) also satisfies $\Pi_A$ (resp.~$\Pi_B$).}

Many \textsc{$(\Pi_{A},\Pi_{B})$-Recognition} problems that can be characterized by forbidden induced subgraphs were shown to be fixed-parameter tractable, for instance, when $\Pi_{A}$ is the class of graphs that is a disjoint union of $k$ cliques and $\Pi_{B}$ is the set of edgeless graphs or the set of cluster graphs~\cite{KKSL17}, where $k$ is the parameter.
We thus aim here to complement these results by studying the kernelization complexity of \textsc{$(\Pi_{A},\Pi_{B})$-Recognition}.

By the discussion above, the first interesting case of \textsc{$(\Pi_{A},\Pi_{B})$-Recognition} to study with respect to kernelization is when the two properties are closed under the disjoint union and one of the two properties, say $\Pi_A$, is characterized by connected forbidden induced subgraphs of size no less than three.
We thus consider the first level above triviality of \textsc{$(\Pi_{A},\Pi_{B})$-Recognition}, by letting $\Pi_A$ be the hereditary class characterized by a single forbidden induced subgraph, which, in addition, is the most simple connected graph with at least three vertices: the path $P_3$.
This leads to the following problem:
 
\begin{quote}
  \picluster\\
  \textbf{Input:} A graph $G=(V,E)$.\\
  \textbf{Question:} Can $V(G)$ be partitioned into two parts $A$ and $B$ such that $G[A]$~is a cluster
  graph and~$G[B]\in \Pi$?
\end{quote}
\picluster{} generalizes the recognition problem of many graph classes, such as the recognition of \emph{monopolar graphs}~\cite{BHK14,CH14,CH14b,LN14} ($\Pi$ is the set of edgeless graphs), \emph{$2$\nobreakdash-subcolorable} graphs~\cite{BFNW02,FJBS03,Gimbel2003,Sta08} ($\Pi$ is the set of cluster graphs), and several others~\cite{AFM15,BO17,CC86}. % 
\picluster{} is NP-hard in these special cases, and by the general hardness result mentioned above, it is NP-hard for all $\Pi$ that are characterized by a set of connected forbidden induced subgraphs~\cite{Ach97,Far04,KS97}. To cope with this hardness, we consider the number~$k$ of clusters in the cluster graph~$G[A]$ as a parameter. As mentioned above, this parameter led to a number of tractability results for \picluster. In particular, recognizing monopolar graphs and recognizing 2-subcolorable graphs are fixed-parameter tractable with respect to this parameter~\cite{KKSL17}.

\subparagraph{Our Results}
The result we obtain gives a complete characterization of the kernelization complexity of {\picluster}:

\begin{theorem} \label{thm:main}
Let $\Pi$ be a graph property characterized by a (not necessarily finite) set~$\mathcal{H}$ of connected forbidden induced subgraphs. Then unless \containment, {\picluster} parameterized by the number $k$ of clusters in the cluster graph~$G[A]$ admits a polynomial kernel if and only if $\mathcal{H}$ contains a graph of order at most~$2$.
\end{theorem}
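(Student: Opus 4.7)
The plan is to split the proof along the two implications and, for the forward direction, reduce to just two genuinely different subcases. Since every connected graph on at least two vertices contains an edge, forbidding $K_1$ makes $K_2$ redundant and forbidding $K_2$ makes every connected forbidden induced subgraph on at least two vertices redundant. Hence $\mathcal{H}$ contains a graph of order at most two iff $\Pi$ is either the class containing only the empty graph ($K_1 \in \mathcal{H}$) or the class of edgeless graphs ($K_2 \in \mathcal{H}$, $K_1 \notin \mathcal{H}$). The first is trivial: the instance is a yes-instance iff $G$ is a cluster graph with at most $k$ components, so we can return a constant-size kernel after this test.

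For the substantive upper bound, \picluster{} for edgeless $\Pi$ is monopolar recognition. Here I would follow the pushing-process paradigm. Given a candidate cluster partition of $A$ into $k$ parts, vertices of $B$ form an independent set and are therefore characterised by their adjacency patterns into the clusters. The plan is: (i) identify a polynomial-sized set $X$ of ``important'' vertices whose cluster assignments are essentially forced; (ii) classify every remaining vertex by its ``type'' relative to $X$, exploiting the fact that the pushing process lets us swap same-type vertices between $A$ and $B$ without breaking any clusters or creating edges inside $B$; (iii) apply a sunflower/expansion-style reduction to keep only a polynomial number of vertices of each type; (iv) sum up to obtain a $\poly(k)$ bound on $|V(G)|$. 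The pushing process is the engine that justifies interchangeability in step~(iii).

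For the lower bound, I would use an \textsc{or}-cross-composition into \picluster{} for each admissible $\mathcal{H}$, starting from an NP-hard problem that does not admit a polynomial kernel under the assumption \containment{} (a standard candidate is \pCIS{} or an appropriate SAT variant). The construction would (1) fix a connected $H \in \mathcal{H}$ of order at least three; (2) encode the $t$ input instances in parallel ``instance-gadgets'' whose valid $(\Pi_A, \Pi_B)$-partitions correspond exactly to solutions of the respective instance; (3) glue them to a central ``dial'' selector gadget (the \scluster{} macro hints that this is the intended device) that uses up almost all of the allotted $k$ clusters and forces a valid partition to choose exactly one instance to solve; (4) ensure that the resulting $k$ is polynomial in the parameter of the input instances and independent of~$t$. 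The pushing process would be used here in the analysis: showing that any hypothetical valid partition can be pushed into a canonical form localised at one instance-gadget, ruling out spurious solutions that straddle several instances.

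The hard part will be making the lower-bound construction uniform across \emph{all} $\mathcal{H}$ whose members are connected and of order at least three. The gadgets must simultaneously (a) contain no induced copy of any $H \in \mathcal{H}$ on the $\Pi_B$-side of intended partitions and (b) necessarily create such an induced copy as soon as the encoded constraint is violated. Because $\mathcal{H}$ can be arbitrary, I expect to use gadgets that are ``as simple as possible'' on vertices intended for $B$ (few edges, short induced paths) so that the smallest forbidden subgraph appearing is the one we explicitly plant. Coordinating this with the selector gadget, while keeping the parameter $k$ controlled, is the principal technical hurdle; the pushing-process lens is what should make the interaction analysable in a single, $\mathcal{H}$-agnostic framework.
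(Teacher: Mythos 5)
The decomposition into cases and the overall architecture (a kernel for monopolar recognition on one side, an or-cross-composition for every other $\mathcal{H}$ on the other) match the paper, but both halves of your plan stop short of the actual content, and the upper-bound half points in a direction that does not work for this parameter. The sunflower-style ``keep polynomially many vertices of each type'' step is the natural tool when the parameter bounds the \emph{size of $B$} --- that is exactly how the paper proves its secondary result, \cref{thm:general-kernel} --- but here the parameter is the number of clusters: $|B|$ and the clusters themselves are unbounded, so there is no bounded-size hitting-set structure to apply a sunflower argument to, and your step~(i), a polynomial-size set $X$ of vertices with forced assignments, is asserted without any method for producing it. The real difficulty, which your sketch does not engage with, is that placing one vertex triggers a cascade of forced placements (an avalanche), and the kernel must retain enough of the graph to witness every cascade that could make the instance infeasible. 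The paper handles this with a nice clique decomposition, reduction rules bounding the number of large and edge cliques by $\Oh(k)$, and then an explicit auxiliary bipartite graph $\Lambda$ between large-clique vertices and independent vertices whose connected components are precisely the avalanches; vertices whose component is too large (more than $k$ vertices on one side, or containing a $G$-edge on the wrong side) get fixed by a rule, and only vertices reachable in $\Lambda$ from a small set of seeds are kept. Nothing in your type/interchangeability argument substitutes for this.

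On the lower bound, your outline follows the paper's in broad strokes (or-cross-composition from \pCIS, instance selection sharing clusters, a parameter of order $\log t + \poly(k,n,m)$), but the step you explicitly defer --- making the construction uniform over all $\mathcal{H}$ whose members are connected and of order at least three --- is the entire proof. The paper resolves it with a single primitive: fix a \emph{minimum-order} forbidden subgraph $M \in \mathcal{H}$ (so $|V(M)| \ge 3$), and to ``make $u,v,w$ exclusive'' plant a fresh copy of $M$, identify three of its vertices with $u,v,w$, and pin every remaining vertex of the copy into $B$ by attaching it to two dedicated anchors; minimality of $M$ and invariants isolating these helper vertices guarantee that no forbidden subgraph appears except the ones deliberately planted, which is exactly your conditions (a) and (b). All selection gadgets (a binary-tree instance selector, per-color vertex selectors) and the edge-verification gadgets are assembled from this primitive together with shared anchored cliques. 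Without such a primitive the $\mathcal{H}$-agnostic construction you describe cannot be instantiated, so as it stands the proposal is a plan rather than a proof.
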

The positive result in \cref{thm:main} corresponds to the recognition of monopolar graphs. Indeed, the graph properties with forbidden induced subgraphs of order~$2$ are ``being edgeless'' and ``being nonedge-less'', but the latter is not characterized by connected forbidden induced subgraphs. Moreover, one application of \cref{thm:main} is to show that the recognition of 2-subcolorable graphs does not admit a polynomial kernel parameterized by the number of clusters in~$G[A]$.

One might also consider two other parameters: the size of a largest cluster in~$G[A]$ and the size of one of the sides.
The size of a largest cluster in~$G[A]$ will not lead to tractability, as~{\picluster} is NP-hard on subcubic graphs, even % 
when~$\Pi$ is the set of edgeless graphs~\cite{LN14}.
Thus, we consider the number~$k$ of vertices in the graph $G[B]$, even for the broader \textsc{$(\Pi_{A},\Pi_{B})$-Recognition} problem. We previously proved a general fixed-parameter tractability result in this case~\cite{KKSL17}. Here, we observe a very general kernelization result: % 

\begin{theorem}\label[theorem]{thm:general-kernel}
\textsc{$(\Pi_{A},\Pi_{B})$-Recognition} parameterized by $k$, the maximum size of $B$, admits a polynomial kernel with $\Oh((d+1)! (k+1)^{d})$ vertices, when $\Pi_A$ can be characterized by a collection $\mathcal{H}$ of forbidden induced subgraphs, each of size at most~$d$, and $\Pi_B$ is hereditary.
\end{theorem}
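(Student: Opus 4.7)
The plan is to prove the theorem via a standard sunflower-lemma reduction on the forbidden-induced-subgraph hypergraph, verifying carefully that the $\Pi_B$ side does not interfere. First, observe that for any partition $(A, B)$ of $V(G)$, we have $G[A] \in \Pi_A$ if and only if $B$ intersects every vertex set $F \subseteq V(G)$ with $|F| \le d$ such that $G[F]$ is isomorphic to some graph in $\mathcal{H}$. Let $\mathcal{F}$ denote the family of all such $F$; since only finitely many isomorphism classes of graphs of order at most $d$ exist, $\mathcal{F}$ can be enumerated in time $\Oh(n^d)$. Thus the problem reduces to: find $B \subseteq V(G)$ with $|B| \le k$ that hits $\mathcal{F}$ and satisfies $G[B] \in \Pi_B$.

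Next, I apply the following reduction iteratively: while $\mathcal{F}$ contains a sunflower of $k+2$ sets with core $C$, remove one of its petal sets. The standard pigeonhole argument gives safety for the hitting-set condition: any $B$ with $|B| \le k$ that hits the remaining $k+1$ sets must meet $C$, since the $k+1$ sets have pairwise disjoint petals (sets minus $C$) and $B$ has too few vertices to cover each one outside $C$; hence $B$ also hits the removed set. Crucially, this reduction is oblivious to $\Pi_B$, because $\Pi_B$ only constrains $B$ itself and not which sets must be hit. The Sunflower Lemma then gives $|\mathcal{F}'| \le d!\,(k+1)^d$ after termination, and an easy induction on the sequence of reductions shows that any $B$ with $|B| \le k$ hitting the final $\mathcal{F}'$ also hits the original $\mathcal{F}$.

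The kernel is $G' := G[V']$ with $V' := \bigcup \mathcal{F}'$, so $|V'| \le d \cdot d!\,(k+1)^d \le (d+1)!\,(k+1)^d$. For the forward direction, if $(A, B)$ is a valid partition for $G$, then $(A \cap V', B \cap V')$ is valid for $G'$ by hereditariness of $\Pi_A$ and $\Pi_B$ together with $\mathcal{F}' \subseteq \mathcal{F}$. For the backward direction, given a valid $(A', B')$ for $G'$, I extend to $B := B'$ and $A := A' \cup (V(G) \setminus V')$. Then $|B| \le k$ and $G[B] = G'[B'] \in \Pi_B$ are immediate, and the sunflower invariant implies $B'$ hits every $F \in \mathcal{F}$, so no forbidden induced subgraph from $\mathcal{H}$ is contained in $A$; hence $G[A] \in \Pi_A$.

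The step requiring the most care is formulating the safety invariant of the sunflower reduction so that it survives the chain of reductions, so that the final small family $\mathcal{F}'$ yields hitting-set-equivalence with the original $\mathcal{F}$; this is a classical induction, but it is precisely what justifies the backward direction, namely that the certificate $B'$ in the kernel lifts to a full solution of the original instance. A minor conceptual point is that $\mathcal{H}$ may a priori be infinite, but its restriction to graphs of order at most $d$ has only finitely many isomorphism classes, so the enumeration of $\mathcal{F}$ and the subsequent reduction run in polynomial time.
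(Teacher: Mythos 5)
Your proof is correct, and it follows the same overall strategy as the paper: translate the $\Pi_A$-condition into hitting the family $\mathcal{F}$ of occurrences of forbidden induced subgraphs, shrink $\mathcal{F}$ to a subfamily $\mathcal{F}'$ of size at most $d!\,(k+1)^d$ via sunflowers, and take the kernel to be $G[\bigcup\mathcal{F}']$, observing that $\Pi_B$ never interferes because it constrains only $B$ itself. The difference lies in how the shrinking is justified. The paper invokes the lemma of Fomin et al.\ as a black box, which only guarantees that \emph{minimal} hitting sets of size at most $k$ are preserved; this forces two extra steps, namely a proposition that a valid partition can be assumed to use a minimal hitting set, and a ``sandwiching'' proposition handling the fact that the family of forbidden occurrences inside the kernel graph, $\mathcal{F}(G')$, sits strictly between $\mathcal{F}'$ and $\mathcal{F}$. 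You instead run the sunflower plucking by hand and prove the one-directional invariant that \emph{every} hitting set of size at most $k$ of the reduced family hits the original family, which is exactly what the backward direction needs; this cleanly sidesteps both the minimality detour and the sandwiching issue (any valid $B'$ in the kernel hits $\mathcal{F}'$ directly, whether or not it is minimal). Two points you wave at with ``classical induction'' deserve a sentence each in a full write-up: the $k+1$ retained petals of the sunflower used at step $i$ may themselves be removed at later steps, so the induction must run backward from $\mathcal{F}'$ to $\mathcal{F}$ (hypothesis: hitting $\mathcal{F}'$ implies hitting $\mathcal{F}_{i+1}$, which contains those $k+1$ sets); and the empty-core case is covered vacuously, since no set of size at most $k$ hits $k+1$ pairwise disjoint nonempty sets. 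Neither is a gap.
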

We obtain a better bound on the number of vertices in the kernel for {\deltacluster}, the restriction of {\picluster} to the case when all graphs containing a vertex of degree at least~$\Delta+1$ are forbidden induced subgraphs of $\Pi$.

\begin{theorem}\label{thm:kernelis}
{\deltacluster} parameterized by $k$, the maximum size of $B$, admits a polynomial kernel with~$\Oh((\Delta^2 +1) \cdot k^2)$ vertices.
\end{theorem}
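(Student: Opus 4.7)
The plan is to first compute a small modulator to a cluster graph and then to bound the remainder via twin-type reductions exploiting the degree bound on~$G[B]$. Concretely, I would greedily compute a maximal collection $\mathcal{P}$ of pairwise vertex-disjoint induced $P_3$s in~$G$. Since $G[A]$ is $P_3$-free in any valid partition, each such $P_3$ contributes at least one vertex to~$B$, so $|\mathcal{P}|\le k$ in any yes-instance; if $|\mathcal{P}|>k$, return a trivial NO. Otherwise let $S:=V(\mathcal{P})$, so $|S|\le 3k$ and, by maximality of~$\mathcal{P}$, the graph $G-S$ is $P_3$-free, hence a disjoint union of cliques $C_1,\ldots,C_m$.

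Next, I would reduce $G-S$ to polynomial size. Each $C_i$ with no edges to~$S$ is an isolated clique of~$G$ and can safely be replaced by a single representative vertex, since one may always put all of~$C_i$ into~$A$. For the remaining clusters, a key structural fact is that $|C_i\cap B|\le\Delta+1$ in any valid partition, because $C_i\cap B$ is a clique in~$G[B]$ and the maximum degree there is~$\Delta$. To bound $|C_i\cap A|$, I would partition the vertices of~$C_i$ by their \emph{$S$-profile} $N(\cdot)\cap S$, which makes same-profile vertices true twins in~$G$; then a twin reduction keeping at most $\Delta+2$ representatives per profile is safe because any excess twin in~$B$ can be pushed to~$A$ and absorbed into the cluster of the retained representatives.

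To conclude the size bound, I would bound the number of realized $S$-profiles in each $C_i$ by $O(k)$ via a sunflower-style extraction: a large family of profiles contained in the $O(k)$-vertex set~$S$ would yield either a sunflower whose common core produces an additional reducible structure or an induced $P_3$-packing exceeding the modulator. A parallel cluster-level twin reduction (collapsing clusters with identical $S$-adjacency pattern up to the $\Delta$-budget) then bounds $m$ by $O(\Delta k)$. Combining these, $|V(G)|\le 3k+O(\Delta k)\cdot O(\Delta k)=O((\Delta^2+1)k^2)$, as claimed.

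The main obstacle is the sharp bound on realized $S$-profiles per cluster: a careful sunflower-type argument is required, and one must exploit that vertices in $C_i\cap A$ share the same $(S\cap A)$-profile (being in a common cluster of~$G[A]$), so all genuine profile variation lives on the $(S\cap B)$-side, which has size at most~$k$. The twin reductions themselves must be safe in \emph{both} directions, and this is exactly where the $\Delta$-bound is essential: a twin currently in~$B$ has at most $\Delta+1$ cluster-mates in~$B$, so keeping a retained representative in~$A$ allows the excess twin to be re-routed to~$A$ without violating the cluster-graph or bounded-degree constraint.
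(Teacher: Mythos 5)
Your opening moves coincide with the paper's: a maximal induced-$P_3$-packing $\mathcal{P}$ as a modulator, rejection when $|\mathcal{P}|>k$, and the observation that $G-V(\mathcal{P})$ is a cluster graph; your per-profile twin reduction keeping $\Delta+2$ representatives is also sound as stated (at most $\Delta+1$ members of a clique can lie in $B$, so a deleted true twin can always be re-inserted into the $A$-cluster of a retained twin). The problem lies in the two counting claims that your final size bound rests on, and at least one of them is false.

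First, the claim that the number of realized $S$-profiles inside a single cluster $C_i$ can be reduced to $O(k)$ does not hold. A vertex $u\in S$ that has at least $\Delta+2$ neighbors and at least $\Delta+2$ nonneighbors in $C_i$ is \emph{forced into $B$} (the paper calls such vertices fixed: otherwise one neighbor and one nonneighbor of $u$ in $C_i$ would both lie in $A$ and form an induced $P_3$ with $u$). Up to $k$ such vertices can coexist in $S$, each splitting $C_i$ into two large, independently chosen halves, so a yes-instance can realize $2^{\Theta(k)}$ distinct profiles in one cluster. Your heuristic that "profile variation lives on the $(S\cap B)$-side, which has size at most $k$" correctly locates where the variation lives, but $k$ bits of variation give exponentially many profiles, not $O(k)$; and a sunflower extraction does not help, since the profiles are subsets of a $3k$-element universe of size up to $3k$, for which the Sunflower Lemma's threshold is superpolynomial. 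Second, the claimed bound of $O(\Delta k)$ on the number of clusters via a "cluster-level twin reduction up to the $\Delta$-budget" is not justified either; two clusters with the same coarse adjacency pattern to $S$ are not interchangeable because their internal profile structure differs, and the paper only achieves $O((\Delta+1)k^2)$ clusters.

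The paper closes exactly this gap with different machinery: it classifies each $u\in V(\mathcal{P})$ as fixed (forced into $B$, either by being adjacent to $\ge k+2$ clusters or by the $(\Delta+2,\Delta+2)$-split above) or nonfixed, marks a globally bounded set of $O((\Delta+1)k^2)$ \emph{important} vertices (for a nonfixed $u$ and each of its $\le k+1$ adjacent clusters, only $\min\{\Delta+2,\cdot\}$ neighbors and nonneighbors are marked), deletes clusters containing no important vertex, and trims every cluster to at most $\Delta+2$ \emph{unimportant} vertices in total—not $\Delta+2$ per profile. The correctness of that trimming is the delicate part your twin argument sidesteps: the retained unimportant clique-mates of a removed vertex $w$ are not twins of $w$, and one must use the importance labels to argue that any packing vertex sharing an $A$-cluster with them is in fact adjacent to $w$. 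Your accounting would need to be replaced by this global important/unimportant bookkeeping (or something equivalent) to reach the $O((\Delta^2+1)k^2)$ bound.
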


\subparagraph{Our Techniques}
We obtain our main kernelization result, stated in Theorem~\ref{thm:main}, by studying an algorithmic method that we call the \emph{pushing process}. It was employed in~\cite{HKLRS13,KPRS16,KKSL17} to obtain fixed-parameter tractability results for several problems. In the context of \textsc{$(\Pi_{A},\Pi_{B})$-Recognition}, the pushing process was employed as part of an algorithmic technique, referred to as \emph{inductive recognition}, that works as follows~\cite{KKSL17}. The algorithm for the \textsc{$(\Pi_{A},\Pi_{B})$-Recognition} problem under consideration empties the input graph, and adds vertices back one by one while maintaining a valid partition. However, adding a vertex might invalidate a previously valid partition. To remedy this, the pushing process comes in: vertices are \emph{pushed} from one part of the partition to the other part in the hope of obtaining a valid partition again. Earlier, Heggernes \etal~\cite{HKLRS13} had employed a pushing process, as part of an algorithmic technique, referred to as \emph{iterative localization}, which works very similarly to inductive recognition, to show the fixed-parameter tractability of computing the cochromatic number of perfect graphs and the stabbing number of disjoint rectangles with axes-parallel lines (using the standard parameters). Kolay \etal~\cite{KPRS16} also applied it in follow-up work related to the cochromatic number. All aforementioned results, for the various problems under consideration, rely on iteratively/inductively maintaining a valid partition of the input instance, as the instance elements (vertices, rectangles, etc.) are added one at a time. After each element is added, which may invalidate the partition, a pushing process is applied in order to try to repair the partition. This process results in pushing some elements from each part of the partition to the other part, and may have a cascading~effect.

A crucial ingredient in applying the pushing process is to understand the \emph{ava\-lanches} caused by this process. For \textsc{$(\Pi_{A},\Pi_{B})$-Recognition}, an avalanche is triggered when a vertex is pushed to~$A$; this may imply that several other vertices must be pushed to~$B$, which, in turn, triggers the pushing of yet more vertices to~$A$, and so on. Similar effects are visible in the aforementioned cochromatic number and rectangle stabbing number problems~\cite{HKLRS13,KPRS16}. The contribution of the previous works~\cite{HKLRS13,KKSL17,KPRS16} was to bound the depth of this process by some function of the parameter, leading to fixed-parameter algorithms. However, such a bound does not provide an answer to the question of which vertices trigger avalanches and their continued rolling, and whether the number of vertices relevant to avalanches can somehow be limited.

This question can be naturally formalized in terms of the kernelization complexity of problems to which the pushing process applies. A kernel reduces the size of the graph and thus directly reduces the number of vertices triggering or being affected by avalanches when an algorithm based on the pushing process is applied to the kernelized instance. In previous work, Kolay \etal~\cite{KPRS16} studied the kernelization complexity of computing the cochromatic number of a perfect graph $G$, which is the smallest number $k = r + \ell$ such that $V(G)$ can be partitioned into $r$ sets that each induces a clique and $\ell$ sets that each induces an edgeless graph. This problem has a parameterized algorithm using a pushing process~\cite{HKLRS13}, but Kolay \etal~\cite{KPRS16} showed that, unless \containment, this problem does not admit a polynomial kernel parameterized by $r+\ell$. This suggests that, for this problem, one cannot control the number of vertices affected by avalanches. The kernelization complexity of \textsc{$(\Pi_{A},\Pi_{B})$-Recognition}, however, has not been studied so far. Hence, it is open whether avalanches can be controlled to affect few vertices in this case.

In this work, we study the role that the pushing process plays in characterizing the kernelization complexity of \textsc{$(\Pi_{A},\Pi_{B})$-Recognition}.
The pushing process, and a deeper understanding of the avalanches it causes, turn out to be central to both directions of Theorem~\ref{thm:main}. Indeed, we show that, while for a specific $\Pi$ the pushing process can be used to witness a small vertex set of size $k^{\Oh(1)}$ containing the vertices affected by avalanches, for all other $\Pi$, such a set of polynomial size is unlikely to~exist.

To prove the positive result in the theorem, we first perform a set of data reduction rules to identify some vertices that are part of~$A$ or~$B$ in \emph{any} partition~$(A,B)$ of $V(G)$ such that $G[A]$ is a cluster graph with at most $k$ clusters and $G[B]$ is edgeless. More importantly, these rules restrict the combinatorial properties of the graph induced by the remaining vertices. With these restrictions, it becomes possible to represent the structure of the avalanches that occur using an auxiliary bipartite graph. This graph enables two further reduction rules that lead to the polynomial kernel.

For the negative result, we observe that the bipartite graph constructed in the kernel is closely tied to the deterministic behavior of the pushing process for monopolar graphs: when an edge in~$G[B]$ is created by pushing a vertex to~$B$, the other endpoint of the edge must be pushed to~$A$ (recall that $G[B]$ must become edgeless). This limits the avalanches. However, for more complex properties $\Pi_B$, such a simple correspondence no longer exists. In particular, when the forbidden induced subgraphs have order at least~$3$, pushing a vertex to~$B$ may create a forbidden induced subgraph in~$G[B]$ that can be repaired in at least two different ways. Then the pushing process starts to behave nondeterministically, and the avalanches grow beyond control. We exploit this intuition to exclude the existence of a polynomial kernel, unless \containment, by providing a cross-composition.

\section{Preliminaries}
\label{sec:prelim}
For $\ell \in \mathbb{N}$, we use $[\ell]$ to denote $\{1, 2, \ldots, \ell\}$.

\subparagraph{Graphs} We follow standard graph-theoretic notation~\cite{Die12}. Let $G$ be a graph. By $V(G)$ and $E(G)$ we denote the vertex-set and the edge-set of $G$, respectively. Throughout the paper, we use~$n:=|V(G)|$ to denote the number of vertices in~$G$ and~$m:=|E(G)|$ to denote its number of edges. We also say that $G$ is of \emph{order} $|V(G)|$. We assume~$n=\Oh(m)$ since isolated vertices can be safely removed in the problems that we consider. For $X \subseteq V(G)$, $G[X] = (X, \{e \mid e \in E(G) \cap X\}$ denotes the \emph{subgraph of $G$ induced by $X$}. For a vertex $v \in G$, $N(v) = \{u \mid \{u, v\} \in E(G)\}$ and $N[v] = N(v) \cup \{v\}$ denote the \emph{open neighborhood} and the \emph{closed neighborhood} of $v$, respectively. For $X \subseteq V(G)$, we define $N(X) := (\bigcup_{v \in X}N(v))\setminus X$ and $N[X] := \bigcup_{v \in X}N[v]$, and for a family $\mathcal{X}$ of subsets $X \subseteq V(G)$, we define  $N(\mathcal{X}) := (\bigcup_{X \in \mathcal{X}}N(X)) \setminus (\bigcup_{X \in \mathcal{X}}X)$ and $N[\mathcal{X}] := \bigcup_{X \in \mathcal{X}}N[X]$.

We say that a vertex $v$ is \emph{adjacent to a subset $X \subseteq V(G)$ of vertices}  if $v$ is adjacent to at least one vertex in $X$. Similarly, we say that two vertex sets~$X\subseteq V(G)$ and~$Y\subseteq V(G)$ are adjacent if there exist~$x\in X$ and~$y\in Y$ that are adjacent. If $X$ is any set of vertices in $G$, we write $G-X$ for $G- X$. For a vertex $v \in V(G)$, we write $G-v$ for $G - \{v\}$.

\subparagraph{Graph Partitions}
We say a partition $(A, B)$ of $V(G)$ is a \emph{cluster-$\Pi$ partition} if (1) $G[A]$ is a cluster graph and (2) $G[B]\in \Pi$.
A \emph{monopolar partition} of a graph $G$ is a partition of $V(G)$ into a cluster graph and an independent set. The \monopolar{} problem is defined as follows:

\begin{quote}
  \monopolar\\
  \textbf{Input:} A graph $G=(V,E)$ and an integer $k$.\\
  \textbf{Question:} Does $G$ admit a monopolar partition $(A,B)$ such that the number of clusters in the cluster graph $G[A]$ is at most $k$?
\end{quote}

For an instance $(G,k)$ of {\monopolar}, a monopolar partition of $G$ is \emph{valid} if the number of clusters in the cluster graph of the partition is at most $k$.

\subparagraph{Parameterized Complexity}
A {\em parameterized problem} is a tuple~$(P, \kappa)$, where $P \subset \Sigma^*$ is a language over some finite alphabet~$\Sigma$ and $\kappa \colon \Sigma^* \to \mathbb{N}$ is a \emph{parameterization}. For a given instance $x \in \Sigma^*$, we also say $\kappa(x)$ is the \emph{parameter}.
A parameterized problem $(P, \kappa)$ is {\it fixed-parameter tractable} (\FPT), if there exists an algorithm that on input $x \in \Sigma^*$
decides if $x$ is a yes-instance of $P$, that is, $x \in P$, and that runs in time $f(\kappa(x))n^{\Oh(1)}$,
where $f$ is a computable function independent of $n = |x|$.
A parameterized problem is {\em kernelizable}
if there exists a polynomial-time reduction that maps an instance $x$ of
the problem to another instance $x'$ such that: (1) $|x'| \leq \lambda(\kappa(x))$ for
some computable function $\lambda$, (2) $\kappa(x') \leq \lambda(\kappa(k))$, and (3) $x$ is a yes-instance
of the problem if and only if $x'$ is. The instance
$x'$ is called the {\em kernel} of~$x$, and $|x'|$ is the \emph{kernel size}.  It is well known that a parameterized problem is \FPT{} if and only if it is kernelizable~\cite{DF13,kernelbook}, and a natural question to ask for an \FPT{} problem is whether or not it has a kernel of polynomial size. % 
We refer the reader to~\cite{kernelbook} for an in-depth discussion about kernelization.% 

Let $Q\subseteq \Sigma^*$ be a language and $(P, \kappa)$ a parameterized problem, \ie, $P$ is a language and $\kappa \colon \Sigma^* \to \mathbb{N}$ a parameterization. An \emph{or-cross-composition} from~$Q$
into $(P, \kappa)$ is a polynomial-time
algorithm that, given $t$~instances $q_1, \ldots, q_t \in \Sigma^*$ of
$Q$, computes an instance~$r \in \Sigma^*$ % 
such that \[\kappa(r) \leq \poly\left(\log
t + \max_{i = 1}^t|q_i|\right),\] and $r \in P$ if and only if $q_i \in Q$ for some
$i \in [t]$. % 
We have the following:

\begin{theorem}[\cite{BJK14}] \label{thm:crosscomp} Let $Q \subseteq \Sigma^*$ be an NP-hard language and $(P, \kappa)$ be a parameterized problem. If there is an or-cross-composition from~$Q$ into $(P, \kappa)$ and $(P, \kappa)$ admits a
polynomial-size problem kernel, then \containment.
\end{theorem}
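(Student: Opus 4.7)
The plan is to reduce the statement to the or-distillation theorem of Fortnow and Santhanam. Recall that an \emph{or-distillation} of a language $Q \subseteq \Sigma^*$ into a language $R \subseteq \Sigma^*$ is a polynomial-time algorithm that, given $t$ strings $q_1, \ldots, q_t$ of length at most $n$, outputs a string $y$ with $|y| \leq \poly(n + \log t)$ such that $y \in R$ if and only if some $q_i \in Q$. Fortnow and Santhanam showed that if an NP-hard language $Q$ or-distills into any $R \in \cNP$, then $\containment$. The approach is therefore to build an or-distillation of the given NP-hard $Q$ into $P$ (which we may assume lies in $\cNP$, as is standard in the kernelization setting) by piping the or-cross-composition through the polynomial kernel.

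Concretely, given $t$ instances $q_1, \ldots, q_t$ of $Q$, each of length at most $n$, I would first feed them to the or-cross-composition, producing in polynomial time an instance $r$ of $(P,\kappa)$ with $\kappa(r) \leq p_1(n + \log t)$ for some polynomial $p_1$ and with $r \in P$ iff some $q_i \in Q$. Then I would apply the polynomial kernel to $r$, yielding an instance $r'$ with $|r'| \leq \lambda(\kappa(r)) \leq \lambda(p_1(n + \log t))$, where $\lambda$ is a polynomial by assumption; hence $|r'|$ is bounded by a polynomial in $n + \log t$. Kernelization preserves the yes/no answer, so $r' \in P$ iff some $q_i \in Q$, and the full pipeline runs in polynomial time (noting that $|r|$ itself is polynomial in $\sum_i |q_i| \le tn$, so the kernelization step remains polynomial). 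Thus the composition is an or-distillation of $Q$ into $P$, and Fortnow--Santhanam delivers $\containment$.

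The composition above is mostly bookkeeping; the real depth is concentrated in the invoked Fortnow--Santhanam theorem, which would be the main obstacle in any self-contained proof. Its argument is information-theoretic in flavour: a polynomial bound on $|r'|$ forces the distilling map, from $t$-tuples of $n$-bit strings into a comparatively tiny output space, to collide heavily once $t$ is chosen super-polynomial in the output size, and this collision structure can be exploited to construct, for each input length $n$, polynomial-size advice that together with a non-deterministic verification certifies non-membership in $Q$. This places the NP-hard $Q$ in $\mathsf{coNP}/\poly$, whence a standard reduction argument propagates the inclusion to all of $\cNP$ and yields $\containment$.
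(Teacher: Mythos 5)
The paper does not prove this statement at all---it is imported verbatim from Bodlaender, Jansen, and Kratsch \cite{BJK14}---and your argument is exactly the standard proof given there: compose the or-cross-composition with the polynomial kernel to obtain an or-distillation of the NP-hard language $Q$, then invoke the Fortnow--Santhanam distillation theorem. Your bookkeeping (the bound $|r'|\leq\lambda(p_1(n+\log t))$ and the observation that the kernelization step still runs in time polynomial in the total input length) is correct. The only caveat worth noting is that you assume $P\in\cNP$ in order to quote Fortnow--Santhanam in its original form, whereas the theorem as stated imposes no such hypothesis on $(P,\kappa)$; the full-strength version rests on the observation (made explicit by Dell and van Melkebeek and used in \cite{BJK14}) that the distillation target may be an arbitrary language, since the $\mathsf{coNP}/\poly$ advice is merely a polynomial-size table of output strings against which membership is checked by lookup. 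For the application in this paper the point is moot, as \picluster{} is plainly in \cNP.
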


\noindent For more discussion on parameterized complexity, we refer to the literature~\cite{DF13,CFK+15}.

\section{A Polynomial Kernel for Monopolar Recognition Parameterized by the Number of Clusters}
\label{sec:monopolar-cluster}
\looseness=-1\noindent The outline of the kernelization algorithm is as follows.  First, we compute a
decomposition of the input graph into sets of vertex-disjoint maximal cliques which we
call a \emph{clique decomposition}. This decomposition is used and updated throughout the data-reduction procedure. We also maintain sets of vertices that are determined to belong to~$A$
or~$B$. We first apply a sequence of reduction rules whose aim is roughly to bound the number
of cliques and the number of edges between the cliques in the decomposition, and to
restrict the structure of edges between cliques. Then, we build an auxiliary graph
to model how the placement of a vertex in~$A$ or~$B$ implies an avalanche of placements
of vertices in~$A$ and~$B$. If this avalanche creates too many clusters in~$A$, then this determines the
placement of certain vertices in~$A$ or $B$, and triggers another reduction rule. If this reduction rule
does not apply anymore, then the size of the auxiliary graph is bounded, which in turn
helps bounding the size of the instance.

\subsection{Clique Decompositions}

Say that a clique $C$ is a {\em large clique} if $|C| \geq 3$, an \emph{edge clique} if $|C|=2$ (\ie, $C$ is an edge), and a \emph{vertex clique} if $|C|=1$  (\ie, $C$ consists of a single vertex).
Let $(G, k)$ be an instance of \monopolar. Suppose that $\AT \subseteq V(G)$ and $\BT \subseteq V(G)$ are subsets of vertices that have been determined to be in $A$ and $B$, respectively, in any valid monopolar partition of $(G, k)$. We define a decomposition $(C_1, \ldots, C_r)$ of $V(G) \setminus (\AT \cup \BT)$, referred to as a \emph{nice clique decomposition}, that partitions this set into vertex-disjoint cliques $C_1, \ldots, C_r$, $r \geq 1$, such that the tuple $(C_1, \ldots, C_r)$ satisfies the following properties (see \cref{fig:clique-decomp} for an illustration):
\begin{compactenum}[(i)]% 
\item In the decomposition tuple $(C_1, \ldots, C_r)$, the large cliques appear before the edge cliques, and the edge cliques, in turn, appear before the vertex cliques; that is, for each large clique $C_i$  and for each edge or vertex clique~$C_j$  we have $i < j$, and for each edge clique $C_i$ and for each vertex clique $C_j$ we have~$i < j$.

\item Each clique $C_i$, $i \in [r-1]$, is maximal in $\bigcup_{j=i}^{r} C_j$; that is, there does not exist a vertex $v \in \bigcup_{j=i+1}^{r} C_j$ such that $C_i \cup \{v\}$ is a clique.

\item The subgraph of $G$ induced by the union of the edge cliques and vertex cliques does not contain any large clique.
\end{compactenum}
The following fact is implied by property~(ii) above:

\begin{figure}[t]
  \centering
  \begin{tikzpicture}[xscale=0.8,yscale=0.8]

    \node[bvertex] (a1) at (1.2,-0.5) {}; \node[bvertex] (a2) at (2,-1.8) {}; \node[bvertex]
    (a3) at (2.8,-0.5) {};

    \draw[semithick] (a1)--(a2)--(a3)--(a1);

    \foreach \i in {1,3}{ \node[bvertex] (a\i1) at (3.2+2*\i,-0.5) {}; \node[bvertex] (a\i2) at
      (3.2+2*\i,-2.1) {}; \node[bvertex] (a\i3) at (4.8+2*\i,-0.5) {}; \node[bvertex] (a\i4) at
      (4.8+2*\i,-2.1) {};

      \draw[semithick] (a\i1)--(a\i2)--(a\i3)--(a\i4)--(a\i3)--(a\i2)--(a\i4)--(a\i1)--(a\i3);
    }

    \node[bvertex,label=above:$C_{16}$] at (4.8+2,-0.5) {};

    \node at (1.35,-1.5) {$C_1$};
    \node at (4.5,-1.5) {$C_2$};
    \node at (11.3,-1.5) {$C_3$};
    \node at (7.75,-1.5) {$C_4$};

    \foreach \x in {2,3,7}{ \node[bvertex] (b\x) at (1.25*\x,1) {}; }

    \foreach \x in {0,1,4,5,6,8,9,10}{
      \pgfmathsetmacro\y{int(\x+5)}
      \node[bvertex, label=above:$C_{\y}$] (b\x) at (1.25*\x,1) {};
    }

    \draw[semithick] (a1)--(b0); \draw[semithick] (a1)--(b1); \draw[semithick] (a1)--(b2);
    \draw[semithick] (a3)--(b3); \draw[semithick] (a2)--(b2); \draw[semithick] (a3)--(b2);

    \draw[semithick] (a11)--(b2); \draw[semithick] (a13)--(b2); \draw[semithick] (a12)--(b3);
    \draw[semithick] (a11)--(b3); \draw[semithick] (a14)--(b7); \draw[semithick] (a12)--(b5);
    \draw[semithick] (a14)--(b4); \draw[semithick] (a34)--(b5); \draw[semithick] (a31)--(b6);
    \draw[semithick] (a32)--(b8); \draw[semithick] (a32)--(b9); \draw[semithick] (a33)--(b10);

        \begin{pgfonlayer}{bg}
          { \draw [cyan, rounded corners, fill] (2,-1.8) -- (1.2,-0.5) -- (2.5, 1) --
            (2.8,-0.5) -- (2,-1.8);
          } { \draw [cyan, rounded corners, fill] (1.25*3,1) -- (3.2+2,-0.5) -- (3.2+2,-2.1);
          } { \draw [cyan, rounded corners, fill] (3.2+2*3,-2.1) -- (3.2+2*3,-0.5) --
            (4.8+2*3,-0.5) -- (4.8+2*3,-2.1); } { \draw [cyan, line width=4pt ] (1.25*7,1) --
            (4.8+2,-2.1); } { \foreach \x in {0,1,4,5,6,8,9,10}{ \node[cvertex] (c\x) at
              (1.25*\x,1) {}; } \node[cvertex] (c13) at (4.8+2,-0.5) {}; }

        \end{pgfonlayer}

      \end{tikzpicture}

      \caption{A monopolar graph with a clique decomposition. The large cliques are~$C_1$,~$C_2$, and~$C_3$; the edge clique is~$C_4$; all other cliques are small cliques. The small cliques form an independent set.  }
      \label[figure]{fig:clique-decomp}
    \end{figure}
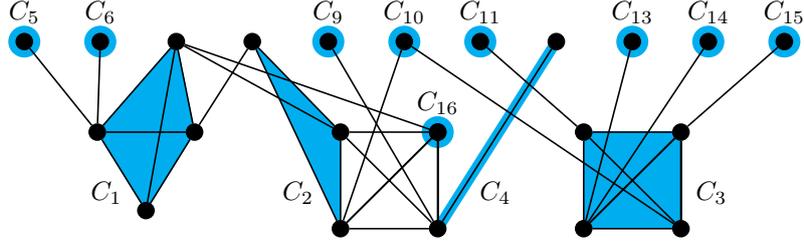

    \begin{fact}\label[fact]{fact:is}
      The vertex cliques in a nice clique decomposition form an independent set in~$G$.
\end{fact}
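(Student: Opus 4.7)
The plan is to derive the fact as an immediate consequence of the maximality property (ii) in the definition of a nice clique decomposition, since property (ii) already forbids precisely the configuration that would make the fact fail.

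Concretely, I would argue by contradiction: suppose there are two distinct vertex cliques $C_i = \{u\}$ and $C_j = \{v\}$ with $i < j$ such that $\{u, v\} \in E(G)$. Because $i < j \leq r$, we have $i \in [r-1]$, so property (ii) applies to~$C_i$. That property states that there is no vertex $w \in \bigcup_{\ell=i+1}^{r} C_\ell$ for which $C_i \cup \{w\}$ is a clique. However, $v$ belongs to $C_j$, and hence to $\bigcup_{\ell=i+1}^{r} C_\ell$, and $C_i \cup \{v\} = \{u, v\}$ is a clique of size two, since $u$ and $v$ are adjacent in $G$. This directly contradicts the maximality of $C_i$, and the claim follows.

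I expect no real obstacle in carrying this out: the fact is essentially a restatement of property (ii) specialized to the singleton cliques. Note that properties~(i) and~(iii) are not needed for the argument, although property~(i) ensures that the vertex cliques indeed appear last in the tuple and so the maximality condition for each vertex clique $C_i$ (other than $C_r$) refers to a range $\bigcup_{\ell=i+1}^{r} C_\ell$ that consists entirely of vertex cliques. The only minor point to check is the edge case $i = r$, which cannot arise here because we assumed $i < j$ for two distinct vertex cliques.
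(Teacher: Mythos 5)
Your proof is correct and matches the paper's reasoning exactly: the paper simply notes that the fact "is implied by property~(ii)", and your contradiction argument via the maximality of $C_i$ in $\bigcup_{\ell=i}^{r} C_\ell$ is precisely the intended justification.
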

A nice clique decomposition of $V(G) \setminus (\AT \cup \BT)$ can be computed as follows. Let $V'=V(G) \setminus (\AT \cup \BT) \neq \emptyset$. We check whether~$G[V']$ contains a clique~$C$ of size at least three. If this is the case, then we find a maximal clique $C'\supseteq C$ in $G[V']$, add~$C'$ as a large clique to the decomposition, set $V' \leftarrow V' -C'$ and repeat. Otherwise, $G[V']$ does not contain any clique of size 3, we check whether~$G[V']$ contains an edge clique~$C$ (\ie, two endpoints of an edge), add~$C$ to the decomposition, set $V'\leftarrow V'-C$ and repeat. If no edge clique exists in $G[V']$, then the remaining vertices in $V'$ form an independent set, and we add each one of them to the decomposition as a vertex clique. This process can be seen to run in polynomial time, but we will use the following more precise bound.
\begin{lemma}\label[lemma]{lem:cluster-dec-time}
  A nice clique decomposition of~$G$ can be computed in~$\Oh(nm)$ time.
\end{lemma}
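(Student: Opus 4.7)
The plan is to formalize and analyze the natural three-phase peeling procedure sketched just before the lemma. Phase~1 repeatedly peels off a maximal clique of size at least three from the current graph $G[V']$ (initialized as $G-(\AT\cup\BT)$) until no triangle remains; phase~2 repeatedly removes an edge of the then triangle-free graph and emits it as an edge clique; phase~3 outputs each leftover vertex as a singleton clique. Correctness of the three defining properties of a nice clique decomposition falls out of the construction: property~(i) is immediate from the phase ordering; property~(ii) holds because every clique added in phase~1 is extended to be maximal in what remains, while in phase~2 the current graph is triangle-free so each edge is automatically a maximal clique; and property~(iii) holds because phase~1 terminates exactly when $G[V']$ has become triangle-free, a condition preserved by vertex deletions.

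For the time bound the only non-trivial work is in phase~1. I would enumerate the edges $e_1,\dots,e_m$ of $G[V']$ once and process them in order. When $e_i=\{u,v\}$ is encountered, I first verify that both endpoints are still in $V'$; if so, I test whether $\{u,v\}$ lies in a triangle of the current $G[V']$ by marking the current neighborhood of $u$ and scanning that of $v$ for a marked vertex, in $\Oh(n)$ time. If no common neighbor is found I simply move on; if a triangle $\{u,v,w\}$ is discovered I extend it to a maximal clique $C$ using a standard counter-based routine that maintains, for each candidate $x\in V'$, the number of members of the current $C$ adjacent to~$x$, and admits~$x$ into $C$ as soon as this counter reaches $|C|$. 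Updating the counters via the adjacency lists of newly added vertices bounds the cost of a single extension by $\Oh(n+m)$. Phase~2 amounts to one scan over the surviving edge list and phase~3 to one scan over the remaining vertices, both within $\Oh(n+m)$.

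The main obstacle is avoiding an $\Omega(n^2 m)$ blow-up that would arise from re-searching for triangles from scratch after every peeling step. The key amortization is a monotonicity observation: once an edge $\{u,v\}$ has been certified to lie in no triangle of the current subgraph, deleting further vertices cannot create a new triangle through $\{u,v\}$, so this edge need never be inspected again. Hence across the entire phase each of the $m$ edges is subjected to at most one $\Oh(n)$-time triangle test, contributing $\Oh(nm)$ in total; the at most $n/3$ maximal-clique extensions contribute a further $\Oh(n\cdot(n+m))=\Oh(nm)$, using $n=\Oh(m)$. Combining all phases yields the claimed $\Oh(nm)$ bound. A minor implementation care is to reset only the counters of vertices actually touched during each extension, so that the $\Oh(n+m)$ bound per extension truly holds rather than degrading through repeated $\Theta(n)$ initializations.
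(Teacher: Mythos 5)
Your proof is correct and follows essentially the same strategy as the paper's: implement the greedy three-phase peeling and obtain the $\Oh(nm)$ bound by ensuring triangle detection is never repeated, plus $\Oh(n/3)$ maximal-clique extensions at $\Oh(m)$ each. The only difference is bookkeeping---the paper precomputes the full list of triangles up front and filters it with ``free'' labels, whereas you test each edge lazily exactly once and invoke the (valid) monotonicity of triangle-freeness under vertex deletion---and both yield the same bound.
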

\begin{proof}
  First, in~$\Oh(nm)$ time, compute a list of \emph{all} triangles in~$G$. Then, label all vertices
  as \emph{free}. Let~$G'$ denote the graph~$G[V']$. Process the list from head to
  tail; that is, consider each triangle in the list. If one vertex of the triangle is not
  labeled as free, then continue with the next triangle. If all vertices in this triangle
  are labeled as free, then compute a maximal clique in~$G'$ containing this
  triangle and consisting only of free vertices. This can be done in~$\Oh(m)$ time~\cite{Skie08}.  Add the maximal clique to the
  decomposition as described above, remove all vertices of the maximal clique from~$G'$,
  and unlabel all vertices of the maximal clique. Overall this step takes~$\Oh(nm)$ time,
  since we encounter at most~$n/3$ triangles whose vertices are labeled free. Once all
  triangles in the list are processed, compute a set of edge cliques
  in~$\Oh(m)$ time by computing a maximal matching in~$G[V']$. Finally, add all remaining
  vertices as vertex cliques in~$\Oh(n)$ time.
\end{proof}

Let $(G, k)$ be an instance of \monopolar. We initialize $\AT =\BT = \emptyset$, $V' =V(G) \setminus (\AT \cup \BT)$, and we compute a nice clique decomposition $(C_1, \ldots, C_r)$ of $V'$. We will then apply reduction rules to simplify the instance $(G, k)$. During this process, we may identify vertices in $V'$ to be added to $\AT$ or $\BT$. At any point in the process, we will maintain a partition $(\AT, \BT, C_1, \ldots, C_r)$ of $V(G)$ such that (1) $\AT \subseteq A$ and $\BT \subseteq B$ for any valid monopolar partition $(A, B)$ of $V(G)$, and (2) $(C_1, \ldots, C_r)$ is a nice clique decomposition of $V'=V(G) \setminus (\AT \cup \BT)$. We call such a partition $(\AT, \BT, C_1, \ldots, C_r)$ a \emph{normalized partition} of $V(G)$.
\subsection{Basic Reduction Rules}
We now describe our basic set of reduction rules. After the application of a reduction rule, a normalized partition may change as the result of moving vertices from $\bigcup_{i=1}^{r}C_i$ to $\AT \cup \BT$, and we will need to compute a nice clique decomposition of the resulting (new) set $V(G) \setminus (\AT \cup \BT)$. However, a vertex that has been moved to $\AT$ (resp.~$\BT$) will remain in $\AT$ (resp.~$\BT$). When a reduction rule is applied, we assume that no reduction rule preceding it\iflong, with respect to the order in which the rules are listed,\else\fi~is applicable.

The following rule is straightforward:

\begin{rrule}\label[rrule]{rule:0}
Let $(\AT, \BT, C_1, \ldots, C_r)$ be a normalized partition of $V(G)$.  If $\AT$ is not a cluster graph with at most $k$ clusters, or $\BT$ is not an independent set, then reject the instance $(G, k)$.
\end{rrule}

The following rule is correct because, for every monopolar partition~$(A, B)$ of~$G$, $\BT \subseteq B$ and $B$ is an independent set.

\begin{rrule}\label[rrule]{rule:0.1}
Let $(\AT, \BT, C_1, \ldots, C_r)$ be a normalized partition of $V(G)$. If there is a vertex $v \in V(G) \setminus (\AT \cup \BT)$ that is adjacent to~$\BT$, then set $\AT= \AT \cup \{v\}$.
\end{rrule}

The following rule is correct, since $\AT \subseteq A$ and the cluster graph~$G[A]$ is~$P_3$-free for every monopolar partition $(A, B)$ of $G$:

\begin{rrule}\label[rrule]{rule:0.5}
Let $(\AT, \BT, C_1, \ldots, C_r)$ be a normalized partition of $V(G)$. If there is a vertex $v \in V(G) \setminus (\AT \cup \BT)$ such that $G[\AT \cup \{v\}]$ contains an induced $P_3$,
then set $\BT= \BT \cup \{v\}$.
\end{rrule}

After the exhaustive application of the above rules, we have that $G[\AT]$ is a cluster graph, $G[\BT]$ fulfills~$\Pi$, and any vertex in~$V(G)\setminus (\AT\cup \BT)$ can be moved to either side of the partition without creating a forbidden induced subgraph.

\noindent\looseness=-1 The next two reduction rules restrict the number and type of edges incident to large cliques.

\begin{rrule}\label[rrule]{rule:3}
Let $(\AT, \BT, C_1, \ldots, C_r)$ be a normalized partition of $V(G)$. If there exists a vertex $v \in V(G) \setminus (\AT \cup \BT)$ and a large clique $C_i$ such that $ 1 < |N(v) \cap C_i| \leq |C_i|-1$, then set $\AT= \AT \cup (N(v) \cap C_i)$.
\end{rrule}
\begin{proof}[Correctness Proof]
Since $1 < |N(v) \cap C_i| \leq |C_i|-1$, $v$ has at least two neighbors $u, w \in C_i$ and at least one nonneighbor $x \in C_i$.  If a vertex $z \in N(v) \cap C_i$ is in $B$, for any valid monopolar partition $(A, B)$ of $V(G)$,  then since $B$ is an independent set, it follows that $C_i - \{z\} \subseteq A$. In particular, $v$ is in $A$, at least one of $u, w$, say $u$, is in $A$, and $x$ is in $A$. But this implies that $(v, u, x)$ forms an induced $P_3$ in $A$, contradicting that $A$ is a cluster graph.
\end{proof}

\begin{rrule}\label[rrule]{rule:4}
Let $(\AT, \BT, C_1, \ldots, C_r)$ be a normalized partition of $V(G)$, and let $C_i, C_j$, $i < j$, be two cliques such that $C_i$ is a large clique and $C_j$ is either a large clique or an edge clique. If there are at least two edges between $C_i$ and $C_j$, then one of the following reductions, considered in the listed order, is applicable:
\begin{compactenum}[{Case} (1)]
\item There are two edges $uu'$ and $vv'$, where $u, v \in C_i$ and $u', v' \in C_j$, such that $u \neq v$ and $u' \neq v'$. Let $w \in C_i$ be such that $w \notin \{u, v\}$ (note that $w$ exists because $|C_i| \geq 3$). Set $\AT= \AT \cup \{w\}$.

\item $N(C_j) \cap C_i =\{v\}$. Set $\BT= \BT \cup \{v\}$.
\end{compactenum}

\end{rrule}
\begin{proof}[Correctness Proof]
We first prove that either case (1) or case (2) applies. Suppose that case (1) does not apply, and we show that case (2) does.

By maximality of $C_i$ in $\bigcup_{j \ge i} C_j$ (property (ii) in the definition of a nice clique decomposition), no vertex in $C_j$ can be adjacent to all vertices in $C_i$. It follows from this fact and from the inapplicability of \cref{rule:3} that each vertex in $C_j$ has at most one neighbor in $C_i$.  Since case (1) does not apply, the vertices in $C_j$ that have a neighbor in $C_i$ must all have the same neighbor, which proves that case (2) applies.

Now suppose that case (1) applies, and we will show the correctness of the reduction rule in this case. Let $(A, B)$ be any valid monopolar partition of $(G, k)$. Since at most one of $u', v'$ can be in $B$, at least one of $u', v'$, say $u'$, is in $A$. Suppose, to get a contradiction, that $w \in B$. Then both $u$ and $v$ must be in $A$. By maximality of $C_i$ in $\bigcup_{j \ge i} C_j$, $u'$ cannot be adjacent to all vertices in $C_i$. Since~\cref{rule:3} is not applicable, $u$ must be the only neighbor of $u'$ in $C_i$. But then $(v, u, u')$ is an induced $P_3$ in $A$, contradicting that $A$ is a cluster graph.

Suppose that case (2) applies, and suppose to get a contradiction that $v \in A$ in some valid monopolar partition $(A, B)$ of $(G, k)$. Since there are at least two edges between $C_i$ and $C_j$, $v$ has at least two neighbors $u', v' \in C_j$. Again, observe that at least one of $u', v'$, say $v'$, must be in $A$. Since $|C_i| \geq 3$, at least one vertex in $C_i$, say $w$, must be in $A$. Since $v$ is the only neighbor of $v'$ in $C_i$ by the premise of case (2), it follows that $(w, v, v')$ is an induced $P_3$ in $A$, contradicting that $A$ is a cluster graph.
\end{proof}
After applying Reduction Rules~\ref{rule:3} and~\ref{rule:4}, it holds that no two large cliques and no large clique and an edge clique can merge into a single cluster. Hence, any large clique will create a new cluster in~$A$, and edge cliques cannot join any large clique in~$A$. Based on that, we can now upper bound the number of large cliques and edge cliques in a yes-instance of the problem:

\begin{rrule}
\label[rrule]{rule:5}
Let $(G, k)$ be an instance of \monopolar, and suppose that $(\AT, \BT, C_1, \ldots, C_r)$ is a normalized partition of $V(G)$. If the number of large cliques among $C_1, \ldots, C_r$ is more than $k$, or the number of large cliques plus the number of edge cliques is more than $2k$, then reject the instance $(G, k)$.
\end{rrule}
\begin{proof}[Correctness Proof]
Let $(A, B)$ be any monopolar partition of $V(G)$. Since a large clique $C$ has size at least 3, at least $|C| -1 \geq 2$ vertices from $C$ must belong to the same cluster in $A$. By \cref{rule:4}, the number of edges between any large clique and any other large or edge clique is at most 1. It follows from the aforementioned statements that two vertices from two different large cliques, or from a large clique and an edge clique, must belong to different clusters in $A$. Consequently, if the number of large cliques in $(C_1, \ldots, C_r)$ is more than $k$, then for any monopolar partition $(A, B)$ of $G$, the number of clusters in $A$ is more than $k$, and hence $(G, k)$ is a no-instance of \monopolar.

Suppose now that the number of large cliques in $(C_1, \ldots, C_r)$ is $\ell \leq k$, and that the number of edge cliques is $\ell'$. From above, for any monopolar partition $(A, B)$, no vertex from an edge clique can belong to a cluster in $A$ containing a vertex from a large clique. Let $C_i$ and $C_j$, $i < j$, be any two edge cliques.  Since $B$ is an independent set, at least one vertex from each edge clique must be in $A$. By property (iii) of a nice decomposition, no cluster in $A$ can contain three vertices from three different edge cliques in $(C_1, \ldots, C_r)$. It follows from the aforementioned two statements that the number of clusters in $A$ that contain vertices from edge cliques in $(C_1, \ldots, C_r)$ is at least $\ell'/2$. Now the set of clusters in $A$ containing vertices from large cliques is disjoint from that containing vertices from edge cliques, and hence the number of clusters in $A$ is at least $\ell + \ell'/2$. If the number of large cliques plus the number of edge cliques is more than $2k$, then $\ell+\ell' > 2k$, and hence $\ell + \ell'/2 \geq \ell/2 + \ell'/2 > k$. This means that for any monopolar partition $(A, B)$ of $G$, the number of clusters in $A$ is more than $k$. It follows that $(G, k)$ is a no-instance of \monopolar.
\end{proof}

Next, we sanitize the connections between already determined clusters in~$\AT$ and the remaining cliques in the normalized partition.

\begin{rrule}\label[rrule]{rule:6}
Let $(\AT, \BT, C_1, \ldots, C_r)$ be a normalized partition of $V(G)$, let $C$ be a cluster in $\AT$, and let $C_i$, $i \in [r]$, be a large clique. If $v \in C_i$ is such that: (1) $v$ is the only vertex in $C_i$ that is adjacent to $C$, or (2) $v$ is the only vertex in $C_i$ that is not adjacent to $C$, then set $\BT= \BT \cup \{v\}$.
\end{rrule}
\begin{proof}[Correctness Proof]
To prove the correctness of the reduction rule in case (1) holds, suppose that $v$ is the only vertex in $C_i$ that is adjacent to $C$. Let $(A, B)$ be any monopolar partition of $G$. Let $w$ be any vertex in $C$ that is adjacent to $v$.  Since $C_i$ is a large clique, there exists a vertex $u \in C_i$, with $u \neq v$, such that $u \in A$. Since $v$ is the only vertex in $C_i$ that is adjacent to $C$, $u$ is not adjacent to $w$. Now if $v$ were in $A$, then since $C \subseteq A$ and hence $w \in A$, $(u, v, w)$ would be an induced $P_3$ in $A$, contradicting that $A$ is a cluster graph. It follows that $v \in B$ for any monopolar partition $(A, B)$ of $G$.

\looseness=-1 To prove the correctness of the reduction rule in case (2) holds, suppose that $v$ is the only vertex in $C_i$ that is not adjacent to $C$. Let $(A, B)$ be any monopolar partition of $G$.   Since $C_i$ is a large clique, there exists a vertex $u \in C_i$, with $u \neq v$, such that $u \in A$. Since $v$ is the only vertex in $C_i$ that is not adjacent to $C$, $u$ is adjacent to some vertex $w \in C$. Now if $v$ were in $A$, then since $C \subseteq A$ and hence $w \in A$, $(v, u, w)$ would be an induced $P_3$ in $A$, contradicting that $A$ is a cluster graph. It follows that $v \in B$ for any monopolar partition $(A, B)$ of $G$.
\end{proof}
Suppose that none of the above reduction rules applies~to the instance $(G, k)$. Then, the following lemma holds:
\begin{lemma}\label[lemma]{lem:mergeable}
Let $(\AT, \BT, C_1, \ldots, C_r)$ be a normalized partition of $V(G)$, let $C$ be a cluster in $\AT$, and let $C_i$, $i \in [r]$, be a large clique such that $C_i$~is adjacent to $C$. If~$G$ admits a monopolar partition, then $C \cup C_i$~induces a clique in~$G$.
\end{lemma}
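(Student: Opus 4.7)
My plan is to combine the non-applicability of \cref{rule:0.5} (which already restricts how a single vertex of $C_i$ attaches to the cluster $C$) with the non-applicability of \cref{rule:6} (which rules out the ``almost all'' and ``almost none'' cases), and then force a contradiction using the structure of a hypothetical monopolar partition when $C_i$ is only partially adjacent to $C$.

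First, I observe that since $C$ is a cluster in the cluster graph~$\AT$, $G[C]$ is a clique. For any vertex $v\in C_i\subseteq V(G)\setminus(\AT\cup\BT)$, the non-applicability of \cref{rule:0.5} says that $G[\AT\cup\{v\}]$ is $P_3$-free. Hence, if $v$ has a neighbor $w\in C$ but a nonneighbor $u\in C$, then $(v,w,u)$ would be an induced $P_3$ in $G[\AT\cup\{v\}]$; so each vertex of $C_i$ is either adjacent to \emph{every} vertex of $C$ or to \emph{no} vertex of $C$. Let $X\subseteq C_i$ be the set of vertices of $C_i$ adjacent to all of $C$, and $Y=C_i\setminus X$ the vertices with no neighbor in $C$. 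The assumption that $C_i$ is adjacent to $C$ gives $|X|\ge 1$. The non-applicability of \cref{rule:6} excludes $|X|=1$ (case~(1), with witness $v$ the unique vertex of $C_i$ adjacent to $C$) and $|X|=|C_i|-1$ (case~(2), with witness the unique vertex of $C_i$ not adjacent to $C$). Hence either $|X|=|C_i|$, in which case $C\cup C_i$ is a clique and we are done, or $2\le|X|\le|C_i|-2$, so in particular $|Y|\ge 2$.

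In the second case, I derive a contradiction from any monopolar partition $(A,B)$ of $G$. Since $C_i$ is a clique and $B$ is independent, at most one vertex of $C_i$ lies in $B$; as $|X|\ge 2$ and $|Y|\ge 2$, we can pick $v\in X\cap A$ and $u\in Y\cap A$. Choosing any $w\in C$, we have $w\in A$ (because $C\subseteq\AT\subseteq A$), $uv\in E(G)$ (both in the clique $C_i$), $vw\in E(G)$ (because $v\in X$), and $uw\notin E(G)$ (because $u\in Y$). Thus $\{u,v,w\}$ induces a $P_3$ in $G[A]$, contradicting that $G[A]$ is a cluster graph.

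The only mild subtlety is bookkeeping the boundary counts correctly---\cref{rule:6} eliminates exactly $|X|\in\{1,|C_i|-1\}$ and not the intermediate values---so that in the non-clique case we simultaneously retain at least two vertices in each of $X$ and $Y$ even after one vertex of $C_i$ is possibly placed in $B$. I do not expect any real obstacle beyond this. No other reduction rule or property of the decomposition is needed for this lemma.
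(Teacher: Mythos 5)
Your proof is correct and follows essentially the same route as the paper's: the inapplicability of \cref{rule:0.5} yields the all-or-nothing adjacency of each vertex of $C_i$ to $C$, the inapplicability of \cref{rule:6} guarantees at least two fully adjacent and two fully nonadjacent vertices in the non-clique case, and the independence of $B$ then forces an induced $P_3$ in $A$. The only difference is presentational (you argue with the sets $X$ and $Y$ and their cardinalities, where the paper names four witness vertices), and your bookkeeping of the boundary cases is accurate.
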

\begin{proof}
Suppose, to get a contradiction, that $C \cup C_i$ does not induce a clique, and hence, there exists a vertex $x_i \in C_i$ such that $x_i$ is not adjacent to some vertex in $C$. Since $C$ and $C_i$ are adjacent, there exist vertices $y_i \in C_i$ and $v \in C$ such that $v$ and $y_i$ are adjacent.  Since~\cref{rule:0.5} is not applicable, $x_i$ is not adjacent to any vertex in $C$, and $y_i$ is adjacent to every vertex in $C$. Since cases (1) and (2) of~\cref{rule:6} are not applicable, there exist vertices $y'_i \neq y_i$ and $x'_i \neq x_i$ in $C_i$ such that $y'_i$ is adjacent to $C$ and $x'_i$ is not adjacent to $C$. Since~\cref{rule:0.5} is not applicable, $y'_i$ is adjacent to every vertex in $C$. Now for any monopolar partition $(A, B)$ of $G$, since $B$ is an independent set, at least one vertex $w \in \{y_i, y'_i\}$ is in $A$, and at least one vertex of $u \in \{x_i, x'_i\}$ is in $A$. But then $(v, w, u)$ is an induced $P_3$ in $A$, contradicting that $A$ is a cluster graph.
\end{proof}

The above structure allows us to simplify the instance by shrinking
already determined clusters in~$\AT$.

\begin{rrule}\label[rrule]{rule:8}
Let $(G, k)$ be an instance of \monopolar, and let the tuple $(\AT, \BT, C_1, \ldots, C_r)$ be a normalized partition of $V(G)$. If either (1) $\BT$ contains more than $k+1$ vertices or (2) there exists a cluster in $\AT$ that is not a singleton, then reduce the instance $(G, k)$ to an instance~$(G',k)$ with~$G'$ constructed as follows. Let $V(G') = V_1 \cup V_2 \cup V_3$, where $V_1= \{u_{C} \mid C \ \mbox{is a cluster in}\  \AT\}$, $V_2=\{v_1, \ldots, v_{k+1}\}$, and $V_3= C_1 \cup \cdots \cup C_r$; and $E(G') = \{vu_C \mid v \in V_2 \wedge u_C \in V_1 \} \cup \{vu_C \mid v \in V_3  \wedge u_C \in V_1 \wedge v \ \mbox {is adjacent to C}\}$. That is, $G'$ is constructed from $G$ by introducing $k+1$ new vertices, replacing each cluster $C$ in $\AT$ (if any) by a single vertex $u_C$ whose neighborhood is the neighborhood of $C$ in $C_1, \ldots, C_r$ plus the $k+1$ new vertices, and keeping $C_1, \ldots, C_r$ the~same.
\end{rrule}
\begin{proof}[Correctness Proof]
To prove the correctness of the reduction rule, we need to show that $(G, k)$ is a yes-instance of \monopolar~if and only if $(G', k)$ is. First, observe that by \cref{rule:0.1}, no vertex in $C_1 \cup \cdots \cup C_r$ is adjacent to any vertex in $\BT$.

If $\AT = \emptyset$, then the reduction rule consists of removing the vertices in $\BT$ from~$G$, and replacing them with $k+1$ isolated vertices $v_1, \ldots, v_{k+1}$. Since $\AT = \emptyset$ and no vertex in $C_1 \cup \cdots \cup C_r$ is adjacent to any vertex in $\BT$, the vertices in $\BT$ are isolated vertices in $G$. Therefore, the reduction rule in this case essentially consists of removing some isolated vertices from~$\BT$ and~$G$, and thus is obviously correct.

Assume now that $\AT \neq \emptyset$. It is easy to see that if $(G, k)$ is a yes-instance of \monopolar~then so is $(G', k)$. This can be seen as follows. If $(A, B)$ is a valid monopolar partition of $(G, k)$, then the above reduction rules guarantee that $\AT \subseteq A$, and hence each cluster of $\AT$ must be a subset of a single cluster in $A$. If we (i) remove the vertices in $\BT$ and add $k+1$ vertices to $B$ that induce an independent set, and (ii) replace each cluster $C$ in $\AT$ by a single vertex $u_C$ connected to the $k+1$ new vertices in $B$ and to the vertices of the cluster that $C$ belongs to $A$, we still get a valid monopolar partition of $G$.

To prove the converse, suppose that $(G', k)$ is a yes-instance of \monopolar, and let $(A', B')$ be a valid monopolar partition of $V(G')$. Since $(A', B')$ is a valid monopolar partition of $V(G')$, and every vertex $u_C$, $C$ is a cluster in $\AT$, is adjacent to the $k+1$ independent vertices $v_1, \ldots, v_{k+1}$, we have $u_C \in A'$ for every cluster $C$ in $\AT$, and $\{v_1, \ldots, v_{k+1}\} \subseteq B'$. Let $B = B' \setminus \{v_1, \ldots, v_{k+1}\} \cup \BT$. Since (1) $B'$ induces an independent set, (2) every vertex $u_C$, $C$ is a cluster in $\AT$, is in $A'$, and (3) no vertex in $C_1 \cup \cdots \cup C_r$ is adjacent to any vertex in $\BT$, it follows that $B$ is an independent set. Let $A$ be the set of vertices obtained from $A'$ by replacing each vertex $u_C$ by the vertices in the cluster $C$ in $\AT$. We claim that $A$ is a cluster graph with at most $k$ clusters. Suppose that a vertex $u_C$ is replaced in $A'$ by the vertices in cluster $C$ in $\AT$; assume that $u_C$ belongs to cluster $C'$ in $A'$. Each vertex in $C'$, other than $u_C$, must be a vertex in $V_3= C_1 \cup \cdots \cup C_r$. Let $v' \in C' \setminus \{u_C\}$ be chosen arbitrarily. Since $v'$ and $u_C$ belong to the same cluster $C'$, by definition of $G'$, $v'$ must be adjacent to $C$ in $G$. By \cref{rule:0.5}, $v'$ must be adjacent to all vertices in $C$. Since $v'$ was an arbitrarily chosen vertex in $C' \setminus \{u_C\}$, $(C' \setminus \{u_C\}) \cup C$ induces a cluster in $A$. It remains to show that no two clusters in $A$ are adjacent. Suppose, to get a contradiction, that this is not the case. Since $A'$ induces a cluster graph, there must exist two vertices $u_{C_1}$ and $u_{C_2}$ in $A'$, that belong to clusters $C'_1$ and $C'_2$ in $A'$, respectively, such that cluster $C_1 \cup (C'_1\setminus\{u_{C_1}\})$ is adjacent to cluster $C_2 \cup (C'_2 \setminus \{u_{C_2}\})$. Since $A'$ is a cluster graph, this implies that either: (1) $C_1$ is adjacent to $C_2$, (2) $C_1$ is adjacent to $C'_2 \setminus \{u_{C_2}\}$, or (3) $C_2$ is adjacent to $C'_1\setminus\{u_{C_1}\}$. This leads to a contradiction in each of the three cases above: (1) would contradict that $\AT$ is a cluster graph (\cref{rule:0}), (2) would imply (by the construction of $G'$) that $u_{C_1}$, and hence $C'_1$, is adjacent to $C'_2$ in $A'$, and (3) would imply (by the construction of $G'$) that $u_{C_2}$, and hence $C'_2$, is adjacent to $C'_1$ in $A'$. It follows from the above that the constructed partition $(A, B)$ is a valid monopolar partition for $V(G)$. Finally, the number of clusters in $A$ is the same as that in $A'$, which is at most $k$.
\end{proof}
If \cref{rule:8} is applied, then after its application, we set $\AT$ to $V_1$  and $\BT$ to $\{v_1, \ldots, v_{k+1}\}$. Observe that this implies that~$|\AT|\le k$ and~$|\BT|\le k+1$.
Note that in any valid monopolar partition $(A, B)$ of the graph resulting from the application of \cref{rule:8}, each vertex in $V_1$ must be in $A$, being adjacent to the $k+1$ independent set vertices $v_1, \ldots, v_{k+1}$, whereas the vertices $v_1, \ldots, v_{k+1}$ can be safely assumed to be in $B$ since their only neighbors are in $V_1 \subseteq A$.
\subsection{Modelling the Pushing Process by a Bipartite Graph}
We have now arrived at a stage where we have bounded the number of large and edge cliques,
and the size of~$\AT$ and~$\BT$. It remains to bound the size of
the large cliques and the number of vertex cliques to obtain a polynomial-size problem
kernel. The challenge here is that we need to identify vertices such that putting them
in~$A$ or~$B$ will eventually, after a series of pushes, lead either to the creation of
too many clusters in~$A$, or to the addition of two adjacent vertices in~$B$. To describe the structure of the
avalanche of pushes to~$A$ or~$B$, we introduce the following auxiliary graph.

\begin{definition}\label[definition]{def:auxiliary}
 For a normalized partition $(\AT, \BT, C_1, \ldots, C_r)$ of $V(G)$, we define the auxiliary bipartite graph $\Lambda$ as follows. The vertex set of $\Lambda$ is $\VL=(V_C, V_I)$, where $V_C$ is the set of all vertices in the large cliques in $C_1, \ldots, C_r$, and $V_I$ is the set of all vertices in the vertex cliques in $C_1, \ldots, C_r$. The edge set of $\Lambda$ is $\EL =\{uv \in E(G) \mid u \in V_C \ \mbox{and} \ v \in V_I\}$; that is, $\EL$ consists of precisely the edges in $E(G)$ that are between $V_C$ and $V_I$.
\end{definition}

Recall that $V_I$ is an independent set in $G$ by Fact~\ref{fact:is}. For a vertex $v \in \VL$, we write  $\N(v):=N(v) \cap \VL$ for the neighbors of $v$ in $\Lambda$. To bound the maximum degree in~$\Lambda$, we apply the following reduction rule. The correctness proof  is straightforward, after recalling that the vertex cliques induce an independent set in $G$ (\cref{fact:is}), and observing that no two vertices of an independent set can belong to the same cluster in a cluster graph:

\begin{rrule}\label[rrule]{rule:1}
Let $(\AT, \BT, C_1, \ldots, C_r)$ be a normalized partition of $V(G)$. If there is a vertex $v \in V(G) \setminus (\AT \cup \BT)$ with more than $k$ neighbors that are vertex cliques, then set $\AT= \AT \cup \{v\}$.
\end{rrule}

After performing all rules up to this point, we have the following lemma:

\begin{lemma}\label[lemma]{lem:lambda}
Let $(\AT, \BT, C_1, \ldots, C_r)$ be a normalized partition of $V(G)$ and consider the auxiliary graph $\Lambda = (\VL=(V_C, V_I), \EL)$. Then the maximum degree of $\Lambda$, $\Delta(\Lambda)$, is at most~$k$.
\end{lemma}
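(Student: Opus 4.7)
The plan is to bound the $\Lambda$-degree separately for vertices in $V_C$ and vertices in $V_I$, using the inapplicability of \cref{rule:1} in the first case, and a combination of \cref{rule:3}, the maximality property of the nice clique decomposition, and \cref{rule:5} in the second case.

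First I would handle a vertex $v\in V_C$. Here $\N(v)\subseteq V_I$ is exactly the set of vertex cliques in $C_1,\ldots,C_r$ that are adjacent to~$v$. Since $v\notin \AT\cup\BT$ and \cref{rule:1} is no longer applicable, $v$ can have at most~$k$ neighbors that are vertex cliques. Hence $|\N(v)|\le k$.

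Next I would handle a vertex $u\in V_I$, so $\{u\}$ is a vertex clique $C_j$ for some $j$, and $\N(u)\subseteq V_C$. The key claim is that, for every large clique~$C_i$, the number of edges from~$u$ to~$C_i$ is at most one. Since vertex cliques come strictly after large cliques in the ordering, we have $i<j\le r$, so the maximality property~(ii) of the nice clique decomposition applies: $C_i\cup\{u\}$ is not a clique, which rules out $|N(u)\cap C_i|=|C_i|$. Since \cref{rule:3} is no longer applicable, we cannot have $1<|N(u)\cap C_i|\le |C_i|-1$ either. Combining these gives $|N(u)\cap C_i|\le 1$, as claimed. Finally, \cref{rule:5} being inapplicable bounds the total number of large cliques by~$k$, so $|\N(u)|\le k$.

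Putting these two cases together yields $\Delta(\Lambda)\le k$. The only subtlety I foresee is verifying that the maximality condition (stated only for indices $i\in[r-1]$) really covers every large clique $C_i$ to which a vertex clique~$u$ is compared; this is immediate from the ordering in property~(i), since any vertex clique sits strictly to the right of every large clique, so for any such pair one has $i<r$.
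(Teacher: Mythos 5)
Your proof is correct and follows essentially the same route as the paper: bounding the degree of $V_C$-vertices via the inapplicability of \cref{rule:1}, and bounding the degree of $V_I$-vertices by combining property~(ii) of the nice clique decomposition with the inapplicability of \cref{rule:3} to show each vertex clique has at most one neighbor per large clique, then applying \cref{rule:5}. The extra remark about the index range in the maximality condition is a fine but unnecessary refinement.
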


\begin{proof}
For every vertex $v \in V_C$, we have $|\N(v)| \leq k$ because~\cref{rule:1} is inapplicable.
By property (ii) of a nice decomposition and the inapplicability of~\cref{rule:3}, every vertex clique that is adjacent to a large clique $C$ is adjacent to exactly one vertex in $C$. Since by~\cref{rule:5} the number of large cliques is at most $k$, every vertex in $V_I$, which is a vertex clique by definition of $V_I$, has at most $k$ neighbors in $V_C$. Therefore, for every vertex $v \in V_I$, we have $|\N(v)| \leq k$.
\end{proof}

Using the following lemma, we now observe that the auxiliary
graph~$\Lambda$ captures some of the avalanches emanating from
vertices in large or vertex cliques. Namely, pushing a vertex~$v$ in a
large clique to $B$ (or in a vertex clique to $A$) will also require
pushing each vertex reachable (in the auxiliary graph) from~$v$ from
$A$ to $B$ or vice versa.

For two vertices $u, v \in \VL$, write $\D(u, v)$ for the length of a shortest path between $u$ and $v$ in $\Lambda$. For a vertex $v \in \VL$ and $i \in \{0, \ldots, n\}$, define $N^i(v) =\{u \in \VL \mid \D(u,v)=i\}$.
Write $\bar{0}_n$ for the set of even integers in $\{0, \ldots, n\}$, and $\bar{1}_n$ for the set of odd integers in $\{0, \ldots, n\}$.

\begin{lemma}\label[lemma]{lem:alternating}
Let $(\AT, \BT, C_1, \ldots, C_r)$ be a normalized partition of $V(G)$, let $\Lambda = (\VL, \EL)$ be the associated auxiliary graph where $\VL = (V_C, V_I)$, and let $(A, B)$ be any valid monopolar partition of~$G$.
\begin{compactenum}[(i)]
\item For any vertex $v \in V_C$: If $v \in B$ then $\N(v) \subseteq A$.
\item For any vertex $v \in V_I$: If $v \in A$ then $\N(v) \subseteq B$.
\item For any vertex $v \in V_C$: If $v \in B$ then $\N^i(v) \subseteq B$ for $i \in \bar{0}_n$, and $\N^i(v) \subseteq A$ for $i \in \bar{1}_n$.
\item For any vertex $v \in V_I$: If $v \in A$ then $\N^i(v) \subseteq A$ for $i \in \bar{0}_n$, and $\N^i(v) \subseteq B$ for $i \in \bar{1}_n$.
\end{compactenum}
\end{lemma}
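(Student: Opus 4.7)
The plan is to prove parts (i) and (ii) as the base cases and then deduce (iii) and (iv) by an easy induction on $i$, exploiting the bipartiteness of $\Lambda$ between the parts $V_C$ and $V_I$ (so for $v \in V_C$ we have $\N^i(v) \subseteq V_C$ when $i$ is even and $\N^i(v)\subseteq V_I$ when $i$ is odd, and symmetrically for $v \in V_I$).

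Part (i) is immediate: since $B$ is an independent set in any valid monopolar partition (\cref{rule:0}), every $G$-neighbor of a vertex in $B$ lies in $A$, and $\N(v) \subseteq N(v)$. Part (ii) is the only real content in the base step. Given $v \in V_I$ with $v \in A$, I would argue for each $u \in \N(v) \subseteq V_C$ that $u \in B$ as follows. Let $C_i$ be the large clique containing $u$. Because \cref{rule:3} is no longer applicable and $v$ is adjacent to $u \in C_i$, we must have $|N(v) \cap C_i| \in \{1,|C_i|\}$. The value $|C_i|$ is excluded by property (ii) of a nice decomposition: since $v$ is a vertex clique it occurs after $C_i$ in the tuple, so $C_i \cup \{v\}$ being a clique would contradict the maximality of $C_i$ in $\bigcup_{j \ge i} C_j$. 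Hence $u$ is the only neighbor of $v$ in $C_i$. If $u \in A$, then using $|C_i|\ge 3$ and that $B$ is independent, there exists a second vertex $w \in C_i \cap A$, and $(v,u,w)$ forms an induced $P_3$ inside $G[A]$, contradicting that $G[A]$ is a cluster graph. So $u \in B$, proving (ii).

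For (iii) and (iv) I would induct on $i$. The base cases $i=0$ (trivial) and $i=1$ are exactly (i) and (ii). For the inductive step with $i \ge 2$, take any $u \in \N^i(v)$; by definition $u$ has a neighbor $w$ in $\Lambda$ with $w \in \N^{i-1}(v)$. By the bipartiteness of $\Lambda$, $w$ lies in the opposite part of $\VL$ from $u$. Applying the induction hypothesis places $w$ on the correct side of the $(A,B)$-partition, and then a single application of (i) (if $w \in V_C \cap B$) or (ii) (if $w \in V_I \cap A$) forces $u$ onto the opposite side, which is precisely the side predicted by the alternating pattern in the statement.

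The only delicate step is (ii); everything else is either a one-line argument about independence of $B$ or a routine induction over the bipartite auxiliary graph. The key ingredients for (ii) are precisely the properties that have been enforced by the preceding reduction rules and the structure of a nice clique decomposition, namely that large cliques are maximal with respect to later cliques and that each outside vertex either hits all or at most one vertex of any large clique.
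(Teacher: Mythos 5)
Your proposal is correct and follows essentially the same route as the paper: part (i) from the independence of $B$, part (ii) via the maximality property of large cliques together with the inapplicability of \cref{rule:3} to produce an induced $P_3$ in $G[A]$, and parts (iii)--(iv) by alternating applications of (i) and (ii) (which you phrase as an explicit induction over the bipartite structure of $\Lambda$). No gaps.
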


\begin{proof}
  (i): This trivially follows because $B$ is an independent set.

(ii): Suppose that $v \in V_I$ is in $A$, and let $u \in \N(v)$. Then $u \in V_C$ because $\Lambda$ is bipartite, and hence, by definition, $u$ belongs to a large clique $C_i$ for some $i \in [r]$. Suppose, to get a contradiction, that $u \in A$.
            Since $C_i$ is a large clique, and hence $|C_i| \geq 3$, there exists a vertex $w \neq u$ in $C_i$ such that $w \in A$.  By property (ii) of the nice decomposition $(C_1, \ldots, C_r)$ and the inapplicability of~\cref{rule:3}, $v$ is not a neighbor of $w$ in $G$. But this implies that $(v, u, w)$ is an induced $P_3$ in $A$, contradicting that $A$ is a cluster graph. It follows that $\N(v) \subseteq B$.

            (iii): This follows by repeated alternating applications of (i) and (ii) above.

(iv): This follows by repeated alternating applications of (ii) and (i) above.
\end{proof}

The above lemma about the avalanches captured by the auxiliary graph
allows us to identify vertices whose push to one side of the partition
would lead to avalanches that, in turn, would lead to too many clusters in~$A$ or to
two adjacent vertices in~$B$. We can hence fix them in the
corresponding part.

\begin{rrule}\label[rrule]{rule:7}
Let $(\AT, \BT, C_1, \ldots, C_r)$ be a normalized partition of $V(G)$, and let $\Lambda = (\VL=(V_C, V_I), \EL)$ be the associated auxiliary graph.
\begin{compactenum}[(i)]
\item For any vertex $v \in V_C$: If either $\bigcup_{i \in \bar{0}_n} \N^i(v)$ contains two adjacent (in $G$) vertices or we have $|\bigcup_{i \in \bar{1}_n} \N^i(v)| > k$, then set $\AT = \AT \cup \{v\}$.
\item For any vertex $v \in V_I$: If either $|\bigcup_{i \in \bar{0}_n} \N^i(v)| > k$ or $\bigcup_{i \in \bar{1}_n} \N^i(v)$ contains two adjacent (in $G$) vertices, then set $\BT = \BT \cup \{v\}$.
\end{compactenum}
\end{rrule}

\begin{proof}[Correctness Proof]
  (i) Let $v \in V_C$, and suppose that either $\bigcup_{i \in \bar{0}_n} \N^i(v)$ contains two adjacent vertices or $|\bigcup_{i \in \bar{1}_n} \N^i(v)| > k$. If $v \in B$ for any valid partition $(A, B)$ of $G$, then by part (iii) of \cref{lem:alternating}, it would follow that $\bigcup_{i \in \bar{0}_n} \N^i(v) \subseteq B$ and $\bigcup_{i \in \bar{1}_n} \N^i(v) \subseteq A$. In either case this contradicts that $(A, B)$ is valid partition of $G$: If $\bigcup_{i \in \bar{0}_n} \N^i(v)$ contains two adjacent vertices, then $B$ is not an independent set, and if $|\bigcup_{i \in \bar{1}_n} \N^i(v)| > k$ then $A$ contains more than $k$ clusters since $\bigcup_{i \in \bar{1}_n} \N^i(v)$ induces an independent set in $G$.

  \ifshort% 
  (ii) The proof follows along the same lines as the proof of~(i) ($\spadesuit$).
  \else% 
  (ii) Let $v \in V_I$, and suppose that either $|\bigcup_{i \in \bar{0}_n} \N^i(v)| > k$ or $\bigcup_{i \in \bar{1}_n} \N^i(v)$ contains two adjacent vertices. If $v \in A$ for any valid partition $(A, B)$ of $G$, then by part (iv) of \cref{lem:alternating}, it would follow that $\bigcup_{i \in \bar{0}_n} \N^i(v) \subseteq A$ and $\bigcup_{i \in \bar{1}_n} \N^i(v) \subseteq B$. In either case this contradict that $(A, B)$ is valid partition of $G$: If $|\bigcup_{i \in \bar{0}_n} \N^i(v)| > k$ then $A$ contains at more than $k$ clusters since the vertices in $\bigcup_{i \in \bar{0}_n} \N^i(v)$ induce an independent set in $G$, and if $\bigcup_{i \in \bar{1}_n} \N^i(v)$ contains two adjacent vertices, then $B$ is not an independent set.
  \fi
\end{proof}
We are now ready to define a set of representative vertices which already capture the remaining structure of avalanches in the instance.
\begin{definition}
\label[definition]{def:important}
Let $(\AT, \BT, C_1, \ldots, C_r)$ be a normalized partition of $V(G)$, and let $\Lambda = (\VL=(V_C, V_I), \EL)$ be the associated auxiliary graph. From each large clique $C_i$, $i \in [r]$, fix three vertices $u_i, v_i, w_i$, and define $V_{\fixed} = \{u_i, v_i, w_i \mid C_i \ \mbox{is a large clique} \}$ to be the set of all fixed vertices. Define \\ $V_{\edge} =\{C_i \mid C_i \ \mbox{is an edge clique}\}$ to be the set of vertices of the edge cliques, define $N_{\edge} = N(V_{\edge}) \cap \VL$ to be the neighbors of $V_{\edge}$ in $G$ that are also in $\VL$, and define $N_{\edge}^{\cup} = \bigcup_{v \in N_{\edge}} \bigcup_{i \leq n}\N^i(v)$ to be the set of all vertices in $\VL$ that are reachable in $\Lambda$ from the vertices in $N_{\edge}$.
(Note that $N_{\edge} \subseteq N_{\edge}^{\cup}$.) Define $V_{\inter} =\{u, v \mid u \in C_i \wedge  v \in C_j \wedge  i \neq j \wedge uv \in E(G)  \wedge (C_i, C_j \ \mbox{are large cliques})\}$ to be the set of endpoints of edges between large cliques, and define $N_{\inter}^{\cup} = \bigcup_{v \in V_{\inter}} \bigcup_{i \leq n}\N^i(v)$ to be the set of all vertices in $\VL$ that are reachable in $\Lambda$ from the vertices in $V_{\inter}$. (Note that $V_{\inter} \subseteq N_{\inter}^{\cup}$.) Finally, let $V_{\important} = \AT \cup \BT \cup V_{\fixed} \cup N_{\inter}^{\cup} \cup V_{\edge} \cup N_{\edge}^{\cup}$.
\end{definition}

The next reduction rule shrinks the instance to the set of representative vertices defined above.

\begin{rrule}
\label[rrule]{rule:9}
Let $(G, k)$ be an instance of \monopolar, and let the tuple $(\AT, \BT, C_1, \ldots, C_r)$ be a normalized partition of $V(G)$. Let $V_{\important}$ be as defined in \cref{def:important}. Set $G=G[V_{\important}]$.
\end{rrule}
\begin{proof}[Correctness proof]
To prove the correctness of the reduction rule, let $G'=G[V_{\important}]$ and we need to show that the two instances $(G, k)$ and $(G', k)$ of \monopolar~are equivalent. Since $G'$ is a subgraph of $G$ and the property of
having a valid monopolar partition is a hereditary graph property, it follows that if $(G, k)$ is a yes-instance of \monopolar~then so is $(G', k)$. Therefore, it suffices to show the converse, namely that if $(G', k)$ is a yes-instance of \monopolar~then so is $(G, k)$.

Suppose that $(G', k)$ is a yes-instance of \monopolar, and let $(A, B)$ be a valid monopolar partition of $(G', k)$. Let $(\AT, \BT, C_1, \ldots, C_r)$ be the normalized partition of $V(G)$ with respect to which $V_{\important}$, and hence $G'=G[V_{\important}]$, were defined, and let $V_{\fixed}, V_{\inter}, N_{\inter}^{\cup}, V_{\edge}, N_{\edge}^{\cup}$ be as in \cref{def:important}.  Let $v$ be an arbitrary vertex in $V(G)\setminus V(G')$. It suffices to show that $G[V(G') \cup \{v\}]$ has a valid monopolar partition, as we can repeatedly add vertices, one after the other, and the same proof applies.  Since the set of vertices forming the edge cliques, $V_{\edge}$, is a subset of $V(G')$, and $\AT \cup \BT \subseteq V(G')$, vertex $v$ is either a vertex of a large clique of $G$ that is not in $V_{\fixed}$, or $v$ is a vertex clique in $G$. We distinguish these two cases.

\medskip\noindent {\bf Case 1.} $v \in V(C_i) \setminus V_{\fixed}$, for some large clique $C_i$, where $i \in [r]$. Since three vertices from $C_i$ are in $V_{\fixed}$, at least two of these vertices must belong to a cluster, say $C'_i$, in part $A$ of the partition $(A, B)$. Note that since $\AT \subseteq A$, if $C_i$ has a neighbor in $\AT$, which must be a neighbor of all the vertices in $C_i$, including $v$, by \cref{lem:mergeable}, then this neighbor must be  in $C'_i$. We first claim that $C'_i \cup \{v\}$ is a clique. Observe that since $C'_i$ contains two vertices from $V_{\fixed}$, and hence from $C_i$, by~\cref{rule:4}, $C'_i$ does not contain any vertices from a large clique other than $C_i$ or from an edge clique. Moreover, by property (ii) of the nice decomposition $(C_1, \ldots, C_r)$ and~\cref{rule:3}, $C'_i$ does not contain any vertex clique. Therefore, $C'_i$ consists only of vertices in $C_i$, plus possibly a single vertex in $\AT$ that must be adjacent to all the vertices in $C_i$. Since $v \in C_i$, it follows that $C'_i \cup \{v\}$ is a clique.

Let $S$ be the set of vertex cliques in $A$, and note that $S$ is an independent set. Define the following layered structure.  The root of this structure is $v$. The first layer contains the set of vertices (possibly empty), denoted $N_1(v)$, that are the neighbors of $v$ in $S$, that is, $N_1(v) =N(v) \cap S$; and the second layer contains the set of vertices, denoted $N_2(v)$, that are the neighbors in $B$ of the vertices of $N_1(v)$, that is $N_2(v) = N(N_1(v)) \cap B$. For $i > 2$, layer $i$ contains the set of vertices $N_i(v) = N(N_{i-1}(v)) \cap S$ if $i$ is odd, and the set of vertices $N_i(v) = N(N_{i-1}(v)) \cap B$ if $i$ is even. Let $N_0(v) = \{v\}$. We claim that the partition $(A', B')$ obtained from $(A, B)$ by placing $v$ in $A$, moving the vertices in $N_i(v)$ for even $i \geq 2$ from $B$ to $A$, and moving the vertices in $N_i(v)$ for odd $i$ from $A$ to $B$, is a valid monopolar partition; that is, $(A', B')$, where $A'= (A \cup \bigcup_{i \in \bar{0}_n}N_i(v)) \setminus \bigcup_{i \in \bar{1}_n}N_i(v)$ and $B' = (B \cup \bigcup_{i \in \bar{1}_n}N_i(v)) \setminus \bigcup_{i \in \bar{0}_n}N_i(v)$ is a valid monopolar partition of $G[V(G') \cup \{v\}]$. Since $S$ is an independent set, so is $\bigcup_{i \in \bar{1}_n}N_i(v) \subseteq S$. Since the set of neighbors of $\bigcup_{i \in \bar{1}_n}N_i(v)$ is precisely $\bigcup_{i \in \bar{0}_n}N_i(v)$ and $B$ is an independent set, $B' = (B \cup \bigcup_{i \in \bar{1}_n}N_i(v)) \setminus \bigcup_{i \in \bar{0}_n}N_i(v)$ is an independent set as well. Therefore, to show that $(A', B')$ is a valid monopolar partition, it suffices to show that $A'$ is a cluster graph of at most $k$ clusters.

First, we claim that each vertex in $N_{\even} = \bigcup_{i \in \bar{0}_n}N_i(v)$ belongs to a large clique in $C_1, \ldots, C_r$. This is certainly true for the vertex $v$, which is in $C_i$, where $C_i$ is a large clique.
Now for any other vertex $u \in N_{\even}$, by construction, $u$ is the neighbor of a vertex clique in $S$. Since the set of all vertex cliques induces an independent set, $u$ itself cannot be a vertex clique, being a neighbor of a vertex clique. Since $u \in N_i(v)$, for some $i$, and hence $v$ is reachable from $u$, $u$ cannot be an endpoint of an edge clique; otherwise, $v$ would belong to $N_{\edge}^{\cup}$ and hence, would belong to $V_{\important}$. Since $\AT \subseteq A$, and no vertex in $V(G) \setminus (\AT \cup \BT)$ is adjacent to a vertex in $\BT$ by \cref{rule:0.1}, $u \notin \AT \cup \BT$. It follows from the preceding that each vertex in $N_{\even}$ belongs to a large clique in $C_1, \ldots, C_r$. From each large clique in $C_1, \ldots, C_r$, at least two fixed vertices are in $A'$; denote by $C'_j$ the cluster in $A'$ that contains the two fixed vertices from a large clique $C_j$.  As shown at the beginning of {\bf Case 1} about $C'_i$, the same holds true for any $C'_j$: $C'_j$ consists of a subset of $C_j$, plus possibly a vertex in $\AT$ that is adjacent to all vertices in $C_j$. Now add each vertex in $N_{\even}$ that belongs to a (large clique) $C_j$ in $G$ to the corresponding cluster $C'_j$ in $A'$. We claim that the resulting partition is a valid monopolar partition. Since each vertex $u$ in $N_{\even}$ was added to the cluster $C'_j$ such that $u \in C_j$ and $C'_j$ consists of a subset of $C_j$ plus possible a vertex in $\AT$ that is adjacent to all vertices in $C_j$, $C'_j \cup \{u\}$ is a clique. Moreover, since each vertex $u \in N_{\even}$ was added to an existing cluster, this addition does not increase the number of clusters in $A$, and hence, $A'$ has at most $k$ clusters. It remains to show that this addition does not create an edge between two different clusters. Suppose that this is not the case. Since $N_{\even} \subseteq B$ is an independent set, this implies that there exists a vertex $u \in N_{\even}$ that is added to a cluster $C'_j$ in $A$ such that $u$ is adjacent to some vertex $w$ in $A$. Since all the neighbors of $u$ that are vertex cliques are in
$\bigcup_{i \in \bar{1}_n}N_i(v) \subseteq B'$, $w$ is not a vertex clique. Since $v$ is reachable from $u$, and hence from $w$, and $v \notin N_{\edge}^{\cup}$, $w$ cannot be a vertex of an edge clique. By the same token, since $v$ is reachable from $w$ and $v \notin  N_{\inter}^{\cup}$, $w$ cannot be a vertex of a large clique. Finally, $w$ cannot be in $\BT$ because no vertex in $V(G) \setminus (\AT \cup \BT)$ is adjacent to a vertex in $\BT$, and $w$ cannot be in $\AT$ because $w$ would be adjacent to all vertices of $C'_j$. This completes the proof of {\bf Case 1}.

\medskip\noindent {\bf Case 2.} $v$ is a vertex clique.  The treatment of this case is very similar to {\bf Case 1}. We define $N_{0}(v) = \{v\}$, $N_i(v) = N(N_{i-1}(v)) \cap B$ if $i \geq 1$ is odd, and $N_i(v) = N(N_{i-1}(v)) \cap S$ if $i \geq 2$ is even, where $S$ is the set of vertex cliques in $A$. It can then be shown---using very similar arguments to those made in {\bf Case 1}---that the partition $(A', B')$, where $A'= (A \cup \bigcup_{i \in \bar{1}_n}N_i(v)) \setminus \bigcup_{i \in \bar{0}_n}N_i(v)$ and $B' = (B \cup \bigcup_{i \in \bar{0}_n}N_i(v)) \setminus \bigcup_{i \in \bar{1}_n}N_i(v)$ is a valid monopolar partition of $G[V(G') \cup \{v\}]$. The proof is omitted to avoid repetition.
\end{proof}
With this reduction rule we have finally bounded the size of~$V(G)\setminus \AT\cup \BT$ and may now give the polynomial kernel whose existence was promised in \cref{thm:main}.

{
\begin{theorem}
\label{thm:mainkernel}
\monopolar~has a polynomial kernel with at most $9k^4 + 9k+1$ vertices which can be computed in~$\Oh(n^2m)$ time.
\end{theorem}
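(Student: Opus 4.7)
The plan is to apply the reduction rules \cref{rule:0}--\cref{rule:9} exhaustively, in the order in which they are listed, recomputing the nice clique decomposition via \cref{lem:cluster-dec-time} whenever the partition changes, and then argue that the resulting graph $G = G[V_{\important}]$ satisfies the claimed size bound. Since each rule application either moves a vertex into $\AT \cup \BT$ (which can occur at most $n$ times) or, in the case of \cref{rule:9}, strictly shrinks $V(G)$, and each rule is checkable in polynomial time, the algorithm terminates after $\Oh(n)$ rounds. Each round is dominated by the $\Oh(nm)$ cost of recomputing the decomposition, which yields the desired $\Oh(n^2 m)$ total running time.

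For the size bound, I would decompose $V_{\important} = \AT \cup \BT \cup V_{\fixed} \cup V_{\edge} \cup N_{\inter}^{\cup} \cup N_{\edge}^{\cup}$ and bound each part. From \cref{rule:8} we have $|\AT| \le k$ and $|\BT| \le k+1$; from \cref{rule:5}, the graph has at most $k$ large cliques and at most $2k$ edge cliques, so $|V_{\fixed}| \le 3k$ and $|V_{\edge}| \le 4k$. These four contributions already sum to $9k + 1$, which matches the lower-order term of the claimed bound.

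It then remains to bound $|N_{\inter}^{\cup}| + |N_{\edge}^{\cup}|$ by $9k^4$. First, \cref{rule:4} guarantees at most one edge between each pair of large cliques, so $|V_{\inter}| \le 2\binom{k}{2}$; combining \cref{rule:1} and \cref{rule:4}, each vertex in $V_{\edge}$ has $\Oh(k)$ neighbors in $\VL$, so $|N_{\edge}| = \Oh(k^2)$, and in total the ``seed'' set $V_{\inter} \cup N_{\edge}$ has $\Oh(k^2)$ vertices. Second, I would show that each connected component of $\Lambda$ has $\Oh(k^2)$ vertices: by \cref{lem:lambda} the maximum degree of $\Lambda$ is at most $k$, and by \cref{rule:7} the number of $V_I$-vertices in a single component is at most $k$ (since otherwise some vertex in the component would have triggered \cref{rule:7} and been moved to $\AT$ or $\BT$), so the number of $V_C$-vertices in that component is at most $k \cdot k = k^2$. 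Since $N_{\inter}^{\cup} \cup N_{\edge}^{\cup}$ is the union of components of $\Lambda$ touching $V_{\inter} \cup N_{\edge}$, its size is bounded by (number of touched components) times (maximum component size), both $\Oh(k^2)$. Careful book-keeping of the multiplicative constants then yields the $9k^4$ bound.

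The main obstacle will be invariant maintenance across rule applications: firing a later rule may reactivate an earlier one (for example, newly placed vertices in $\AT$ may create a $P_3$ that triggers \cref{rule:0.5}, or may alter the clique decomposition and cascade into \cref{rule:3} or \cref{rule:4}), so the implementation must loop over the entire ordered rule list until a fixed point is reached, and the running-time bookkeeping must charge each re-triggering to a vertex being definitively placed. A secondary challenge is pinning down the constants precisely: while the asymptotic bound $\Oh(k^4)$ falls out of the component-size argument, obtaining exactly $9k^4$ requires a careful accounting of how many distinct $\Lambda$-components can contain vertices of $V_{\inter}$ and $N_{\edge}$, leveraging the bipartite structure of $\Lambda$ and the tight bounds from \cref{rule:4} and \cref{rule:5}.
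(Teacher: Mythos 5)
Your proposal matches the paper's proof in both structure and substance: the same exhaustive application of Reduction Rules~\ref{rule:0}--\ref{rule:9}, the same $\Oh(n)$-rounds-times-$\Oh(nm)$-per-round running-time accounting via \cref{lem:cluster-dec-time}, and the same decomposition of $V_{\important}$ with identical bounds on each piece. Your phrasing of the last step in terms of connected components of $\Lambda$ is only a cosmetic reformulation of the paper's bound $|\bigcup_{i \leq n}\N^i(v)| \leq \Delta(\Lambda)\cdot k \leq k^2$ (that union is exactly the component of $v$), so the argument is essentially the paper's own.
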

}
\begin{proof}
  Given an instance $(G, k)$ of \monopolar, we apply Reduction Rules \ref{rule:0}--\ref{rule:9} exhaustively to $(G, k)$. Clearly, the above rules can be applied in polynomial time. Let $(G', k')$ be the resulting instance, let $(\AT, \BT, C_1, \ldots, C_r)$ be a normalized partition of $V(G')$ with respect to which none of Reduction Rules~\ref{rule:0}--\ref{rule:9} applies, and let $\Lambda = (\VL=(V_C, V_I), \EL)$ be the auxiliary graph. Note that, by \cref{rule:9}, $V(G') =V_{\important}= \AT \cup \BT \cup V_{\fixed}  \cup N_{\inter}^{\cup} \cup V_{\edge} \cup N_{\edge}^{\cup}$. By \cref{rule:5}, the number of large cliques is at most $k$, and the number of edge cliques is at most $2k$. It follows that $|V_{\fixed}| \leq 3k$ and $|V_{\edge}| \leq 4k$. For a vertex $v \in V_{\edge}$, by \cref{rule:1}, $v$ has at most $k$ neighbors in $V_I$. Moreover, by \cref{rule:4}, $v$ can have at most $k$ neighbors in $V_C$, and therefore, $|\N(v)| \leq 2k$, and $|N_{\edge}| \leq 4k \cdot 2k = 8k^2$. Since \cref{rule:7} does not apply and $\Delta(\Lambda) \leq k$ by \cref{lem:lambda}, we have that, for any $v \in \VL$, we have $|\bigcup_{i \leq n}\N^i(v)| \leq \Delta(\Lambda) \cdot k \leq k^2$. This implies that $|N_{\edge}^{\cup}| \leq 8k^2 \cdot k^2 \leq 8k^4$. Now since the number of large cliques is at most $k$, by \cref{rule:4}, it follows that $|V_{\inter}| \leq {k \choose 2} < k^2$.  Since for a vertex $v \in \VL$ we have $|\bigcup_{i \leq n}\N^i(v)| \leq k^2$ as argued above, it follows that $|N_{\inter}^{\cup}| \leq k^4$. Since $|\AT| \leq k$ and $|\BT| \leq k+1$, putting everything together, we conclude that the number of vertices in $V(G')$, $|V_{\important}|$, is at most $k+k+1 + 3k+k^4 +4k+8k^4 \leq 9k^4 + 9k+1$. Hence, the number of edges in~$G'$ is~$\Oh(k^8)$ and thus the kernel has polynomial size. \ifshort
 The running time proof is deferred \app.% 
 \else

It remains to show the running time. First, observe that the overall number of applications of the reduction rules
is~$\Oh(n)$, since each application either moves a vertex to~$\AT$ or~$\BT$, or
reduces the number of vertices in~$G$. To obtain the overall running time bound, we first
bound the time to check the applicability of each reduction rule.

For
Reduction Rules~\ref{rule:0}--\ref{rule:1}, it is obvious that their applicability can be checked
in~$\Oh(m)$ time (recall that we assume $n \in \Oh(m)$).

For \cref{rule:3}, its applicability can be checked in~$\Oh(m)$ time, if we
assign to each vertex a label indicating the number of its clique and an additional
``counter-variable'' for each cluster. Then, we consider the vertices of the clique
decomposition one by one. When considering a vertex~$v$, we reset all clique counters
to 0. Then we scan through the adjacency list of~$v$, and increase the counter
of a clique~$C_i$ for each edge between~$v$ and~$C_i$ (the cluster for each edge can be
checked in~$\Oh(1)$ time using the clique labels of the vertices). After scanning through
the adjacency list, we check for each clique~$C_i$ that $v$ is adjacent to, if the
number of edges between $v$ and $C_i$ and the size of~$C_i$ meet the conditions in \cref{rule:3}.

For \cref{rule:4}, we create once in~$\Oh(n^2)$ time an~$n'\times n'$ matrix~$M$ where $n'$ is the number of large and edge cliques.
Entry~$M[i,j] = M[j, i]$ is used to count the number of edges between the $i$th large or edge clique~$C_i$ and the $j$th large or edge clique~$C_j$. Before and
after we test the applicability of the rule,~$M[i,j]=0$ for all~$i,j\in [n']$. To test applicability, we scan
through a list containing each edge of~$G$ exactly once and increment~$M[i,j]$ each time
we encounter an edge between~$C_i$ and~$C_j$. If at some point in this procedure~$M[i,j]=2$ for some~$i$ and~$j$, then the
rule applies. After the check, we reset~$M$ to~$0$ by keeping a list of all pairs of
modified matrix indices.

It is clear that we can check in $\Oh(m)$ time whether \cref{rule:5} applies.

\cref{rule:6} can be checked in a similar manner to \cref{rule:4}. We use a $n_1 \times n_2$ matrix~$M'$ where $n_1$ is the number of large cliques in the current normalized partition and $n_2$ is the number of clusters in~$\AT$. We use entry~$M[i, j]$ to count the number of vertices in large clique $C_i$ adjacent to the $j$th cluster~$D_j$ in~$\AT$. After a one-time $\Oh(n^2)$ initialization, we will ensure that before and after the test of applicability~$M[i, j] = 0$ for all $i \in [n_1]$, $j \in [n_2]$. Additionally, we use a vertex labeling for all vertices in~$G$, which we initialize for every vertex as \emph{uncounted}. We iterate in $\Oh(m)$ time over the list of edges in~$G$ and whenever we encounter an edge such that one endpoint~$v$ of which is uncounted and in large clique~$C_i$, and the other endpoint is in the $j$th cluster in $\AT$, then we increment~$M[i, j]$ and remove the labeling from~$v$. If, after processing some edge, $M[i, j]$ now equals 1, or $|C_i| - 1$, then we label~$C_i$ as \emph{amenable} and otherwise remove the amenable-label from~$C_i$ (if any). \cref{rule:6} applies if and only if a large clique is amenable after processing each edge. As before, after the check for applicability, we restore all entries $M[i, j] = 0$ by tracking the pairs of indices which changed during the applicability test.

\cref{rule:8} can obviously be
checked in~$\Oh(m)$ time. For the remaining reduction rules, it is necessary to compute the
auxiliary bipartite graph~$\Lambda$ which can be done in $\Oh(m)$~time by iterating over all edges and checking whether they are incident with a large clique or vertex clique. To check whether \cref{rule:7} applies, it is enough to compute the connected components of~$\Lambda$, and
compute for each component the size of each part and the subgraph of~$G$ that is induced
by each part. This can clearly be done in~$\Oh(m)$ time. For \cref{rule:9}, we first need to compute~$V_{\important}$ in~$\Oh(m)$
time---we iterate over all edges and check whether one of the corresponding conditions applies to the endpoints---and then compute the subgraph~$G[V_{\important}]$ also in~$\Oh(m)$ time.

The time to perform each reduction rule is~$\Oh(m)$, plus the time needed to update
the clique decomposition. We update the clique decomposition~$\Oh(n)$ times, by
\cref{lem:cluster-dec-time}, this takes~$\Oh(nm)$ time. Thus, the latter step has a
total running time of~$\Oh(n^2m)$, which gives the overall running time for computing the
kernel.
\fi
\end{proof}

\section{Kernel-size Lower Bound}

\newcommand{\selcon}{\ensuremath{\mathsf{selection}}}
This section is dedicated to proving the ``only if'' direction of \cref{thm:main}, which, together with \cref{thm:mainkernel}, completes the proof of \cref{thm:main}. More precisely, we prove the following:
{

\begin{theorem}\label{thm:lowerbound}
Let $\Pi$ be a graph property characterized by a (not necessarily finite) set~$\mathcal{H}$ of connected forbidden induced subgraphs, each of order at least~$3$. Then unless \containment, {\picluster} parameterized by the number $k$ of clusters in the cluster graph $G[A]$ does not admit a polynomial kernel.
\end{theorem}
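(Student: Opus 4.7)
The plan is to invoke \cref{thm:crosscomp} by giving an or-cross-composition from an NP-hard source problem into \picluster{} parameterized by the number~$k$ of clusters. A convenient source is \picluster{} itself, which is NP-hard under the hypothesis on~$\mathcal{H}$, together with the polynomial equivalence relation that identifies two instances if they have the same vertex count and the same parameter value. This allows me to assume that the given instances $(G_1, k_0), \ldots, (G_t, k_0)$ share both the vertex count~$s$ and the parameter~$k_0$.

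The key structural ingredient is a graph $H \in \mathcal{H}$ of minimum order, so that $H$ is connected, $|V(H)| \ge 3$, and every proper induced subgraph of~$H$ lies in~$\Pi$. Thus, whenever a copy of~$H$ ends up in~$B$, pushing \emph{any single} vertex of that copy to~$A$ suffices to repair the forbidden subgraph. This is precisely the nondeterminism highlighted in the introduction: unlike the monopolar case, where a new edge created in~$B$ has only two endpoints and the partner to be pushed is essentially forced, here the repair offers $|V(H)|$ different continuations. I will exploit this to encode binary choices.

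Using this nondeterminism, I construct the composed instance from three pieces. First, a \emph{selection gadget} using $\Theta(\log t)$ clusters: for each of the $\lceil \log_2 t \rceil$ bit positions of an index $i \in [t]$, attach a small subgraph built from a copy of~$H$ (together with auxiliary vertices that force it into~$B$ when the gadget is ``inactive'') whose two canonical repairs correspond to that bit being~$0$ or~$1$; jointly, the bit gadgets commit the partition to a specific~$i$. Second, for each $j \in [t]$, a \emph{verification block} for $(G_j, k_0)$ consisting of a copy of~$G_j$ together with copies of~$H$ that couple $G_j$ to the selection gadget so that (a) if the selection gadget points to~$j$, the partition restricted to the $G_j$-copy must be a valid cluster-$\Pi$ partition with at most~$k_0$ clusters, and (b) if the selection gadget does not point to~$j$, all vertices of the block can be absorbed into~$B$ (or already existing clusters) without creating a new cluster or a forbidden induced subgraph. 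Third, $k_0$ distinguished ``template'' clusters that are shared across verification blocks, so that the valid cluster counts of the different blocks reuse the same clusters. The total number of clusters is $k' = k_0 + O(\log t) \in \poly(\log t + s)$, as required by the or-cross-composition framework.

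Correctness is by a direct equivalence: any valid cluster-$\Pi$ partition of the composed graph using at most~$k'$ clusters forces the selection gadget into some state~$i$, and its verification block then forces~$(G_i, k_0)$ to be a yes-instance of \picluster; conversely, a yes-witness for any one $(G_i, k_0)$ lifts to a partition of the composed graph with at most~$k'$ clusters by setting the selection gadget to~$i$ and collapsing the remaining verification blocks into their default inactive state. The main obstacle is engineering the selection and verification gadgets \emph{uniformly} for an arbitrary connected forbidden subgraph~$H$ of order at least~$3$, since~$H$ could be as simple as a~$P_3$ or triangle but also a large cycle or a dense graph; the gadgets must therefore rely only on the two properties guaranteed in full generality, namely that $H - v \in \Pi$ for every~$v \in V(H)$ and that~$H$ is connected, while at the same time preventing spurious induced copies of graphs from~$\mathcal{H}$ from appearing as a by-product of the gluings.
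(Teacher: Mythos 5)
There is a genuine gap, and it sits exactly where the real work of the proof lies. Your high-level skeleton (a selection gadget with $O(\log t)$ clusters that exploits the fact that a minimum-order forbidden subgraph $H$ has at least three vertices and hence at least two ways to be repaired, followed by per-instance verification) matches the paper's strategy. But your choice to cross-compose \picluster{} from itself, with verification blocks that contain \emph{literal copies of the input graphs $G_j$}, does not work. You require that a non-selected block "can be absorbed into $B$ (or already existing clusters) without creating a new cluster or a forbidden induced subgraph." If the whole copy of $G_j$ goes to $B$, then $G[B]$ contains $G_j$ as an induced subgraph, which violates $\Pi$ whenever $G_j \notin \Pi$ --- and the interesting instances are precisely those with $G_j \notin \Pi$. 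If instead its vertices are absorbed into clusters of $A$, then $G_j$ must induce a cluster graph, which an arbitrary $G_j$ does not. Any intermediate "default" split of a non-selected $G_j$ between $A$ and $B$ would itself be a cluster-$\Pi$ partition of $G_j$, i.e., a solution to the very instance you are trying to avoid deciding. So there is no neutral parking state for non-selected instances, and the claimed converse direction of the equivalence collapses.

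The paper avoids this obstruction by composing from \textsc{Colorful Independent Set} rather than from \picluster{}: the input graphs are never embedded as induced subgraphs of the composed instance. Instead, vertices and edges of each $G_j$ are represented by selection and verification gadgets built from copies of the minimum forbidden subgraph $M$ (the operation of "making vertices exclusive"), anchored to a fixed set of $d$ clusters shared across all instances via "dials." A non-selected instance then sits harmlessly in a canonical state because its gadgets consist only of exclusivity triples and helper vertices, not of arbitrary graph structure. Your proposal also leaves the cluster-sharing between verification blocks ("$k_0$ template clusters") unspecified, and sharing clusters between copies of arbitrary graphs would force cross-block adjacencies that create further $P_3$s in $A$ or forbidden subgraphs in $B$; the paper's invariants on dials and volatile vertices are what make the analogous sharing sound. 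To repair your argument you would need to replace the literal verification blocks by a combinatorial encoding of each instance --- at which point you have essentially rederived the paper's construction.
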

}
\looseness=-1\noindent Throughout, let $\Pi$ be any graph property satisfying the conditions of \cref{thm:lowerbound}.
We show \cref{thm:lowerbound} by giving a cross-composition from the \NP-hard problem \pCIS~\cite{FHRV09}, defined below.
\begin{quote}
  \pCIS\\
  \textbf{Input:} A graph $G=(V,E)$, $k \in \nat$, and a proper $k$-coloring $c \colon V \to \{1, \ldots, k\}$.\\
  \textbf{Question:} Is there an independent set with $k$ vertices in $G$ that contains exactly one vertex of each color?
\end{quote}
In the remainder of this section, we explain the construction behind
the cross-com\-position and prove its correctness. We start by
describing the intuition behind the construction, and why the
avalanches in the case of properties~$\Pi$ as above cannot be
contained.

In contrast to \monopolar, the avalanches caused by the pushing process for the general \picluster\ problem are much more uncontrollable: If some push to the $\Pi$-side~$B$ creates a forbidden induced subgraph~$M$ for $\Pi$ in $G[B]$, we can repair the partition and ``break'' $M$ by pushing a vertex of~$M$ to the cluster graph side~$A$.
The crucial point here is that, because the forbidden induced subgraphs have order at least three, there are at least two possibilities for choosing a vertex from $M$ to push.
Now each distinct push of a vertex of $M$ from~$B$ to~$A$ may lead---through further necessary pushes from~$A$ to~$B$---to distinct forbidden induced subgraphs in~$G[B]$, again with multiple possible ways of breaking them in order to repair the partition.
These avalanches cannot be contained, and lead to many possible paths along which they can be repaired.

It is precisely the above-described behavior of avalanches that we exploit to obtain a cross-composition: The main gadgets select a \pCIS\ instance and independent-set vertices within that instance.
The gadgets are combined in a way that gives a trivial \pipartition\ with one caveat: The overall number of clusters in $G[A]$ is one more than allowed and there is exactly one singleton cluster which can be pushed to the $\Pi$-side~$B$.
We call the vertex of this singleton cluster an {\em activator} vertex since it activates the instance selection gadget.
Pushing the activator vertex into $B$ creates a forbidden induced subgraph for~$\Pi$, requiring further pushes that propagate along a root-leaf path in a binary-tree-like structure.
In the end, exactly one vertex corresponding to a leaf in the instance selection gadget will be pushed from~$A$ to~$B$.
This vertex corresponds to one \pCIS{} instance and its push transmits the choice of this instance to further gadgets that check whether this instance has an independent set of $k$ vertices.

Next, in \cref{sec:nopk-example}, we give an example of a problem instance that shows these binary-tree-like pushes---we later generalize the construction underlying this instance and use it in selection gadgets.
Based on this intuition, we outline the construction in \cref{sec:nopk-outline}.
In \cref{sec:nopk-basicops}, we begin with the concrete description of the cross-composition: We give scaffolds for the construction and some basic operations that we need, such as fixing vertices into one of the two parts of the partition, and invariants that we need to maintain for the correctness proofs.
Afterwards, the construction proceeds in an incremental fashion: In \cref{sec:nopk-instsel,sec:nopk-vertsel}, we show how to construct a selection gadget and then use it to create instance-selection and vertex-selection gadgets, and we add them to the constructed graph one-by-one.
Finally, in \cref{sec:nopk-verif} we construct verification gadgets that ensure that the selected vertices in the selected instance form an independent set. % 

\subsection{Example}\label{sec:nopk-example}An example of the instance
Figure~\ref{fig:instance-select} shows an example of the selection gadget for~\picluster\ with~$\Pi$ being the set of cluster graphs.
The forbidden induced subgraph for both~$A$ and~$B$ is the~$P_3$ in this case, and in the example we look for a \pipartition{} with at most~$4$ clusters in~$G[A]$.
Observe that, as highlighted in Figure~\ref{fig:instance-select}, we build an instance with some initial \pipartition~$(A,B)$ where~$G[A]$ has one cluster more than allowed.

In the example, we aim to select one of four instances~$(G_1,k)$, $(G_2,k)$, $(G_3,k)$, and~$(G_4,k)$, and the selection of instance~$(G_i,k)$, for $i \in [4]$, is triggered---as shall be seen later---by a push of the vertex~$w_i$ from~$A$ to~$B$.
We ensure that exactly one of these vertices is pushed to~$B$ as follows.
Of the initially five clusters in~$G[A]$, all except one contain a special vertex called \emph{anchor vertex}~$a^2_i$.
(The superscript is not necessary here and used only for consistency with the formal construction given later on.)
We add vertices to~$B$ (shown above each anchor vertex) to ensure that an anchor vertex~$a^2_i$ cannot be moved to~$B$: Pushing~$a^2_i$ to~$B$ creates five~$P_3$s that only intersect in~$a^2_i$ and are otherwise independent.
Consequently, to destroy these forbidden induced subgraphs in~$B$, one needs to push five independent vertices to~$A$ which would create too many clusters.
Hence, the only viable push to decrease the number of clusters in~$G[A]$ is pushing~$v^*$ from~$A$ to~$B$.
This, however, creates a~$P_3$ in~$B$ since~$v^*$ has two nonadjacent neighbors in~$B$.
Consequently, one of these two neighbors needs to be pushed to~$A$.
Now this vertex is adjacent to the anchor vertex~$a^2_1$ and to a vertex in the cluster containing the anchor vertex~$a^2_2$.
Since~$a^2_1$ cannot be pushed to~$B$ as discussed above, we need to push the other vertex from~$A$ to~$B$ to ensure that~$G[A]$ is~$P_3$-free.
This results in a new~$P_3$ in~$G[B]$ and again we have two possibilities to destroy this~$P_3$ by pushing a vertex to~$A$.
In each case, the vertex that is pushed to~$A$ is adjacent to~$a^2_3$ and to a vertex in the cluster of~$a^2_4$.
This vertex will be pushed to~$A$.
Depending on the possible choices of pushes from~$B$ to~$A$, this vertex could be~$w_1$,~$w_2$, $w_3$, or~$w_4$.

\begin{figure}[t]
  \centering
  \begin{tikzpicture}[xscale=0.8,yscale=0.8]

    \node[bvertex,label=above:$v^*$] (a1) at (2,0) {};

    \node[avertex,label=right:$a^2_1$] (a2) at (4,1) {};

    \node[avertex,label=right:$a^2_2$] (a3) at (7,1) {};
    \node[bvertex] (a31) at (6.25,0) {};
    \node[bvertex] (a32) at (7.75,0) {};
    \draw[] (a31)--(a3)--(a32)--(a31);

    \node[avertex,label=right:$a^2_3$] (a4) at (10,1) {};

    \node[avertex,label=right:$a^2_4$] (a5) at (13,1) {};
    \node[ivertex,label = below:$w_1$] (a51) at (11.5,0) {};
    \node[ivertex,label = below:$w_2$] (a52) at (12.5,0) {};
    \node[ivertex,label = below:$w_3$] (a53) at (13.5,0) {};
    \node[ivertex,label=below:$w_4$] (a54) at (14.5,0) {};
    \draw[] (a51)--(a52)--(a53)--(a54);
    \draw[] (a51)--(a5)--(a52);
    \draw[] (a53)--(a5)--(a54);
    \draw[-]  (a51)  [bend left=32] to (a53);
    \draw[-]  (a52)  [bend left=32] to (a54);
    \draw[-]  (a51)  [bend left=32] to (a54);

    \node[bvertex] (b11) at (3.5,-1.5) {};
    \node[bvertex] (b12) at (4.5,-1.5) {};

    \node[bvertex] (b21) at (8.25,-1.5) {};
    \node[bvertex] (b22) at (9.25,-1.5) {};
    \node[bvertex] (b23) at (10.75,-1.5) {};
    \node[bvertex] (b24) at (11.75,-1.5) {};

    \draw[very thick,red] (a1)--(b12)--(a32)--(b24) edge [bend right=13] (a54);

    \draw[] (a1)--(b11);
    \draw[] (a2)--(b11);
    \draw[] (a2)--(b12);
    \draw[] (a31)--(b11);
    \draw[] (a31)--(b21);
    \draw[] (a31)--(b22);
    \draw[] (a32)--(b23);
    \draw[] (a4)--(b21);
    \draw[] (a4)--(b22);
    \draw[] (a4)--(b23);
    \draw[] (a4)--(b24);
    \draw (a51) edge [bend left=13] (b21);
    \draw[] (a52) edge [bend left=13] (b22);
    \draw[] (a53) edge [bend left=13] (b23);

    \foreach \i in {2,3,4,5}{
      \node[bvertex] (c\i0) at (3*\i -3,2) {};
      \node[bvertex] (d\i0) at (3*\i -3,3) {};
      \node[bvertex] (c\i1) at (3*\i -2.5,2) {};
      \node[bvertex] (d\i1) at (3*\i -2.5,3) {};
      \node[bvertex] (c\i2) at (3*\i -2,2) {};
      \node[bvertex] (d\i2) at (3*\i -2,3) {};
      \node[bvertex] (c\i3) at (3*\i -1.5,2) {};
      \node[bvertex] (d\i3) at (3*\i -1.5,3) {};
      \node[bvertex] (c\i4) at (3*\i -1,2) {};
      \node[bvertex] (d\i4) at (3*\i -1,3) {};
      \draw[] (a\i)--(c\i0)--(d\i0);
      \draw[] (a\i)--(c\i1)--(d\i1);
      \draw[] (a\i)--(c\i2)--(d\i2);
      \draw[] (a\i)--(c\i3)--(d\i3);
      \draw[] (a\i)--(c\i4)--(d\i4);
    }

        \begin{pgfonlayer}{bg}
           { \draw [cyan, rounded corners, fill] (10,3.5) -- (14.5,3.5) -- (14.5, 1.5) --
             (2.5,1.5)--(2.5,3.5)--(10,3.5);
             \draw [cyan, rounded corners, fill] (10,-1) -- (14.5,-1) -- (14.5, -2) --
             (2.5,-2)--(2.5,-1)--(10,-1);
           }
           \input{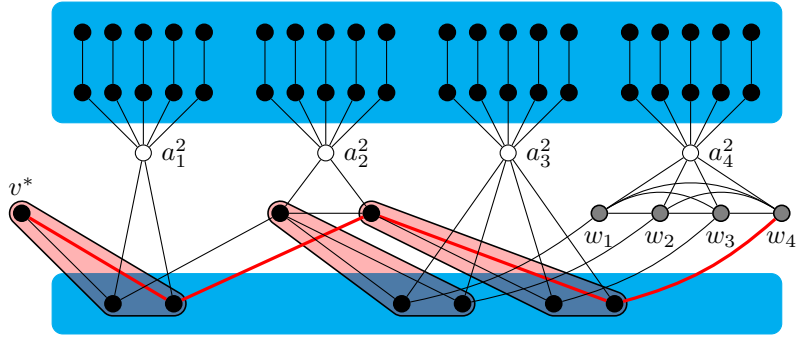}
           \newcommand\forbr{2mm}
           \draw[forb]
           \hedgeiii{a1}{b12}{b11}{\forbr}
           \hedgeiii{a31}{b22}{b21}{\forbr}
           \hedgeiii{a32}{b24}{b23}{\forbr};
        \end{pgfonlayer}

      \end{tikzpicture}

      \caption{An example of the instance selection in the case where~$\Pi$ is the set of
        cluster graphs and one of four instances needs to be selected. The blue background
        highlights vertices that are initially in~$B$. Anchor vertices are white. The vertices~$w_1$, $w_2$, $w_3$, and $w_4$, corresponding to the instances from which one is to be selected, are grey. A possible path of pushes starting with~$v^*$ and ending with~$w_4$ is drawn with thick red edges. The red regions with solid outline show the $P_3$s that we use as forbidden induced subgraphs for $G[B]$ in the argument that at least one $w_i$-vertex is pushed to~$B$.}
      \label{fig:instance-select}
    \end{figure}

\subsection{Construction Outline}\label{sec:nopk-outline}
Let $t$ instances of \pCIS\ with graphs $G_1, \ldots, G_t$ be given; we will refer to the instances by their indices $1, \ldots, t$.
The construction is divided into three main parts:
\begin{itemize}
\item an instance selection part, whose purpose is selecting one of the $t$ \pCIS\ instances;% 
\item a vertex selection part, whose purpose is selecting from each color in the selected instance a vertex into the independent set; and
\item a verification part, whose purpose is ensuring that the selected vertices form an independent set.
\end{itemize}
All these parts share a common scaffold, which is given by the so-called anchor vertices as outlined below.
See \cref{fig:composition-struc} for a schematic view of the construction.
For illustrative purposes, by~$(A, B)$ we refer to a \pipartition\ for the constructed instance.
Next, we explain the different components of the construction.

\subsubsection*{Anchors}
The construction starts in \cref{sec:nopk-basicops} by introducing a special set of vertices, which we call anchors, introduced before any of the selection and verification gadgets.
We then introduce small gadgets that ensure that the \pipartition~$(A, B)$ has the property that each anchor is part of a distinct cluster in the cluster-graph side~$G[A]$ and that each cluster contains an anchor (similar to the example given in \cref{sec:nopk-example}).
In other words, each cluster in $G[A]$ ``grows around'' some anchor.
The remaining gadgets will be attached to the anchors and contain different parts of the clusters around an anchor.
We introduce the anchors at the beginning since they need to be \emph{shared} by several gadgets. The reason being that the number of anchors will be equal to the number of clusters, the parameter in the constructed instance, which we need to keep small in order for the cross-composition to work.

\subsubsection*{Instance Selection}
In \cref{sec:nopk-instsel}, we introduce an instance-selection gadget, in analogy to the example from \cref{sec:nopk-example}.
This gadget is represented by the leftmost triangle in \cref{fig:composition-struc}.
Two of its main features are the activator vertex (on the left tip) and the set of choice vertices (on the right side).
In the example in \cref{sec:nopk-example} the choice vertices were labeled $w_1$ to~$w_4$.
The choice vertices correspond in a one-to-one fashion to the instances of \pCIS\ given as input.
The activator vertex is not adjacent to any anchor, and since the number of clusters in $G[A]$ equals the number of anchors, the activator vertex~$v^*$ cannot be in $A$.
By the properties explained in \cref{sec:nopk-example}, this means that one of the choice vertices is in $B$; this corresponds to selecting one instance.
The remaining gadgets verify that the selected instance is a yes-instance.

\subsubsection*{Vertex Selection}
The next set of gadgets, described in \cref{sec:nopk-vertsel}, aims to select a vertex from each color of the previously-selected instance.
This works similarly to the instance selection, by introducing a gadget---analogous to the example from \cref{sec:nopk-example}---for each color and each instance.
These gadgets are shown in the center of \cref{fig:composition-struc}.
As before, each gadget consists of an activator vertex (on its left tip), and a set of choice vertices (on its right side).
For each instance $i \in [t]$ and color $\ell \in [k]$, the choice vertices in the gadget instance $G_i$ correspond in a one-to-one fashion to the vertices of color $\ell$ in $G_i$.

The choice of instance---as done before---is transferred to the vertex-selection gadgets using a small additional gadget that has the following effect: If a choice vertex of some instance is in $B$, then in \emph{each} vertex-selection gadget for that instance (\ie, for all colors) the activator vertex is in~$B$.
This construction is indicated by the dashed lines in \cref{fig:composition-struc}.
Again, by the properties of the selection gadget explained in \cref{sec:nopk-example}, this means that, for each color~$\ell$, there is a vertex in $G_i$ that is selected by a choice vertex in the gadget for color~$\ell$.

A crucial part to the cross-composition construction is obtaining an upper bound on the number of clusters in $G[A]$ that is polynomial in $k$, the independent set size, and logarithmic in $t$, the number of instances.
Note that each of the selection gadgets uses a number of clusters, and hence, it is imperative (for the reasons explained in the previous sentence) that the selection gadgets share clusters.
\Cref{fig:selcons} shows schematically which gadgets share clusters.
The corresponding cliques will be merged into one large clique containing exactly one anchor.
We will call these large cliques \emph{dials}, supporting the intuition that exactly one of the gadgets sharing a dial may be active in a \pipartition.

\subsubsection*{Verification}
In the final part of the construction, given in \cref{sec:nopk-verif}, we ensure that it is impossible to simultaneously push into $B$ two choice vertices that correspond to two adjacent vertices in some input instance.
By itself, this would be easy to do: Simply introduce a forbidden induced subgraph for $\Pi$ that contains the two choice vertices.
However, we also need to allow pushing \emph{one} of the two choice vertices into~$B$. Since a vertex corresponding to a choice vertex may be adjacent to many other vertices in the corresponding input graph, the corresponding parts of the forbidden subgraphs may overlap and inadvertently introduce new forbidden subgraphs.
Hence, we need a slightly more involved construction that consists of more independent vertex and edge gadgets, shown schematically on the right in \cref{fig:composition-struc}; the construction is described in detail in \cref{sec:nopk-verif}.

The basic idea is to have an edge gadget for each edge $e = \{uv\}$ in an \pCIS\ instance and vertex gadgets corresponding to the endpoints $u, v$ of that edge.
If the choice vertex corresponding to endpoint $u$ (resp.\ $v$) is in $B$, then vertex gadgets and additional gadgetry will ensure that a vertex, $u_e$ (resp.\ $v_e$), corresponding to $u$ (resp.\ $v$) in the edge gadget is in $B$ as well.
Using additional gadgetry we make $u_e$ and $v_e$ part of a forbidden subgraph for $\Pi$ whose remaining vertices are fixed in $B$---this ensures that not both $u_e$ and $v_e$ can be in $B$, that is, not both $u$ and $v$ are chosen into the independent set.

\begin{figure}[p]
  \centering
  \begin{tikzpicture}[on grid, scale =.8, every edge/.style={draw, semithick}]
    \input{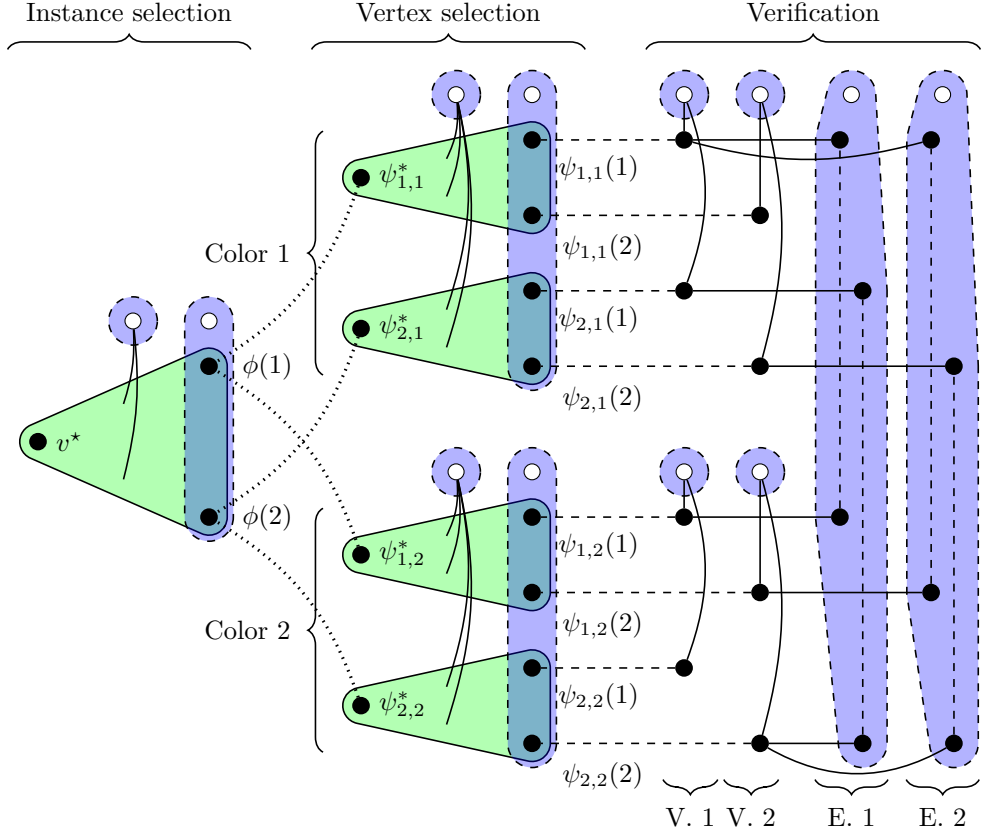}
    \newcommand\seladdlength{1.25cm}
    \newcommand\dialr{4mm}
    \newcommand\selgadr{3mm}
    \tikzstyle{dial} = [dashed, fill = blue, fill opacity = 0.3, semithick]
    \tikzstyle{selgad} = [fill = green, fill opacity = 0.3, semithick]

    \tikzstyle{brace} = [semithick, decorate, decoration = {brace, amplitude = 2mm}]
    \tikzstyle{bracemirr} = [semithick, decorate, decoration = {brace, mirror, amplitude = 2mm}]

    \tikzstyle{nota} = [dotted, very thick]

    \node (vs) [bvertex, label = right:$v^\star$] at (0, 0) {};

    \node (p1) [bvertex, above right = of vs, xshift = \seladdlength, label = {[label distance = 2mm]right:$\phi(1)$}] {};
    \node (p2) [bvertex, below right = of vs, xshift = \seladdlength, label = {[label distance = 2mm]right:$\phi(2)$}] {};

    \node (d1t) [avertex, above left = of p1, yshift = -4mm] {}
    edge [bend left = 10] ($ (vs) ! .5 ! (p1) $)
    edge [bend left = 10] ($ (vs) ! .5 ! (p2) $);
    \node (d1b) [left = of p2] {};

    \node (d2t) [avertex, above = of p1, yshift = -4mm] {};
    \node (d2b) at (p2) {};

    \begin{pgfonlayer}{bg}
      \draw[selgad] \hedgeiii{vs}{p1}{p2}{\selgadr};
      \draw[dial] \hedgei{d1t}{\dialr};
      \draw[dial] \hedgeii{d2t}{d2b}{\dialr};
    \end{pgfonlayer}

    \node (ss21) [bvertex, right = 2cm of p1, yshift = 5mm, label = right:$\psi^*_{2, 1}$] {}
    edge [nota, bend left = 20] (p2);
    \node (ss12) [bvertex, right = 2cm of p2, yshift = -5mm, label = right:$\psi^*_{1, 2}$] {}
    edge [nota, bend right = 20] (p1);

    \node (ss11) [bvertex, above = 2cm of ss21, label = right:$\psi^*_{1, 1}$] {}
    edge [nota, bend left = 20] (p1);
    \node (ss22) [bvertex, below = 2cm of ss12, label = right:$\psi^*_{2, 2}$] {}
    edge [nota, bend right = 20] (p2);

    \foreach \c in {1, 2} {     % 
      \foreach \i in {1, 2} {   % 
        \node (ss\i\c1) [bvertex, above right = of ss\i\c, yshift = -.5cm, xshift = \seladdlength, label = {[label distance = 2mm, yshift = 1.5mm]below right:$\psi_{\i, \c}(1)$}] {};
        \node (ss\i\c2) [bvertex, below right = of ss\i\c, yshift = .5cm, xshift = \seladdlength, label = {[label distance = 2.75mm, yshift = 1.5mm]below right:$\psi_{\i, \c}(2)$}] {};
        \node (d1t\i\c) [above left = of ss\i\c1, yshift = -4mm] {};
        \node (d1b\i\c) [left = of ss\i\c2] {};

        \node (d2t\i\c) [above = of ss\i\c1, yshift = -4mm] {};
        \node (d2b\i\c) at (ss\i\c2) {};
        \begin{pgfonlayer}{bg}
          \draw[selgad] \hedgeiii{ss\i\c}{ss\i\c1}{ss\i\c2}{\selgadr};
        \end{pgfonlayer}
      }
    }

    \foreach \c in {1, 2} {
      \node [avertex] at (d1t1\c) {}
      edge [bend left = 15] ($ (ss1\c) ! .5 ! (ss1\c1) $)
      edge [bend left = 15] ($ (ss1\c) ! .5 ! (ss1\c2) $)
      edge [bend left = 15] ($ (ss2\c) ! .5 ! (ss2\c1) $)
      edge [bend left = 15] ($ (ss2\c) ! .5 ! (ss2\c2) $);
      \node [avertex] at (d2t1\c) {};
      \begin{pgfonlayer}{bg}
        \draw[dial] \hedgei{d1t1\c}{\dialr};
        \draw[dial] \hedgeii{d2t1\c}{d2b2\c}{\dialr};
      \end{pgfonlayer}
    }

    \begin{scope}[transform canvas = {xshift = -2.75cm}]
      \draw [bracemirr]
      (ss111.north) -- (ss212.south)
      node [midway, left = 0.3cm] {Color 1};
      \draw [bracemirr]
      (ss121.north) -- (ss222.south)
      node [midway, left = 0.3cm] {Color 2};
    \end{scope}

    \foreach \c in {1, 2} {     % 
      \foreach \i in {1, 2} {   % 
        \node (v\i\c1) [bvertex, right = 2cm of ss\i\c1] {}
        edge [dashed] (ss\i\c1);
        \node (v\i\c2) [bvertex, right = 3cm of ss\i\c2] {}
        edge [dashed] (ss\i\c2);
        \node (d3l\i\c) [above = of v\i\c1, yshift = -4mm] {};
        \node (d3r\i\c) [above right = of v\i\c1, yshift = -4mm] {};
      }
    }

    \foreach \c in {1, 2} {     % 
      \node [avertex] at (d3l1\c) {}
      edge (v1\c1)
      edge [bend left = 20] (v2\c1);
      \node [avertex] at (d3r1\c) {}
      edge (v1\c2)
      edge [bend left = 15] (v2\c2);
      \begin{pgfonlayer}{bg}
        \draw[dial] \hedgei{d3l1\c}{\dialr};
        \draw[dial] \hedgei{d3r1\c}{\dialr};
      \end{pgfonlayer}
    }

    \newcommand\edgeshifta{1.2cm}
    \newcommand\edgeshiftb{2.2cm}
    \newcommand\edgeshiftc{2.4cm}
    \newcommand\edgeshiftd{3.4cm}

    \node (ae1) [avertex, right = \edgeshifta of d3r11] {};

    \node (e111) [bvertex, right = \edgeshiftb of v111, xshift = -.15cm] {}
    edge (v111);
    \node (e112) [bvertex, right = \edgeshiftb of v121, xshift = -.15cm] {}
    edge (v121)
    edge [dashed] (e111);
    \node (e121) [bvertex, right = \edgeshiftb of v211, xshift = .15cm] {}
    edge (v211);
    \node (e122) [bvertex, right = \edgeshifta of v222, xshift = .15cm] {}
    edge (v222)
    edge [dashed] (e121);

    \begin{pgfonlayer}{bg}
      \draw [dial] \hedgem{ae1}{e121}{e122,e112,e111}{\dialr};
    \end{pgfonlayer}

    \node (ae2) [avertex, right = 1.2cm of ae1] {};

    \node (e211) [bvertex, right = \edgeshiftd of v111, xshift = -.15cm] {}
    edge [bend left = 15] (v111);
    \node (e212) [bvertex, right = \edgeshiftc of v122, xshift = -.15cm] {}
    edge (v122)
    edge [dashed] (e211);
    \node (e221) [bvertex, right = \edgeshiftc of v212, xshift = .15cm] {}
    edge (v212);
    \node (e222) [bvertex, right = \edgeshiftc of v222, xshift = .15cm] {}
    edge [bend left] (v222)
    edge [dashed] (e221);

    \begin{pgfonlayer}{bg}
      \draw [dial] \hedgem{ae2}{e221}{e222, e212, e211}{\dialr};
    \end{pgfonlayer}

    \begin{scope}[yshift = 6.5cm]
      \draw [brace] (-0.5, 0) -- (3.5, 0) node [midway, above = .25cm] {Instance selection};
      \draw [brace] (4.5, 0) -- (9, 0) node [midway, above = .25cm] {Vertex selection};
      \draw [brace] (10, 0) -- (15.5, 0) node [midway, above = .25cm] {Verification};
    \end{scope}

    \begin{scope}[xshift = .28cm, yshift = -5.6cm]
      \draw [bracemirr] (10, 0) -- (10.9, 0) node [midway, below = .25cm] {V.\ 1};
      \draw [bracemirr] (11, 0) -- (11.9, 0) node [midway, below = .25cm] {V.\ 2};
    \end{scope}

    \begin{scope}[xshift = .28cm, yshift = -5.6cm]
      \draw [bracemirr] (12.5, 0) -- (13.7, 0) node [midway, below = .25cm] {E.\ 1};
      \draw [bracemirr] (14, 0) -- (15.2, 0) node [midway, below = .25cm] {E.\ 2};
    \end{scope}

  \end{tikzpicture}
  \caption{Schematic of the arrangement of the gadgets for modeling the instance and vertex selection and how dials are shared among them.
    In this example, there are two instances of \pCIS.
    Each instance has two color classes, two vertices of each color class, and two edges.
    Referring to the instance-selection part,
    the green solid area represents a selection gadget similar to the example given in \cref{sec:nopk-example} (it is produced by a procedure called $\selcon$, given in \cref{sec:nopk-instsel}).
    The vertex on the left tip is the activator vertex and the vertices on the right side are the choice vertices.
    White vertices represent anchors and blue dashed areas represent dials (the corresponding edges are omitted).
    We show only two anchors and dials in the selection gadget but more may be present.
    Referring to the vertex-selection part,
    dotted edges indicate an additional construction that implies that if one of the endpoints is in $B$, then the other is as well;
    this transfers the choice of instance from the instance to the vertex selection gadgets.
    The vertex-selection part consists of one vertex-selection gadget for each color class and instance.
    Thus, in total there are four selection gadgets.
    The dials corresponding to each color are shared among the selection gadgets of that color, regardless of their instance.
    Referring to the verification part, dashed edges indicate an additional construction that implies that not both endpoints can be in $B$; we say that they have been made exclusive.
    This additional construction transfers the choice of vertices from the vertex-selection to the verification part.
    The verification part consists of vertex gadgets (left, labeled V.~1 and V.~2) and edge gadgets (right, labeled E.~1 and E.~2).
    Their construction is explained in \cref{sec:nopk-verif}.
    In essence, pushing a choice vertex to $B$ in a vertex-selection gadget necessitates pushing a corresponding vertex to~$B$ in all edge-verification gadgets corresponding to adjacent edges.
    For example, putting $\psi_{1, 1}(1) \in B$ requires that the corresponding vertices in edge gadgets to the right are in $B$.
    Edges are represented by pairs of exclusive vertices in the edge-verification gadgets.
    Thus, not both choice vertices corresponding to the endpoints of an edge can be put into~$B$.
  }
  \label[figure]{fig:composition-struc}
\end{figure}

\subsection{Scaffolding and Basic Subconstructions}\label{sec:nopk-basicops}
\label{sec:nopk-setup}

We now begin with the formal description of the cross-composition and prove its properties.
Let $t$ instances of \pCIS\ be given, with graphs $G_1, \ldots, G_t$, respectively.
Below, we use an instance and its index in $[t]$ interchangeably.
Without loss of generality, assume that the following properties hold; they can be achieved by simple padding techniques.

\begin{compactitem}
\item Each instance asks for an independent set of size~$k$ (otherwise, introduce new colors and isolated vertices as needed);
\item each color class in each graph has $n$ vertices and $n$ is a power of two (otherwise, in a color class that does not satisfy this property, add as needed new vertices that are adjacent to all vertices in all other color classes); and
\item $t$ is a power of two (otherwise, duplicate one of the instances as needed).
\end{compactitem}
In the following, let $m$ be the maximum number of
edges over all graphs~$G_i$, for $i \in [t]$.

We construct an instance of \picluster\ as described in this and the following sections.
The instance consists of the graph~$G$ and asks for a \pipartition~$(A, B)$ with at most $d$ clusters in~$G[A]$ (we specify $d$ below).
Throughout, we denote by $(A, B)$ an arbitrary fixed \pipartition\ of the (so-far constructed) graph~$G$.
We also fix $M$ to be a forbidden induced subgraph of $\Pi$ with minimum number of vertices.
By the properties of~$\Pi$, $M$ contains at least three vertices.
The graph~$G$ is constructed by first adding $d$ vertices which we call {\em anchors} (see below).
The clusters in any \pipartition~$(A, B)$ of $G$ with $d$ clusters in~$G[A]$ will extend these anchor vertices into larger cliques; we show below how to achieve that.
We then successively add gadgets that are attached to these anchors, as outlined in \cref{sec:nopk-outline}.

\subsubsection{Anchors}
As mentioned before, the construction begins by adding \emph{anchor} vertices.
In \cref{sec:nopk-notation} we will add gadgets to ensure that each anchor vertex is in~$A$.
We introduce $d$~anchor vertices, divided into $4 + 2k$ groups, as follows:
\begin{itemize}
\item Introduce the anchors $a^1_1, a^1_2$; these two anchors serve to fix vertices into~$B$ by making any such vertex adjacent to both $a^1_1$ and $a^1_2$.
\item Introduce the anchors $a^2_1, a^2_2, \ldots, a^2_{2\log t}$; these anchors will be used in the instance-selection gadget.
\item Introduce the anchors $a^3_1, a^3_2, \ldots, a^3_{k + 1}$; these anchors serve to connect the instance-selection gadget with the vertex-selection gadgets.
\item For each $i \in [k]$, introduce the anchors $a^{3 + i}_1, a^{3 + i}_2, \ldots, a^{3 + i}_{\log n}$; these anchors are used by the vertex-selection gadgets.
\item For each $i \in [k]$, introduce the anchors $a^{3 + k + i}_1, a^{3 + k + i}_2, \ldots, a^{3 + k + i}_n$; these anchors are used in vertex subgadgets of the verification gadgets.
\item Finally, introduce the anchors $a^{4 + 2k} _1, a^{4 + 2k} _2, \ldots, a^{4 + 2k}_{m}$; these anchors are used in edge subgadgets of the verification gadgets.
\end{itemize}
Hence, we define the number~$d$ of desired clusters to be $d \coloneqq 2 + 2 \log(t) + (k + 1) + k\log n + kn + 2m$.

\subsubsection{Helper, Dial, and Volatile Vertices, Fixing Anchors}\label{sec:nopk-notation}
Before explaining how to ensure that all anchors are in $A$, we introduce some notation.
Many of the gadgets will fix some vertices into $A$ or $B$ using some additional auxiliary vertices.
To avoid having to reason about the fixed and auxiliary vertices, we will name them and introduce invariants of the ensuing construction that allow us to ignore these vertices when constructing \pipartition s.

Throughout, we use the following notation.
The vertices that we introduce will be in three disjoint categories: \emph{helper} vertices, \emph{dial} vertices, and \emph{volatile} vertices.
Their meaning is as follows.
Helper vertices will always be contained in~$B$ and only serve to impose certain properties on other vertices.
Dial vertices are normally in~$A$ and belong to a cluster extending around an anchor; some of these vertices may be pushed to~$B$.
On the other hand, volatile vertices are normally in~$B$ and may be pushed to~$A$.
To gain some intuition, consider \cref{fig:instance-select}.
The vertices in the top blue area will be helper vertices, the vertices on white background in the middle (except for $v^*$) will be dial vertices, the vertices on blue background on the bottom will be volatile vertices.

We now fix each of the anchors into~$A$ by introducing, for each anchor~$a_i^j$, $d + 1$ copies of $M$ and, for each copy, identifying an arbitrary vertex of that copy with~$a_i^j$.
The vertices different from~$a_i^j$ in the copies of~$M$ are helper vertices.
See \cref{fig:instance-select} for the special case of $d = 5$ and $M$ being a $P_3$.
Suppose that $a_i^j \in B$.
Then out of each of the $d$ incident copies of $M$, at least one vertex is in~$A$, and since these vertices are pairwise nonadjacent, $G[A]$ would contain at least~$d + 1$ clusters, which is a contradiction.
Thus, each anchor is in $A$.

When we construct \pipartition s in the following we always tacitly assume that anchors are in~$A$ and all helper vertices are in~$B$.
More generally, we maintain the following invariant throughout the construction.
\begin{invariant}\label[invariant]{inv:anchor-helper}
  For each \pipartition~$(A, B)$ of $G$ with at most~$d$ clusters in~$G[A]$, all anchors are in~$A$ and all helper vertices are in~$B$.
  Each helper vertex was introduced as part of a copy of $M$ and made adjacent to both $a_1^1$ and $a_2^1$. No helper vertex has any other neighbors than those in the copy of~$M$ it was introduced in and the anchors $a_1^1$ and~$a_2^1$.% 
\end{invariant}

\subsubsection{Dials and Joining Dials}
As mentioned before, the gadgets that we are about to construct share clusters in $G[A]$.
Hence, for a shared cluster, we need to ensure that its parts coming from different gadgets are mutually and completely adjacent.
To describe the gadgets in a self-contained way, we define dials: A \emph{dial} will be a set~$D$ of vertices that contains exactly one anchor, say $a$, and that induces a clique.
At some point in the construction, we may want to construct a new gadget that uses the anchor~$a$.
That is, the gadget needs to contain a subset~$S$ of vertices that shall form a cluster with $a$, together with all other vertices that have already/previously been added to the cluster containing $a$.
To do so, we let $S$ \emph{join the dial} $D$, whose anchor is $a$, by making $D \cup S$ a clique. % 

Formally, we associate each anchor~$a_i^j$ with a vertex set~$D_i^j$ containing $a_i^j$.
(And $D_i^j$ is supposed to induce a clique in~$G$ throughout the construction.)
We say that $D_i^j$ is the \emph{dial} of~$a_i^j$.
Initially, we have $D_i^j = \{a_i^j\}$.
Later on, other vertices may join $D_i^j$, which is defined as follows.
By making a vertex $v$ \emph{join} a dial $D_i^j$, we mean that we put $v$ into $D_i^j$ and make $v$ adjacent to all other vertices in~$D_i^j$.
Vertex $v$ is then designated as a dial vertex.
Throughout, we maintain the following invariant:
\begin{invariant}\label[invariant]{inv:dials}
  \begin{inparaenum}[(i)]
  \item Each dial induces a clique in $G$.
  \item No two dial vertices in different dials are adjacent.
  \end{inparaenum}
\end{invariant}
Note that this is true so far, since the only vertices currently in dials are anchors.

Next, there will be two different flavors of dials, similar to what we showed in \cref{sec:nopk-example}: A dial may either serve to push a vertex from this dial into $B$, or to accept a (volatile) vertex which is pushed to $A$.
In order to simplify the reasoning about which vertices may be in a cluster in $G[A]$, we use the following invariant.
\begin{invariant}\label[invariant]{inv:volatile}
\begin{lemenum}
  \item The following dials~$D_i^j$ are singletons:
    \begin{inparaenum}[a)]\item % 
      for each $j \in \{2\} \cup \{4, 5, \ldots, 3 + k\}$ and each odd~$i$ (these are the dials used in the instance- and vertex-selection gadgets), and
    \item % 
      for each $j \in \{3 + k + 1, \ldots, 3 + 2k\}$ (and each $i$; these are the dials used in the vertex subgadgets of the verification gadgets).
    \end{inparaenum}
  \item For each anchor~$a_i^j$, each volatile vertex is either nonadjacent to~$a_i^j$ or adjacent to all vertices in~$a_i^j$'s dial~$D_i^j$.
  \end{lemenum}
\end{invariant}
A particular corollary will be that each cluster in~$G[A]$ either contains an anchor~$a_i^j$ with a dial~$D_i^j = \{a_i^j\}$, or contains only vertices of~$D_i^j$.
This will help in the correctness proof, where we build a \pipartition\ for~$G$ piece-by-piece.

\newcommand{\friendly}[1]{friendly with respect to the dials~\ensuremath{#1}}% 
We introduce the following terminology:

\begin{definition}[Friendly partition]
  Let $(A, B)$ be a \pipartition\ for~$G$ and $\cal D$ be a set of
  dials. Partition~$(A, B)$ is \emph{friendly} % 
  with respect to $\cal D$ if each
  singleton dial in~$\cal D$ is a singleton cluster in~$G[A]$.
\end{definition}

\subsubsection{Making Vertices Exclusive}
As a final prerequisite, we need the operation of making three vertices~$u, v, w$ exclusive.
Intuitively, this operation was the main tool used in \cref{sec:nopk-example} to fan out the possible pushes in avalanches according to a binary tree-like structure: When $u$ is pushed to $B$, either $v$ or $w$ can be pushed to $A$ to repair the partition.
We use this construction extensively in the selection gadget described~below.

\begin{definition}[Making three vertices exclusive]
  Let~$u, v, w \in V(G)$ be three vertices satisfying the following conditions:
  At most two of $u, v, w$ are dial vertices.
  Furthermore, any edge between two vertices $\{u, v, w\}$ is contained in a dial. % 

  By \emph{making~$u$, $v$, and~$w$ exclusive} we mean:
  \begin{lemenum}
  \item introducing a copy of (the forbidden subgraph)~$M$ consisting of new vertices into~$G$;
  \item identifying three distinct vertices of~$M$ with~$u$, $v$, and~$w$---if there are two dial vertices among $u, v, w$, then we identify them with two adjacent vertices of $M$; and
  \item making each remaining vertex (if any) of $M$ (different from $u$, $v$,
    and $w$) adjacent to both anchors~$a^1_1$ and~$a^1_2$.
  \end{lemenum}
  The vertices in $V(M) \setminus \{u, v, w\}$ are helper~vertices.
\end{definition}
See \cref{fig:make-excl} for an illustration.

\begin{figure}[t]
  \centering
  \begin{tikzpicture}[scale = 1.2]
    \input{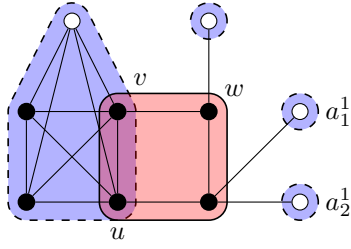}
    \node [avertex] (a21) at (0, 2) {};
    \node [avertex] (a22) at (-1.5, 2) {};
    \node [avertex, label = {[label distance = 1mm]right:$a^1_1$}] (a11) at (1, 1) {};
    \node [avertex, label = {[label distance = 1mm]right:$a^1_2$}] (a12) at (1, 0) {};

    \node [bvertex, label = {above right:$w$}] (w) at (0, 1) {};
    \node [bvertex] (x) at (0, 0) {};
    \node [bvertex, label = {[label distance = 1.3mm]70:$v$}] (v) at (-1, 1) {};
    \node [bvertex, label = {[label distance = 1mm]below:$u$}] (u) at (-1, 0) {};
    \node [bvertex] (d1) at (-2, 1) {};
    \node [bvertex] (d2) at (-2, 0) {};

    \graph{
      (v) -- (w) -- (x) -- (u);

      {[clique] (u), (v), (d1), (d2), (a22)};

      (x) -- {(a11), (a12)};

      (w) -- (a21);
    };

    \begin{pgfonlayer}{bg}
      \draw [dial] \hedgem{a22}{v}{u, d2, d1}{2mm};
      \draw [forb] \hedgem{v}{w}{x, u}{2mm};
      \draw [dial] \hedgei{a21}{2mm};
      \draw [dial] \hedgei{a11}{2mm};
      \draw [dial] \hedgei{a12}{2mm};
    \end{pgfonlayer}
  \end{tikzpicture}
  \caption{Making $u$, $v$, and $w$ exclusive. In this example, $M$ is a cycle with four vertices, $u$ and $v$ are part of a dial, and $w$ is adjacent to some anchor. Dials are represented by blue regions with dashed outlines and the introduced copy of $M$ by the red region with solid outline. Anchor vertices are white.}
  \label{fig:make-excl}
\end{figure}

Observe that step~(iii) entails that $V(M) \setminus \{u, v, w\} \subseteq B$, since otherwise, there would be a $P_3$ in $G[A]$ involving $a^1_1$ and $a^1_2$.
Hence, \cref{inv:anchor-helper} is maintained by this operation (clearly, \cref{inv:volatile} is maintained as well).
\cref{inv:dials} is maintained as well since all dial vertices among $u$, $v$, and~$w$ are contained in the same dial by the preconditions on these vertices.

Furthermore, not all three $u, v, w \in B$; otherwise, since $V(M) \setminus \{u, v, w\} \subseteq B$, $G[B]$ would contain a copy of~$M$.
That is, making vertex exclusive indeed imposes the constraint on $(A, B)$ that we are aiming for.
For further reference, we state this fact in the following lemma.
\begin{lemma}\label[lemma]{lem:exclusive-soundness}
  Let $G$ be the graph obtained at any point during the construction in which $u, v, w \in V(G)$ were made exclusive and let $(A, B)$ be a \pipartition\ for $G$ with at most $d$ clusters in $G[A]$. Then at least one of the vertices $u, v, w$ is not in~$B$.
\end{lemma}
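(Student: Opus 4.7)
The plan is to argue by contradiction: suppose that all of $u$, $v$, and $w$ lie in $B$, and derive that $G[B]$ contains a copy of $M$ as an induced subgraph, contradicting the hypothesis that $(A,B)$ is a \pipartition{} (since $M\in \mathcal{H}$).

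First I would account for the helper vertices. The copy of $M$ introduced in step~(i) of the exclusivity operation consists, apart from $u$, $v$, and $w$, entirely of new helper vertices, each of which was made adjacent to the anchors~$a^1_1$ and~$a^1_2$. By \cref{inv:anchor-helper}, every helper vertex of $G$ lies in $B$, so in particular $V(M)\setminus\{u,v,w\}\subseteq B$. Combined with the contradiction hypothesis $\{u,v,w\}\subseteq B$, this gives $V(M)\subseteq B$.

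Next I would verify that $G[V(M)]$ is exactly $M$, i.e., that no spurious edges among $V(M)$ exist in $G$. The helper vertices in this particular copy of $M$ were freshly introduced, so by \cref{inv:anchor-helper} their only neighbors are the other vertices of this copy of $M$ plus the anchors $a^1_1$ and $a^1_2$; since anchors are not in $V(M)$ (they sit in $A$, whereas $u,v,w$ must be non-anchor vertices), no additional edges appear between helpers or between a helper and one of $u,v,w$ beyond those prescribed by $M$. For edges among $\{u,v,w\}$, the precondition states that any such edge is contained in a dial; by \cref{inv:dials}(ii) this means that the at most two dial vertices in $\{u,v,w\}$ that are adjacent lie in the same dial, and the construction explicitly identifies them with two \emph{adjacent} vertices of $M$. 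Hence the (at most one) existing edge in $G[\{u,v,w\}]$ is already an edge of $M$, and no other edges among $\{u,v,w\}$ arise.

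Putting these pieces together, $G[V(M)]$ is isomorphic to $M$, and since $V(M)\subseteq B$, the graph $G[B]$ contains $M$ as an induced subgraph. But $M\in\mathcal{H}$, so $G[B]\notin\Pi$, contradicting the assumption that $(A,B)$ is a \pipartition. I expect the only real subtlety to be the bookkeeping in the second step, namely ruling out extra edges among $V(M)$; once the invariants are invoked this reduces to the precondition on the dial structure of $\{u,v,w\}$, and the remainder of the argument is immediate.
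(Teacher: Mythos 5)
Your proposal is correct and follows essentially the same route as the paper: the paper's argument (given in the text immediately preceding the lemma) likewise observes that step~(iii) of the exclusivity operation forces $V(M)\setminus\{u,v,w\}\subseteq B$, so that placing all of $u,v,w$ in $B$ would put a copy of $M$ into $G[B]$, contradicting $G[B]\in\Pi$. The only difference is that you additionally verify that the copy of $M$ is induced in $G$ (no spurious edges among $V(M)$), a point the paper leaves implicit; your bookkeeping for this via \cref{inv:anchor-helper}, \cref{inv:dials}, and the precondition on edges within $\{u,v,w\}$ is sound.
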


To simplify arguing about the existence of \pipartition s,
we will always tacitly assume that $V(M) \setminus \{u, v, w\} \subseteq B$ and ignore the vertices in~$V(M) \setminus \{u, v, w\}$.
To simplify proving that the constructed partitions are indeed \pipartition s, we derive the following sufficient conditions.
\begin{lemma}\label[lemma]{lem:exclusive-completeness}
  Let $G$ be the graph obtained at any point during the construction in which $u, v, w \in V(G)$ were made exclusive using a copy of~$M$, denoted as $M$ in a slight abuse of notation, and let $(A, B)$ be a bipartition of~$V(G)$.
  Suppose that
  \begin{lemenum}
  \item $G[A]$ is a cluster graph with at most $d$ clusters,
  \item $G[B \setminus V(M)] \in \Pi$,
  \item at least one of~$u, v, w$ is in~$A$, and
  \item $u, v, w$ are each adjacent only to some subset of~$\{u, v, w\}$ in~$G[B]$.
  \end{lemenum}
  Then, $(A, B \cup (V(M) \setminus \{u, v, w\}))$ is a \pipartition\ for~$G$.
\end{lemma}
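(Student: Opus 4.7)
The plan is to verify the two defining conditions of a \pipartition{} for $G$: namely, that $G[A]$ is a cluster graph with at most $d$ clusters, and that $G[B'] \in \Pi$, where $B' := B \cup (V(M) \setminus \{u,v,w\})$. The first is immediate, since $A$ is unchanged and condition~(i) guarantees it. All the work is in showing $G[B'] \in \Pi$.

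The key structural lemma I would establish first is the following \emph{separation property}: in $G[B']$, every vertex in $V(M) \cap B'$ has all its neighbors inside $V(M) \cap B'$. There are two cases. For a helper vertex $x \in V(M)\setminus\{u,v,w\}$, \cref{inv:anchor-helper} says that $x$'s only neighbors in $G$ are its $M$-neighbors together with the anchors $a^1_1, a^1_2$; the anchors are in $A$ by \cref{inv:anchor-helper}, so $x$'s neighbors in $B'$ lie in $V(M) \cap B'$. For a vertex in $\{u,v,w\} \cap B$, condition~(iv) forces its $G[B]$-neighbors to lie in $\{u,v,w\}$, and moving from $B$ to $B'$ only adds edges to helper vertices of~$M$. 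In either case, neighbors in $B'$ lie within $V(M) \cap B'$. Consequently, in $G[B']$ there are no edges between $V(M) \cap B'$ and $B \setminus V(M) = B' \setminus V(M)$.

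Now suppose, for contradiction, that $G[B']$ contains a forbidden induced subgraph $H$ for $\Pi$. Since by assumption $\mathcal{H}$ contains only connected forbidden subgraphs, $V(H)$ is connected, and the separation property implies that $V(H)$ is entirely contained either in $B' \setminus V(M) = B \setminus V(M)$ or in $V(M) \cap B'$. The first case directly contradicts condition~(ii), which asserts $G[B \setminus V(M)] \in \Pi$. In the second case, condition~(iii) ensures that at least one of $u,v,w$ lies in $A$, so $V(M) \cap B' \subsetneq V(M)$. Therefore $|V(H)| \le |V(M)\cap B'| < |V(M)|$, which contradicts the minimality of $M$ among forbidden induced subgraphs of $\Pi$. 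The main obstacle in the argument is keeping track of exactly which neighbors each vertex of $V(M)$ can have in $B'$, but once the separation property is established via \cref{inv:anchor-helper} and condition~(iv), the rest is a clean case analysis exploiting the minimality of $M$.
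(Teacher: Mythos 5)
Your proof is correct in substance and follows essentially the same route as the paper's: both arguments reduce to showing that any forbidden subgraph in $G[B']$ would have to be confined to one side of the cut between $V(M)\cap B'$ and $B\setminus V(M)$ (you phrase this as a separation property, the paper traces the crossing edge of a hypothetical copy of $M$), and both then conclude via condition~(ii) on one side and the minimality of $M$ together with condition~(iii) on the other.

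One point needs repair, though. You justify "the anchors are in $A$" by citing \cref{inv:anchor-helper}, but the first sentence of that invariant only asserts anchor placement for \pipartition s with at most $d$ clusters in $G[A]$ --- and whether $(A,B')$ is a \pipartition{} is exactly what the lemma is proving, so this citation is circular. The paper closes this loop by re-deriving the fact from condition~(i) alone: each anchor $a$ is attached to $d+1$ copies of $M$ whose non-anchor vertices are helper vertices with no other neighbors, so if $a\in B$ then each such copy contributes a vertex to $A$ and these vertices are pairwise nonadjacent, forcing more than $d$ clusters in $G[A]$ and contradicting~(i). Your separation property (and hence your whole argument) goes through once you substitute this derivation for the appeal to the invariant; the structural parts of \cref{inv:anchor-helper} that you use about helper-vertex neighborhoods are fine as cited, since those are properties of the graph rather than of a partition.
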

\begin{proof}
  Without loss of generality, by symmetry, we may assume that $u \in A$.
  Clearly, it suffices to prove that $G[B \cup (V(M) \setminus \{u, v, w\})] \in \Pi$.
  Suppose that there is a copy $M'$ of $M$ contained in $G[B \cup (V(M) \setminus \{u, v, w\})]$ as an induced subgraph.
  Since $G[B \setminus V(M)] \in \Pi$, graph $M'$ contains a vertex of $V(M)$.
  Since $u \in A$ and thus $|V(M) \setminus A| < |V(M')|$, graph $M'$ moreover contains a vertex of $B \setminus V(M)$.
  By condition~(iv), $v$ and $w$ are adjacent in $G[B]$ only to some subset of $\{v, w\}$. Since $M$ is connected, there is an edge $e$ in $M'$ between $V(M) \setminus \{v, w\}$ and $B \setminus V(M)$.
  Indeed, since $u \in A$, edge $e$ contains a vertex of $V(M) \setminus \{u, v, w\}$, that is, a helper vertex.
  By \cref{inv:anchor-helper}, helper vertices in $M$ do not receive further edges, and $e$ thus contains a helper vertex and an anchor vertex.
  Note that each anchor vertex~$a$ is in $A$: Each of $a$'s incident copies of $M$ consists otherwise only of helper vertices and by \cref{inv:anchor-helper} these copies are present in~$G$ as well.
  Thus, if $a$ was in $B$, then there would be more than $d$ clusters in $G[A]$, a contradiction to assumption~(i).
  Thus, $e$ contains a vertex in $A$, a contradiction to the fact that $e$ is in $M'$ which is a subgraph of $G[B]$.
\end{proof}

The operation of making vertices exclusive can naturally be applied to (only) two vertices:
\begin{definition}\label[definition]{def:two-exclusive}
  Let~$u, v \in V(G)$ be two vertices such that, if they both are dial vertices, then they are contained in the same dial.
  By \emph{making~$u$ and $v$ exclusive} we~mean:
  \begin{lemenum}
  \item introducing a new helper vertex~$x$;
  \item making $x$ adjacent to both $a^1_1$ and $a^1_2$; and
  \item making $u$, $v$, and $x$ exclusive.
  \end{lemenum}
\end{definition}
\noindent Note that \cref{lem:exclusive-soundness,lem:exclusive-completeness} hold analogously.

We next explain a generic selection gadget construction, and then use it to construct an instance-selection gadget and, for each instance and color, vertex-selection gadgets.

\subsection{Generic Selection Gadget and Instance Selection}\label{sec:nopk-instsel}

We now introduce a generic selection gadget that we use for both instance and vertex selection.
The inner workings of the gadget use the necessary pushes along a binary-tree-like structure as outlined in \cref{sec:nopk-example}; refer to \cref{fig:instance-select,fig:selcons} for examples.
That is, the gadget is constructed such that it allows a trivial \pipartition~$(A, B)$ for the resulting graph with $d + 1$ clusters in $G[A]$, one more than allowed.
This cluster is a singleton, called the activator vertex.
Pushing it to $B$ results in a forbidden induced subgraph in~$G[B]$ requiring subsequent pushes to~$A$.  Each of these pushes to~$A$ will create a~$P_3$ involving two anchors, meaning that the third vertex has to be pushed to~$B$.
This again creates a forbidden subgraph in $B$ and so on.
The leaves in the resulting tree-like structure correspond to the selection to be made.
That is, there is a set of dial vertices, which we call {\em choice} vertices below, which are normally in~$A$.
Through a path of pushes in the binary-tree-like structure, one of the choice vertices will be pushed to~$B$.
This push will in turn activate other gadgets.

For use as an instance-selection gadget, we need to take special care so that the number of clusters used is roughly logarithmic in the number of instances.
We achieve this by using only two clusters (represented by anchors and their dials) per level in the binary-tree-like structure of pushes; see \cref{fig:selcons}.
For use as a vertex-selection gadget, to bound the number of clusters in the size of the largest instance, we need to ensure that all the vertex-selection gadgets share their corresponding clusters.
We achieve this by grouping the gadgets according to the groups of anchors above; each gadget uses only anchors in their corresponding group and shares these anchors with all other gadgets in this group.
Essentially, the operation of vertices joining dials makes it possible to define the selection gadgets in a relatively local way.

\subsubsection*{Construction}
We use the following (generic) construction, called $\selcon(p, q)$, both for selecting an instance and for selecting the independent-set vertices in that instance.
For this purpose, fix two construction parameters~$p, q \in \mathbb{N}$, where $p$ specifies which anchors (and dials) we use when constructing the gadget and $q$~specifies how many possible choices shall be modeled.
Herein, we require that $q$~is a power of two.
For example, in the instance-selection gadget we will set $p = 2$ and~$q = t$.
Refer to \cref{fig:selcons} for an example of the construction.

\begin{figure}[t]
  \centering
  \begin{tikzpicture}[new set = alvl1, new set = blvl1, on grid, scale = .75]
    \input{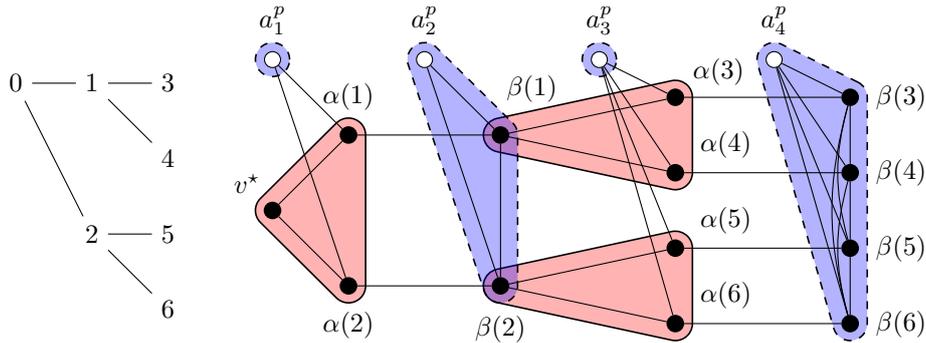}

    \begin{scope}[xshift = -4.5cm, yshift = 2.25cm]
      \graph[trie, simple]{
        0 -- {
          1 -- {
            3,
            4
          },
          2 -- {
            5,
            6
          }
        }
      };
    \end{scope}

    \newcommand\spacer{2.3cm}

    \node [bvertex, label = {[label distance = 0mm]110:$v^\star$}] (vs) at (0, 0) {};

    \node [bvertex, set = alvl1, above right = of vs, xshift = 0cm, label = {[label distance = 1mm]above:$\alpha(1)$}] (al1) {};
    \node [bvertex, set = alvl1, below right = of vs, xshift = 0cm, label = {[label distance = 1mm]below:$\alpha(2)$}] (al2) {};

    \node [avertex, above left = of al1, label = {[label distance = 1mm]above:$a^p_1$}] (an1) {};

    \node [bvertex, set = blvl1, right = 2cm of al1, label = {[label distance = 2mm, xshift = -.6mm]87:$\beta(1)$}] (be1) {};
    \node [bvertex, set = blvl1, right = 2cm of al2, label = {[label distance = 1.5mm]below:$\beta(2)$}] (be2) {};

    \node [avertex, above left = of be1, label = {[label distance = 1mm]above:$a^p_2$}] (an2) {};

    \node [avertex, right = \spacer of an2, label = {[label distance = 1mm]above:$a^p_3$}] (an3) {};
    \node [avertex, right = \spacer of an3, label = {[label distance = 1mm]above:$a^p_4$}] (an4) {};

    \node [bvertex, right = \spacer of be1, yshift = .5cm, label = 20:$\alpha(3)$] (al3) {};
    \node [bvertex, right = \spacer of be1, yshift = -.5cm, label = {[label distance = 1mm]20:$\alpha(4)$}] (al4) {};
    \node [bvertex, right = \spacer of be2, yshift = .5cm, label = {[label distance = 1mm]20:$\alpha(5)$}] (al5) {};
    \node [bvertex, right = \spacer of be2, yshift = -.5cm, label = {[label distance = 1mm]20:$\alpha(6)$}] (al6) {};

    \node [bvertex, right = \spacer of al3, label = {[label distance = 1mm]right:$\beta(3)$}] (be3) {};
    \node [bvertex, right = \spacer of al4, label = {[label distance = 1mm]right:$\beta(4)$}] (be4) {};
    \node [bvertex, right = \spacer of al5, label = {[label distance = 1mm]right:$\beta(5)$}] (be5) {};
    \node [bvertex, right = \spacer of al6, label = {[label distance = 1mm]right:$\beta(6)$}] (be6) {};

    \graph{
      (vs) -- {(al1), (al2)};

      (an1) -- [bend right = 0] (alvl1);

      (alvl1) -- (blvl1);

      (an2) -- [bend right = 0] (blvl1);
      (be1) -- (be2);

      (be1) -- {(al3), (al4)};
      (be2) -- {(al5), (al6)};

      (an3) -- [bend right = 0] {(al3), (al4), (al5), (al6)};

      (an4) -- [bend right = 0] {(be3), (be4), (be5), (be6)};
      {(be3), (be5)} -- {(be4), (be6)};
      (be4) -- (be5);
      (be3) -- [bend right = 15] {(be5), (be6)};
      (be4) -- [bend right = 15] (be6);

      {(al3), (al4), (al5), (al6)} -- {(be3), (be4), (be5), (be6)};
    };

    \newcommand\dialr{3mm}
    \newcommand\forbr{3mm}

    \begin{pgfonlayer}{bg}
      \draw[forb]
      \hedgeiii{al1}{al2}{vs}{\forbr}
      \hedgeiii{be1}{al3}{al4}{\forbr}
      \hedgeiii{be2}{al5}{al6}{\forbr};

      \draw[dial]
      \hedgei{an1}{\dialr}
      \hedgeiii{an2}{be1}{be2}{\dialr}
      \hedgei{an3}{\dialr}
      \hedgeiii{an4}{be3}{be6}{\dialr};
    \end{pgfonlayer}

  \end{tikzpicture}
  \caption{Left: An example for the tree $T$ used in $\selcon(p, q)$
    for $q = 4$.
    Right: Parts of the selection gadget constructed by $\selcon(p, q)$ using $T$ where $q = 4$ and $M$ is a $P_3$.
    Vertex~$v^\star$ is the activator vertex and vertices $\beta(3)$ through $\beta(6)$ are the choice vertices.
    Copies of $M$ are highlighted with a red region with solid outline, that is, the corresponding vertices have been made exclusive.
    The parts of the dials of the anchors that are used in the construction are highlighted with a blue region with dashed outline.
    Not shown are the gadgets used for fixing the anchors in $A$ and parts of the dials that possibly were previously constructed.}
  \label[figure]{fig:selcons}
\end{figure}

We introduce a new vertex~$v^*$.
Our goal is to construct a structure in which, starting from a trivial \pipartition~$(A, B)$, putting $v^* \in B$ triggers an avalanche of pushes according to a path in a binary-tree-like structure.
To this end, fix a rooted binary tree $T$ with $q$~leaves (corresponding to the $q = t$~instances of \pCIS\ for the instance-selection gadget).
Say a vertex in~$T$ is on \emph{level} $i \in [\log q]$ if its distance from the root is~$i$.
For $i \in [\log q]$, $L_i$ denotes the set of vertices at level~$i$.
The tree~$T$ will not be part of the constructed graph; we use it only as a scaffold to define the actual vertices in the graph.

For each vertex $v \in V(T)$ except the root, proceed as follows.
Introduce two vertices~$\alpha(v), \beta(v)$ into~$G$.
Let $i$ be the level of~$v$.
Connect $\alpha(v)$ to both~$a^p_{2i - 1}$ and~$\beta(v)$.
Make~$\beta(v)$ join~$D_{2i}^p$.
Next, for each vertex~$u \in L_i$, $i \in \{0, \ldots, \log q\}$, let $v$, $w$ be the two children of~$u$ in~$T$ and make $\beta(u), \alpha(v), \alpha(w)$ exclusive.
If $i = 0$, then let $v, w$ be the two vertices in level~$1$ in $T$ and make $v^*, \alpha(v), \alpha(w)$ exclusive instead.
This completes the construction of the selection gadget.
Vertex~$v^*$ is a dial vertex.
Each $\alpha(v)$, $v \in V(T)$, is a volatile vertex.
Each $\beta(v)$, $v \in V(T)$, is a dial vertex.

We now verify that the above construction maintains all invariants.
Observe that \cref{inv:anchor-helper} is maintained as none of the previously introduced helper vertices receive new edges.
\Cref{inv:dials}-(i) is maintained: Each dial still induces a clique, because each introduced dial vertex joined some dial, except for $v^*$ which is not adjacent to any other dial vertex.
\Cref{inv:dials}-(ii) is maintained as well, because each dial vertex is made adjacent either only to some vertices of one specific dial, or to non-dial vertices.
\Cref{inv:volatile}-(i) is maintained since no vertex joins the referenced dials.
The only volatile vertices that have been introduced are the vertices $\alpha(v)$, $v \in V(T)$.
These vertices have been made adjacent to only one anchor $a_\ell^p$ where $\ell$ is odd.
By \cref{inv:volatile}-(i) $a_\ell^p$'s dial is a singleton and thus \cref{inv:volatile}-(ii) is maintained.

Denote the constructed gadget as $\selcon(p, q)$, and say that $v^*$ is the \emph{activator vertex}, and that the vertices in $\{\beta(v) \mid v \in L_{\log q}\}$ are the \emph{choice vertices}.
We fix an arbitrary order of the choice vertices, so that we may speak of the $i$th choice vertex without confusion.

\begin{lemma}\label[lemma]{lem:selcons}
  Let $G'$ be the graph before applying $\selcon(p, q)$ and $G$
  the graph afterwards.

  \begin{lemenum}
  \item If \pipartition~$(A, B)$ has at most~$d$ clusters in~$G[A]$
    and the activator vertex is in~$B$, then at least one choice
    vertex is in~$B$.
  \item% 
    If there is a \pipartition~$(A', B')$ for~$G'$ with~$d$ clusters in~$G'[A']$, then there is a \pipartition~$(A, B)$ for~$G$ with $d + 1$ clusters, where the activator vertex is a singleton cluster and each choice vertex is in~$A$.
    If $(A', B')$ is \friendly{\cal D} for some dial set $\cal D$, then $(A, B)$ is \friendly{\cal D}.
  \item If $G'$ has a \pipartition~$(A', B')$ that is \friendly{D_i^p}
    and such that $G'[A']$ contains at most~$d$ clusters, then, for
    each $i \in [q]$, there is a \pipartition~$(A, B)$ of~$G$, such
    that graph~$G[A]$ contains at most~$d$ clusters, and out of all
    choice vertices only the~$i$th one is in~$B$ (and, necessarily,
    the activator vertex is in~$B$). Moreover, the choice vertex that
    is contained in~$B$ is isolated in~$G[B]$.
\end{lemenum}
\end{lemma}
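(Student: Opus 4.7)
My plan for part~(i) is to argue inductively down a root-to-leaf path in~$T$. The base case uses $v^* \in B$; at each step, I would apply \cref{lem:exclusive-soundness} to the exclusive triple $(\beta(u), \alpha(v), \alpha(w))$ at the current on-path node $u$ (or $(v^*, \alpha(v_1), \alpha(v_1'))$ at the root) to conclude that one of the children's $\alpha$-vertices, say $\alpha(v)$, lies in~$A$. The key local step is then to force $\beta(v) \in B$: since $\alpha(v)$ is adjacent in~$G$ to both the anchor $a^p_{2\,\mathrm{lvl}(v)-1}$ and to $\beta(v)$, and by \cref{inv:dials} these two vertices live in different dials and are non-adjacent, placing $\beta(v) \in A$ would induce a $P_3$ in $G[A]$, contradicting that $G[A]$ is a cluster graph. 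Iterating until I reach a leaf $\ell$ yields $\beta(\ell) \in B$, which is a choice vertex.

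For part~(ii), I will construct the \pipartition\ by extending $(A', B')$ as follows: put $v^*$ into~$A$ (as a new singleton cluster), let each $\beta(v)$ join its dial $D^p_{2\,\mathrm{lvl}(v)}$ in~$A$, and place each $\alpha(v)$ and every new helper vertex into~$B$. I would verify that $G[A]$ is a cluster graph with exactly $d+1$ clusters, the only new one being $\{v^*\}$, and that $G[B] \in \Pi$ via iterated application of \cref{lem:exclusive-completeness} to each new $M$-copy. Condition~(iii) of that lemma is witnessed by the $\beta(u)$ (or $v^*$) of each triple being in~$A$, and condition~(iv) follows from the fact that each $\alpha(x)$'s only neighbors outside its unique $M$-copy are $\beta(x) \in A$ and an anchor. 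Friendliness with respect to any $\mathcal{D}$ is preserved because only the even-indexed dials $D^p_{2i}$ are enlarged and no pre-existing clusters are merged.

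For part~(iii), given the $i$-th leaf~$\ell$ and the root-to-$\ell$ path $v_0 = \mathrm{root}, v_1, \ldots, v_{\log q} = \ell$, I will build a ``path-switched'' partition: $v^*$ and every on-path $\beta(v_j)$ go into~$B$, every on-path $\alpha(v_j)$ goes into~$A$, each off-path $\beta(u)$ goes to~$A$ (joining its dial), each off-path $\alpha(u)$ goes to~$B$, and helpers go to~$B$. I would verify that the cluster count stays at~$d$: $v^*$ is no longer a singleton cluster, each on-path $\alpha(v_j)$ merely joins the singleton cluster $\{a^p_{2j-1}\}$ without merging other clusters (using \cref{inv:dials}-(ii) and the friendliness hypothesis together with \cref{inv:volatile} to guarantee that $D^p_{2j-1}$ was a singleton cluster), and off-path $\beta(u)$'s join existing dial clusters. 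To show $G[B] \in \Pi$, I would again iterate \cref{lem:exclusive-completeness}, the $A$-witness in each triple being the on-path child's $\alpha$ when the parent is on the path (or is the root), and being $\beta(u)$ itself when $u$ is off-path. Finally, $\beta(\ell)$ is the unique choice vertex in~$B$ since all other leaves are off-path, and it is isolated in~$G[B]$ because as a leaf's $\beta$ it belongs to no exclusive $M$-copy and its remaining neighbors (dial mates and $\alpha(\ell)$) all lie in~$A$. The main obstacle will be the verification of condition~(iv) of \cref{lem:exclusive-completeness} for each $M$-copy in part~(iii); this amounts to tracking, for every $\alpha$ and $\beta$ vertex placed in~$B$, that its external $G[B]$-neighbors stay inside the current exclusive triple, which reduces to bookkeeping via \cref{inv:dials,inv:volatile} and the locality that each $\alpha$ and each non-root $\beta$ appears in at most one newly introduced $M$-copy.
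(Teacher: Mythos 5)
Your proposal is correct and follows essentially the same route as the paper's proof: the level-by-level induction combining \cref{lem:exclusive-soundness} with the anchor-$P_3$ forcing for part~(i), the ``all $\beta$ in $A$, all $\alpha$ in $B$'' extension for part~(ii), and the path-switched partition verified triple-by-triple via \cref{lem:exclusive-completeness} together with \cref{inv:dials,inv:volatile} for part~(iii). The only differences are cosmetic (your case analysis of which vertex of each exclusive triple witnesses membership in $A$ is, if anything, stated more cleanly than in the paper).
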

\begin{proof}
  \proofparagraphf{(i).} % 
  Note that there are
  $d$~anchors and each anchor is in~$A$. Hence, each cluster in~$G[A]$
  consists of an anchor and possibly further vertices. By assumption,
  we have $v^* \in B$. We now prove by induction that for each
  $i \in [\log q]$, there is at least one vertex~$v \in L_i$ with
  $\beta(v) \in B$, yielding the statement. Consider the case $i = 1$.
  Let $u, v \in L_1$. As $v^* \in B$, we have that either $\alpha(u)$
  or $\alpha(v)$ is in~$A$; say $\alpha(u) \in A$ and the other case is
  symmetric. Since $\alpha(u)$ is adjacent to both $a^2_{2i - 1}$
  and $\beta(u)$, we have $\beta(u) \in B$ as, otherwise,
  vertices~$a^2_{2i - 1}, \alpha(u), \beta(u)$ would form an
  induced~$P_3$ in~$G[A]$. That is, the statement holds if $i = 1$.
  Now suppose that for some $u \in L_{i - 1}$, $i > 1$, we have
  $\beta(u) \in B$. Consider the children~$v, w$ of~$u$ in~$T$.
  Since $\beta(u), \alpha(v), \alpha(w)$ are made exclusive, either
  $\alpha(v)$ or~$\alpha(w)$ is in~$A$. Say $\alpha(v) \in A$ and the
  other case is symmetric. Note that $\alpha(v)$ is adjacent to both
  $a^2_{2i - 1}$ and $\beta(v)$. Hence, $\beta(v) \in B$ since,
  otherwise, $a^2_{2i -1 }, \alpha(v), \beta(v)$ would induce a $P_3$
  in~$G[A]$. Thus, indeed, for some $v \in L_i$ we
  have~$\beta(v) \in B$.

  \proofparagraph{(ii).} For the second statement, let $(A', B')$ be a \pipartition\
  for~$G'$. % 
  Construct a \pipartition~$(A, B)$ for~$G$ as follows.
  Put $(A, B) = (A', B')$.
  Put $v^* \in A$.
  For each $v \in T$ at level $i > 0$, put $\alpha(v) \in B$ and $\beta(v) \in A$.
  This concludes the construction.
  Clearly, each choice vertex is in~$A$, as required.

  We claim that $G[A]$ is a cluster graph with~$d + 1$ clusters.
  Note that~$v^*$ is not adjacent to any vertex in~$A$ and hence constitutes a singleton cluster.
  By \cref{inv:volatile}, each anchor~$a_i^j$ whose dial~$D_i^j$ is not a singleton is contained in a cluster in~$G[A]$ whose vertex set is contained in~$D_i^j$.
  Apart from~$v^*$, the only vertices from the construction placed into~$A$ are contained in dials which are not singletons, and hence, $G[A]$ is a cluster graph with $d + 1$~clusters.
  From this fact it is also immediate that, if $(A, B)$ is a \pipartition\ and $(A', B')$ is friendly with respect to the dials $\cal D$, then $(A, B)$ is friendly with respect to the dials $\cal D$.

  To conclude the proof of statement~(ii), we apply \cref{lem:exclusive-completeness} to show that $G[B] \in \Pi$.
  Note that all vertices in $B \setminus B'$ are part of a triple of vertices that has been made exclusive by \selcon.
  Furthermore, no two vertices between two different triples have been made adjacent by \selcon.
  Thus, it is enough to show that the conditions in \cref{lem:exclusive-completeness} are satisfied.
  Note that each triple of exclusive vertices contains one vertex from~$A$.
  Furthermore, for each vertex $v \in T$, $\alpha(v)$ is connected in~$B$ only to $\alpha(w)$ where $w$ is the sibling of~$v$ in~$T$.
  Thus, $G[B] \in \Pi$.

  \proofparagraph{(iii).}
  For the third statement, let $(A', B')$ be a \pipartition\ for~$G'$.
  Given $i \in [q]$, we construct a \pipartition~$(A, B)$ for~$G$ as follows (as before, we ignore helper vertices).
  Set $A = A'$, $B = B'$, and note that $v^* \in B$.
  Pick a path~$P$ in~$T$ from the root~$r$ to the leaf~$v_\ell$ corresponding to the $i$th choice vertex, call it~$\beta(v_\ell)$.
  For each vertex~$v \in V(T) \setminus \{r\}$, if $v \in V(P)$, put $\alpha(v) \in A$ and $\beta(v) \in B$.
  Otherwise, if $v \notin V(P)$, put $\alpha(v) \in B$ and $\beta(v) \in A$.
  Clearly, $\beta(v_\ell) \in B$ and $\beta(v_\ell)$ is isolated in~$G[B]$, as required.

  We first show that $G[A]$ is a cluster graph with at most~$d$ clusters.
  Suppose that $G[A]$ contains an induced~$P_3$, say~$Q$.
  Clearly, $Q$ contains at least one vertex introduced by the construction \selcon.
  As all helper vertices are in~$B$, path $Q$ does not involve helper vertices.
  By \cref{inv:dials}, $Q$ involves a volatile vertex; that is, $\alpha(v) \in V(Q)$ for some~$v \in V(T)$.
  Moreover, $v \in V(P)$ as otherwise $\alpha(v) \in B$.
  By construction, apart from helper vertices $\alpha(v)$ is adjacent in $G$ only to $\beta(v)$, $\alpha(w)$ (where $w$ is $v$'s sibling in~$T$), and $a_{2i -1}^p$, where $i$ is $v$'s level in~$T$.
  As $\beta(v), \alpha(w) \in B$ by definition of~$(A, B)$, path~$Q$ contains~$a_{2i - 1}^p$.
  As $D_{2i - 1}^p$ is a singleton by \cref{inv:volatile}, that is, $D_{2i - 1}^p = \{a_{2i - 1}^p\}$, and since $(A', B')$ is \friendly{D_{2i - 1}^p}, we have that $D_{2i - 1}^p$ is a singleton cluster in~$G[A']$.
  Recall that, by construction, the only new vertices adjacent to $a_{2i - 1}^p$ are vertices~$\alpha(x)$ for $x \in L_i$.
  By definition of $(A, B)$, only one of these vertices~$\alpha(x)$ is in~$A$, namely~$\alpha(v)$.
  Hence, $\alpha(v)$ is the only neighbor of $a_{2i - 1}^p$ in $G[A]$, a contradiction to~$Q$ being an induced~$P_3$ in $G[A]$.
  To see that there are at most $d$~clusters, observe that each vertex in~$A$ is adjacent to one of the anchors and thus, there are at most~$d$ connected components.

  To show that $G[B] \in \Pi$ by \cref{lem:exclusive-completeness} it remains to show that each triple of three vertices that were made exclusive include one vertex in~$A$, and that they are adjacent in~$G[B]$ only to some subset of themselves (apart from helper vertices).
  By construction, the only triples of exclusive vertices are~$\beta(u)$, $\alpha(v)$, $\alpha(w)$ for some $u \in V(T)$ and its children~$v, w$.
  (The case of $v^*$ is analogous.)
  Either $v$ or~$w$ is not in~$V(P)$, and hence, either~$\alpha(v)$ or~$\alpha(w)$ is in~$A$, that is, at least one vertex in the triple is in~$A$, as required.
  It remains to show the condition on their adjacencies.
  If~$\beta(u) \in B$, then $\alpha(u) \in A$ and, hence, regardless of whether $\beta(u) \in B$, a possible connection outside of the triple must involve~$\alpha(v)$ or~$\alpha(w)$, say it involves $\alpha(v)$.
  Vertex~$\alpha(v)$ is only adjacent to $\beta(u)$, to some anchor, and to $\beta(v)$.
  If $\alpha(v) \in B$, then $\beta(v) \in A$.
  Thus, indeed $\beta(u)$, $\alpha(v)$, $\alpha(w)$ are only connected within themselves in~$G[B]$ (apart from helper vertices).
  This shows that $(A, B)$ is a \pipartition\ with $d$ clusters in~$G[A]$.
\end{proof}

\subsubsection*{Instance Selection}
As mentioned before, to construct the \emph{instance-selection gadget}, we apply $\selcon(2, t)$.
Fix a bijection~$\phi$ from the set of instances $[t]$ to the choice vertices produced by the construction.
We use $\phi$ later to denote the choice vertex corresponding to an instance.

\subsection{Vertex Selection}\label{sec:nopk-vertsel}

We apply~$\selcon$ to create vertex-selection gadgets for each instance and each color.
Each vertex-selection gadget selects one vertex of the gadget's color into the independent set when activated by putting its activator vertex into~$B$ (which will be effected by the instance-selection gadget).
The vertex-selection gadgets for each instance are distinct, but they use dials which are shared by all instances.
See \cref{fig:composition-struc} for illustration.

In the first part of the construction of the vertex-selection gadgets, for each instance $r \in [t]$ and color $i \in [k]$, we carry out $\selcon(3 + i, n)$ to introduce the \emph{vertex-selection gadget} for instance~$r$ and color~$i$.
Let $\psi^*_{r, i}$ be the corresponding activator vertex and fix a bijection~$\psi_{r, i}$ from the vertices $V(G_r)$ of color~$i$ to the choice vertices.
Make $\psi^*_{r, i}$ join $D^{3}_{1 + i}$.
Intuitively, if the activator vertex~$\psi^*_{r, i}$ is put into~$B$, the subgraph constructed by~$\selcon(3 + i, n)$ enforces the push of a choice vertex into~$B$, which by bijection~$\psi_{r, i}$ corresponds in a one-to-one fashion to the vertices of color~$i$ in instance~$r$.
This is how the selection of an independent-set vertex is modelled.

\begin{figure}[t]
  \centering
  \begin{tikzpicture}
    \input{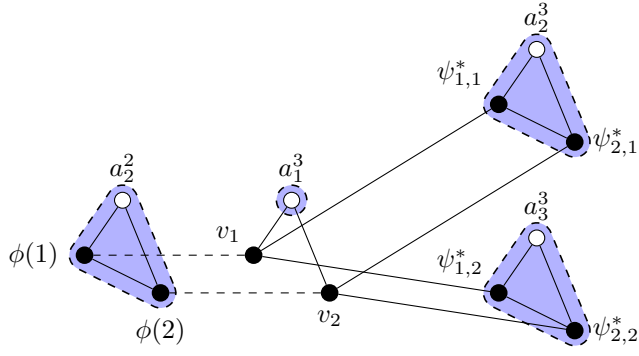}
    \newcommand\dialvdiffa{0.5cm}
    \newcommand\dialhdiffa{2cm}
    \newcommand\shiftdiff{.5cm}

    \node[avertex, label = above:$a_{2}^2$] (a2) at (0,0) {};
    \node[bvertex, below = \dialvdiffa of a2, xshift = -\shiftdiff, label = {[label distance = 1mm]left:$\phi(1)$}] (phi1) {}
    edge (a2);
    \node[bvertex, below = 2 * \dialvdiffa of a2, xshift = \shiftdiff, label = {[label distance = 1mm]below:$\phi(2)$}] (phi2) {}
    edge (phi1)
    edge (a2);

    \node[avertex, right = \dialhdiffa of a2, label = above:$a_{1}^3$] (a31) {};
    \node[bvertex, below = \dialvdiffa of a31, xshift = -\shiftdiff, label = {[label distance = -.5mm]above left:$v_1$}] (v1) {}
    edge[dashed] (phi1)
    edge (a31);
    \node[bvertex, below = 2 * \dialvdiffa of a31, xshift = \shiftdiff, label = below:$v_2$] (v2) {}
    edge[dashed] (phi2)
    edge (a31);

    \node[avertex, right = 1.5 * \dialhdiffa of a31, yshift = 4*\dialvdiffa, label = above:$a_2^3$] (a32) {};
    \node[bvertex, below = \dialvdiffa of a32, xshift = -\shiftdiff, label = {[label distance = 0mm]above left:$\psi_{1, 1}^*$}] (psi11) {}
    edge (a32)
    edge (v1);
    \node[bvertex, below = 2 * \dialvdiffa of a32, xshift = \shiftdiff, label = right:$\psi_{2, 1}^*$] (psi21) {}
    edge (a32)
    edge (psi11)
    edge (v2);

    \node[avertex, right = 1.5 * \dialhdiffa of a31, yshift = -1*\dialvdiffa, label = above:$a_3^3$] (a33) {};
    \node[bvertex, below = \dialvdiffa of a33, xshift = -\shiftdiff, label = {[label distance = 0mm]above left:$\psi_{1, 2}^*$}] (psi12) {}
    edge (a33)
    edge (v1);
    \node[bvertex, below = 2 * \dialvdiffa of a33, xshift = \shiftdiff, label = right:$\psi_{2, 2}^*$] (psi22) {}
    edge (a33)
    edge (psi12)
    edge (v2);

    \begin{pgfonlayer}{bg}
      \newcommand\dialr{2mm}
      \draw[dial]
      \hedgeiii{a2}{phi2}{phi1}{\dialr}
      \hedgei{a31}{\dialr}
      \hedgeiii{a32}{psi21}{psi11}{\dialr}
      \hedgeiii{a33}{psi22}{psi12}{\dialr};
    \end{pgfonlayer}

  \end{tikzpicture}
  \caption{% 
    Illustration of the second part of the construction of the vertex-selection gadgets in which we connect them to the instance-selection gadget.
    In this example, there are two instances, represented by the choice vertices $\phi(1)$ and $\phi(2)$ of the instance-selection gadget.
    Each instance has two colors; $\psi_{1,1}^*$, $\psi_{2,1}^*$, $\psi_{1,2}^*$, and $\psi_{2,2}^*$ are the activator vertices of the corresponding vertex-selection gadgets.
    Anchor vertices are white.
    Dials are shown by blue regions with dashed outlines.
    Dashed edges mean that the endpoints have been made exclusive.}
  \label{fig:vertex-select}
\end{figure}

In the second part of the construction of the vertex-selection gadgets, we introduce a way to activate the vertex-selection gadgets of all colors if some instance $r \in [t]$ has been chosen.
See \cref{fig:vertex-select} for an illustration.
To achieve this, for each $r \in [t]$, we carry out the following steps.
Introduce a volatile vertex~$v_r$.
Make $\phi(r)$ and $v_r$ exclusive.
Make $v_r$ adjacent to $a^3_1$ and, for each $i \in [k]$, make $v_r$ adjacent to $\psi^*_{r, i}$.
This concludes the construction of the vertex-selection gadgets.

Intuitively, the selection of instance~$r$ is indicated by placing~$\phi(r) \in B$.
Since $\phi(r)$, and $v_r$ are exclusive, $v_r \in A$.
Vertex~$v_r$ forms a $P_3$ with $a^3_1$ and each~$\psi^*_{r, i}$.
Hence, the activator vertices~$\psi^*_{r, i}$ of each vertex-selection gadget for instance~$r$ are in~$B$.
This enforces the selection of an independent-set vertex from each color.

We now verify after this construction that the invariants are maintained.
\Cref{inv:anchor-helper} is maintained because it is maintained by the operations of $\selcon$ and making vertices exclusive.
\Cref{inv:dials,inv:volatile} is maintained in the first part of the construction because $\selcon$ maintains these invariants.
In the second part of the construction, no dial vertices are added, giving \cref{inv:dials} and \cref{inv:volatile}-(i).
\Cref{inv:volatile}-(ii) holds for~$a^3_1$ since $D^{3}_{1}$ is a singleton.
For all the other anchors \cref{inv:volatile}-(ii) holds because the invariant was satisfied before the second part of the construction, and because each volatile vertex~$v_r$ is only made adjacent to the single anchor~$a^3_1$.
Thus, the construction of the vertex-selection gadgets maintains \cref{inv:anchor-helper,inv:dials,inv:volatile}.

\begin{lemma}\label[lemma]{lem:vertsel}
  Let $G$ be the graph after constructing the vertex-selection gadgets.
  \begin{lemenum}
  \item If $G$ admits a \pipartition~$(A, B)$ with $d$ clusters
    in~$G[A]$, then there is an instance~$r \in [t]$ such that, for
    each color $i \in [k]$, there is at least one vertex $v \in V(G_r)$
    of color~$i$ satisfying that $\psi_{r, i}(v) \in B$.
  \item \looseness=-1 For each instance $s \in [t]$ and each vertex
    subset~$V' \subseteq V(G_s)$ containing exactly one vertex of each
    color, there is a \pipartition~$(A, B)$ for $G$ such that $G[A]$
    contains at most~$d$ clusters, $\psi_{s, i}(V') \subseteq B$, and
    all other choice vertices of each vertex-selection gadget are
    in~$A$. Moreover, the choice vertices that are contained in~$B$ are
    isolated in~$G[B]$.
\end{lemenum}
\end{lemma}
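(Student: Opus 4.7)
For part~(i), I would first observe that the instance-selection activator~$v^*$ must lie in~$B$: it is not adjacent to any anchor, so if $v^* \in A$ it would constitute a cluster separate from the $d$ anchor-centered clusters, violating the cluster bound~$d$. Applying \cref{lem:selcons}(i) to the instance-selection gadget then yields some $r \in [t]$ with $\phi(r) \in B$, and since $\phi(r)$ and~$v_r$ were made exclusive, \cref{lem:exclusive-soundness} gives $v_r \in A$. For each color $i \in [k]$, if $\psi^*_{r, i}$ were in~$A$, then since $v_r$ is adjacent to both $a^3_1$ and $\psi^*_{r, i}$ while $\psi^*_{r, i} \in D^3_{1+i}$ and $a^3_1 \in D^3_1$ lie in different dials and are hence nonadjacent by \cref{inv:dials}(ii), the triple $\psi^*_{r, i}, v_r, a^3_1$ would induce a~$P_3$ in~$G[A]$. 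Hence $\psi^*_{r, i} \in B$, and applying \cref{lem:selcons}(i) to the vertex-selection gadget for $(r, i)$ delivers the desired color-$i$ choice vertex in~$B$.

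For part~(ii), I would construct a \pipartition~$(A, B)$ incrementally, mirroring the order of the graph's construction. Starting from the base graph with all anchors as singleton clusters and all helpers in~$B$, I would first invoke \cref{lem:selcons}(iii) on the instance-selection gadget with chosen index~$s$, obtaining a partition with~$d$ clusters in which $\phi(s) \in B$ is isolated in~$G[B]$ and $\phi(r) \in A$ for every $r \neq s$. Next, I would process each color $i \in [k]$ as follows: first apply \cref{lem:selcons}(ii) for every pair $(r, i)$ with $r \neq s$, and only then apply \cref{lem:selcons}(iii) for $(s, i)$ using the leaf corresponding to the color-$i$ vertex of~$V'$. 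This order is important because the (ii) applications keep all $\alpha$-vertices in~$B$, thereby preserving the friendliness of each singleton dial $D^{3+i}_{2\ell-1}$, which is the hypothesis required by the subsequent (iii) application on the same color. Each (iii) application keeps the cluster count at~$d$, whereas each (ii) application creates one additional singleton cluster at $\psi^*_{r, i}$; the subsequent join of $\psi^*_{r, i}$ into $D^{3}_{1+i}$ has no effect when $r = s$ (the activator is already in~$B$) and absorbs the freshly created singleton cluster into the cluster of $a^3_{1+i}$ when $r \neq s$, so the cluster count returns to~$d$ after all merges.

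To conclude, I would handle the second stage of the vertex-selection construction by setting $v_s \in A$ (joining the cluster of~$a^3_1$) and $v_r \in B$ for every $r \neq s$. The exclusivity constraints are met since $\phi(s) \in B$ forces $v_s \in A$, while $\phi(r) \in A$ for $r \neq s$ permits $v_r \in B$. The cluster graph side~$G[A]$ remains valid: the only new vertex placed in~$A$ is~$v_s$, whose only $A$-neighbor is~$a^3_1$ (the $\psi^*_{s, i}$'s lie in~$B$, and $v_s$ is not adjacent to any $\psi^*_{r, i}$ with $r \neq s$), so no new induced~$P_3$ is created. To establish $G[B] \in \Pi$, I would apply \cref{lem:exclusive-completeness} to each exclusivity triple: every triple contains a vertex in~$A$ and, under the tacit convention that helpers lie in~$B$, the triple vertices have $G[B]$-neighbors confined to the triple. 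The isolation in~$G[B]$ of each pushed choice vertex in a vertex-selection gadget is inherited directly from the isolation guarantee of \cref{lem:selcons}(iii).

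The main obstacle is the bookkeeping across the many incremental operations: one must verify the friendliness hypothesis of \cref{lem:selcons}(iii) and the conditions of \cref{lem:exclusive-completeness} at every invocation, in particular that the singleton dials required to serve as singleton clusters in~$G[A]$ do remain so until each \selcon\ application is processed. The ordering of~(ii)-before-(iii) per color, together with \cref{inv:volatile}(ii) ensuring that volatile vertices introduced by \selcon\ respect the dial structure, is what keeps these invariants maintainable throughout.
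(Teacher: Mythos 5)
Your proposal is correct and follows essentially the same route as the paper's proof: part~(i) via forcing $v^*\in B$, \cref{lem:selcons}-(i), exclusivity of $\phi(r)$ and $v_r$, and the $P_3$ on $a^3_1, v_r, \psi^*_{r,i}$; part~(ii) via an incremental construction alternating \cref{lem:selcons}-(ii) and (iii) while tracking friendliness of the singleton dials and the absorption of activator vertices into $D^3_{1+i}$, finishing with \cref{lem:exclusive-completeness}. The only differences are cosmetic — you interleave the (ii)/(iii) applications per color rather than per instance, and you verify the adjacency conditions of \cref{lem:exclusive-completeness} for the pairs $(\phi(r), v_r)$ somewhat more tersely than the paper does.
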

\begin{proof}
  \proofparagraphf{(i).}
  There are $d$ anchors in $G[A]$ and the activator vertex of the instance-selection gadget is not adjacent to any of the anchors.
  Thus, the activator vertex is in $B$. By \cref{lem:selcons}-(i), it follows that for at least one instance $r \in [t]$, we have $\phi(r) \in B$.
  Since $\phi(r)$ and $v_r$ are exclusive, we have $v_r \in A$.
  Since, for each $i \in [k]$, vertices $a^3_1$, $v_r$, and $\psi^*_{r, i}$ form a~$P_3$, we have that $\psi^*_{r, i} \in B$ for each $i \in [k]$.
  By \cref{lem:selcons}-(i), it follows that, for each $i \in [k]$, there is one choice vertex of the $i$th vertex-selection gadget that is in~$B$.
  Thus, for each $i \in [k]$, there is a vertex~$v \in V(G_r)$ of color~$i$ such that~$\psi_{r, i}(v) \in B$, as required.

  \proofparagraph{(ii).}
  Without loss of generality, assume that the instance-selection gadget has been constructed first, and the vertex-selection gadgets have been constructed in ascending order of instances and then colors.
  We start by showing that a partial \pipartition\ with the required properties exists after the first part of the construction, and then proceed to treat the second part.
  For the first part, we show that a suitable partial \pipartition\ exists after each call to $\selcon$.

  Let $G_0$ be the graph
  obtained after introducing the instance-selection gadget. Before
  introducing any selection gadget, the graph has a trivial
  \pipartition\ with $d$ clusters in~$A$ that is friendly with respect to each
  dial. By \cref{lem:selcons}-(iii), there is a
  \pipartition~$(A_0, B_0)$ for $G_0$ such that $G_0[A_0]$ has
  $d$~clusters, and out of all choice vertices only the $r$th one is
  in~$B$. Furthermore, this \pipartition\ is \friendly{D^{3 + i}_\ell}, $i \in [k]$.

  In the following, let $s \in [t]$ be the instance for which we want
  to construct a \pipartition. Let $G_{s -1}$ be the graph obtained
  after introducing all vertex-selection gadgets for instances
  in~$[s - 1]$. By iteratively applying \cref{lem:selcons}-(ii),
  starting with $G_0$ and $(A_0, B_0)$, we obtain that there is a
  \pipartition~$(A_{s - 1}, B_{s - 1})$ for~$G_{s - 1}$ such that, for
  each vertex-selection gadget, each activator vertex is in~$A$ (in a
  cluster together with the dial it joined) and each choice
  vertex is in~$A$. Since we joined the activator vertices to some
  dials, $G_{s - 1}[A_{s - 1}]$ has $d$ clusters. Moreover, since
  $(A_0, B_0)$ is \friendly{D^{3 + i}_\ell},
  $i \in [k]$, $(A_{s - 1}, B_{s - 1}$) is friendly with respect to these dials as well.

  Let $V' \subseteq V(G_s)$ as in the statement of the lemma.
  For each $i \in [k]$, denote by $v'_i \in V'$ the vertex of color~$i$ in~$V'$ and let $G_{s, i}$ be the graph obtained after introducing the vertex-selection gadget for instance~$s$ and color~$i$ (in the first part of the construction of the vertex-selection gadgets).
  By induction on~$i$ and by \cref{lem:selcons}-(iii), we obtain that $G_{s, i}$ admits a \pipartition~$(A_{s, i}, B_{s, i})$ with $d$~clusters in~$G_{s, i}[A_{s, i}]$ such that, for each $j \in [i]$, we have $\psi^*_{s, j} \in B$, $\psi_{s, j}(v'_j) \in B$, and such that all other choice vertices in any vertex-selection gadget are in~$A$.
  Moreover, $(A_{s, i}, B_{s, i})$ is \friendly{D^{3 + j}_\ell}, $j \in \{i, i + 1, \ldots, k\}$ (whence we can apply induction).

  Let $G_{t}$ be the graph obtained after introducing all vertex-selection gadgets for instances in~$[t] \setminus [s - 1]$.
  By applying iteratively \cref{lem:selcons}-(ii) to $G_{s, k}$ and $(A_{s, k}, B_{s, k})$ we obtain a \pipartition~$(A_t, B_t)$ for $G_t$ analogously to the \pipartition\ for $G_{r - 1}$.
  Hence, the statement of the lemma holds after the first part of the construction of the vertex-selection gadgets.
  It remains to incorporate the second part, that is, to incorporate vertices $v_r$, $r \in [t]$, into~$(A_t, B_t)$.
  Construct a \pipartition~$(A, B)$ for~$G$ from~$(A_t, B_t)$ as follows.
  Put $A = A_t$, $B = B_t$.
  For each $r \in [t] \setminus \{s\}$, put $v_r \in B$.
  Finally, put $u_s \in B$ and $v_s \in A$.

  We claim that $G[A]$ is a cluster graph with at most $d$ clusters.
  Recall that $G_t[A_t]$ contains at most~$d$ clusters (corresponding to the $d$ anchors).
  Thus, $G[A]$ has at most $d$~connected components since, for each $r \in [t]$, vertex~$v_r$ is connected to some anchor.
  To show that~$G[A]$ does not contain an induced~$P_3$, it is enough to show that, for each $r \in [t]$, either $v_r \in B$ or, for all $i \in [k]$, $\psi^*_{r, i} \in B$.
  The fact that $v_r \in B$ is trivial for~$r \neq s$; otherwise, if $r = s$, we have $\psi^*_{r, i} \in B$ by the construction of~$(A_{s, i}, B_{s, i})$.

  Note that, for each $r \in [t]$, either $\phi(r)$ or $v_r$ is in~$A$.
  Hence, by \cref{lem:exclusive-completeness} (and \cref{def:two-exclusive}), to show that $G[B] \in \Pi$, it suffices to prove, for each~$r \in [t]$, the property that vertices $\phi(r)$ and $v_r$ are adjacent in~$G[B]$ only to each other (apart from helper vertices).
  If $r \neq s$, we have $v_r \in A$ and, thus, by construction of $(A_0, B_0)$ according to \cref{lem:selcons}~(iii), that $\phi(r)$ is an isolated vertex in~$G_0[B_0]$, giving the required property.
  If $r = s$, then $\phi(s) = \phi(r) \in A$.
  For the incident edges of $v_r$, by construction of~$(A_{s, i}, B_{s, i})$, $i \in [k]$, according to \cref{lem:selcons}~(ii), for each $i \in [k]$ we have $\psi^*_{s, i} \in A$, that is, none of the non-helper neighbors of $v_r$ is in $B$.
  Thus indeed by \cref{lem:exclusive-completeness} $G[A] \in \Pi$, finishing the proof.
\end{proof}

\subsection{Verification}\label{sec:nopk-verif} % 

We now construct the verification gadgets. It is again crucial that gadgets share clusters (anchors) in order to keep the overall number of clusters in~$A$ small.
How the clusters are shared is indicated in \cref{fig:composition-struc}.
In essence, we fix a vertex ordering for each color and each instance, and an edge ordering for each instance.
Then, for each color, we use one vertex gadget that represents all the first vertices of that color, one vertex gadget that represents all the second vertices of that color, and so on.
Similarly for the edges: The first edge gadget represents all the first edges of each instance. The second edge gadget represents all the second edges and so on.

\begin{figure}[t]
  \centering
  \begin{tikzpicture}[on grid]
    \input{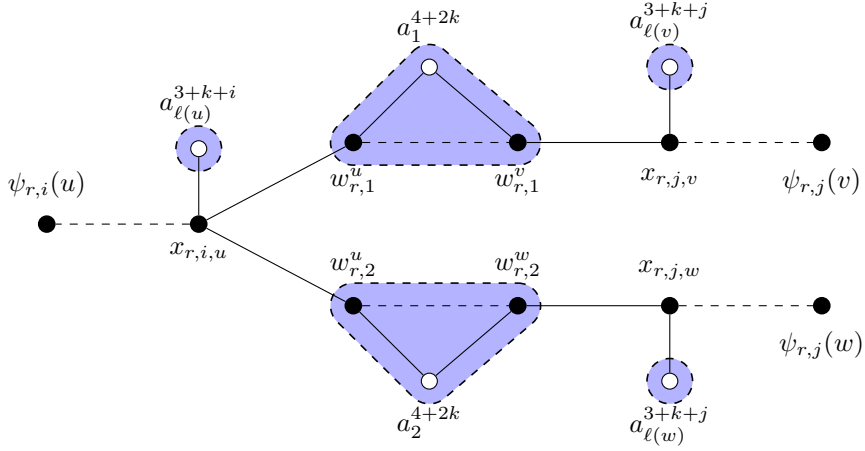}
    \coordinate (origin) at (0,0);

    \node[bvertex, above left = of origin, label = {[label distance = 1mm]below:$w^u_{r, 1}$}] (wru1)  {};
    \node[bvertex, above right = of origin, label = {[label distance = 1mm]below:$w^v_{r, 1}$}] (wrv1) {};

    \node[bvertex, below left = of origin, label = {[label distance = 1mm]above:$w^u_{r, 2}$}] (wru2) {};
    \node[bvertex, below right = of origin, label = {[label distance = 1mm]above:$w^w_{r, 2}$}] (wrw2) {};

    \node[avertex, above right = of wru1, label = {[label distance = 1mm]above:$a^{4 + 2k}_{1}$}] (ae1) {};
    \node[avertex, below right = of wru2, label = {[label distance = 1mm]below:$a^{4 + 2k}_{2}$}] (ae2) {};

    \node[bvertex, left = 3cm of origin, label = {[label distance = 0mm]below:$x_{r, i, u}$}] (u) {};
    \node[bvertex, right = 2cm of wrv1, label = {[label distance = 1mm]below:$x_{r, j, v}$}] (v) {};
    \node[bvertex, right = 2cm of wrw2, label = {[label distance = 1mm]above:$x_{r, j, w}$}] (w) {};

    \node[avertex, above = of u, label = {[label distance = 1mm]above:$a^{3 + k + i}_{\ell(u)}$}] (au) {};
    \node[avertex, above = of v, label = {[label distance = 1mm]above:$a^{3 + k + j}_{\ell(v)}$}] (av) {};
    \node[avertex, below = of w, label = {[label distance = 1mm]below:$a^{3 + k + j}_{\ell(w)}$}] (aw) {};

    \node[bvertex, left = 2cm of u, label = {[label distance = 1mm]above:$\psi_{r, i}(u)$}] (psiu) {};
    \node[bvertex, right = 2cm of v, label = {[label distance = 1mm]below:$\psi_{r, j}(v)$}] (psiv) {};
    \node[bvertex, right = 2cm of w, label = {[label distance = 1mm]below:$\psi_{r, j}(w)$}] (psiw) {};

    \graph{
      (ae1) -- {(wru1), (wrv1)};
      (ae2) -- {(wru2), (wrw2)};

      (wru1) --[dashed] (wrv1);
      (wru2) --[dashed] (wrw2);

      (u) -- {(wru1), (wru2), (au)};
      (v) -- {(wrv1), (av)};
      (w) -- {(wrw2), (aw)};

      {(u), (v), (w)} -- [dashed] {(psiu), (psiv), (psiw)};
    };

    \begin{pgfonlayer}{bg}
      \newcommand\dialr{3mm}
      \draw[dial] \hedgei{au}{\dialr};
      \draw[dial] \hedgei{av}{\dialr};
      \draw[dial] \hedgei{aw}{\dialr};
      \draw[dial] \hedgeiii{wru1}{ae1}{wrv1}{\dialr};
      \draw[dial] \hedgeiii{wru2}{wrw2}{ae2}{\dialr};
    \end{pgfonlayer}
  \end{tikzpicture}
  \caption{% 
    Parts of the verification gadget for three vertices and two edges in instance~$r$.
    There are three vertices~$u$, $v$, and $w$ in instance~$r$.
    Vertex $u$ is of color $i$, and vertices $v$ and~$w$ of color $j$ (their indices that are used in the construction are denoted by $\ell(u), \ell(v), \ell(w)$, respectively).
    There are two edges, $e_1 = \{u, v\}$ and $e_2 = \{u, w\}$.
    Dials are highlighted with blue regions with dashed outline.
    A dashed edge means that its endpoints have been made exclusive---only one of the endpoints can be in~$B$.}
  \label[figure]{fig:verification}
\end{figure}

The working principle of the gadgets is as follows.
See \cref{fig:composition-struc,fig:verification} for illustration.
Each vertex gadget consists of a singleton dial and a vertex for each instance that could be pushed into that dial.
Selecting a vertex~$v$ via a vertex-selection gadget will make it necessary to push the vertex corresponding to~$v$ into the cluster containing the dial of its vertex gadget.
Next, each edge gadget consists of a dial and, for each instance, two vertices corresponding to the endpoints of an edge in that instance.
The push of a vertex to the dial of a vertex gadget creates a $P_3$ in~$A$ for each incident edge~$e$, necessitating further pushes.
Namely, we are required to push a vertex out of the dial of the edge gadget in~$A$ representing~$e$.
Pushing the corresponding vertex for the other endpoint of~$e$ into~$B$ will complete a forbidden induced subgraph, yielding that no two endpoints of an edge are selected.
This is achieved by making the two corresponding vertices in the constructed graph exclusive.

The formal construction is as follows.
See \cref{fig:verification} for an illustration.
For each $r \in [t]$, let $E(G_r) = \{e_1, \ldots, e_m\}$.
(If there are fewer than $m$ edges, duplicate an arbitrary edge as needed.)
For each $j \in [m]$, perform the following steps towards constructing the $j$th \emph{edge gadget}.
Let $e_j = \{u, v\}$.
Introduce two vertices $w^u_{r, j}, w^v_{r, j}$ into $G$.
Make $w^u_{r, j}$ and $w^v_{r, j}$ exclusive.
Make $w_{r, j}^u$ and $w^v_{r, j}$ join $D^{4 + 2k}_{j}$ (they are thus dial vertices).

We furthermore need for each vertex a \emph{vertex gadget}, which is constructed for each instance~$r \in [t]$, and each color $i \in [k]$ as follows.
Fix an arbitrary ordering of the vertices of color~$i$ in $G_r$ and say the \emph{index} of a vertex is its index in that ordering.
For each vertex $v \in V(G_r)$ of color~$i$, introduce a vertex~$x_{r, i, v}$ into $G$.
Make $\phi_{r, i}(v)$ and $x_{r, i, v}$ exclusive.
Make $x_{r, i, v}$ adjacent to~$a^{3 + k + i}_\ell$, where $\ell$ is the index of~$v$.
Vertex $x_{r, i, v}$ is a volatile vertex.

Finally, connect the edge gadgets and vertex gadgets as follows.
For each instance~$r \in [t]$, perform the following steps.
Recall that $E(G_r) = \{e_1, \ldots, e_m\}$.
For each $j \in [m]$, let $i_1, i_2 \in [k]$ be the colors of the endpoints~$v_1, v_2 \in V(G_r)$ of~$e_j$.
Make $x_{r, i_1, v_1}$ adjacent to $w^{v_1}_{r, j}$ and make $x_{r, i_2, v_2}$ adjacent to~$w^{v_2}_{r, j}$.
This finishes the construction of the verification gadgets and concludes the construction of the graph~$G$ in our instance of \picluster.

Let us now verify that the above construction maintains the invariants.
Clearly, \cref{inv:anchor-helper} remains valid since making vertices exclusive maintains \cref{inv:anchor-helper} and otherwise no new helper vertices are introduced.
\Cref{inv:dials} remains valid since only the vertices $w_{r, j}^u$ and $w^v_{r, j}$ join a dial and each pair joins the same dial.
\Cref{inv:volatile}-(i) remains valid since it was valid before and no vertex joins any dial of the form $D^{3 + k + i}_\ell$.
\Cref{inv:volatile}-(ii) remains valid since the only vertices made adjacent to anchors are $x_{r, i, v}$ and the corresponding dial is a singleton by \cref{inv:volatile}-(i).

\begin{lemma}
  Let $G$ be the graph constructed above. The graph $G$ admits a
  \pipartition~$(A, B)$ with $d$ clusters in~$G[A]$ if and only if
  there exists an instance~$s \in [t]$ such that~$G_s$ has an
  independent set with exactly one vertex of each color.
\end{lemma}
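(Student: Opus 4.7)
I will prove the two directions separately, leveraging \cref{lem:vertsel} to handle everything up to the verification gadgets, and then extend or interpret the partition with respect to the vertex and edge gadgets of the verification stage.

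\textbf{Forward direction.} Assume $(A,B)$ is a \pipartition\ of $G$ with $d$~clusters in $G[A]$. \cref{lem:vertsel}-(i) furnishes an instance $s \in [t]$ and, for each color $i \in [k]$, a vertex $v_i \in V(G_s)$ with $\psi_{s,i}(v_i) \in B$; let $I = \{v_1,\dots,v_k\}$. I claim $I$~is independent in $G_s$. Suppose to the contrary that $e_j = \{u,v\}$ is an edge of $G_s$ with $u,v \in I$ of colors $i$ and $i'$. Because $\psi_{s,i}(u)$ and $x_{s,i,u}$ were made exclusive, \cref{lem:exclusive-soundness} forces $x_{s,i,u} \in A$; likewise $x_{s,i',v} \in A$. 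Since $x_{s,i,u}$ is adjacent both to the anchor $a^{3+k+i}_{\ell(u)}$ (which sits in a singleton dial by \cref{inv:volatile}) and to $w^u_{s,j}$, while the anchor and $w^u_{s,j}$ lie in distinct dials and are therefore nonadjacent by \cref{inv:dials}-(ii), having $w^u_{s,j} \in A$ would create an induced $P_3$ in $G[A]$. Hence $w^u_{s,j} \in B$, and symmetrically $w^v_{s,j} \in B$, contradicting the exclusivity of $w^u_{s,j}$ and $w^v_{s,j}$ via \cref{lem:exclusive-soundness}.

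\textbf{Backward direction.} Given an independent set $I$ in some $G_s$ with exactly one vertex per color, apply \cref{lem:vertsel}-(ii) with $V' = I$ to obtain a \pipartition\ $(A',B')$ of the graph $G'$ constructed prior to the verification stage, with at most $d$~clusters in $G'[A']$, with $\psi_{s,i}(v) \in B'$ for each $v \in I$, and with every other choice vertex in $A'$ and isolated in $G'[B']$. Extend $(A',B')$ to a bipartition $(A,B)$ of $V(G)$ by putting $x_{r,i,v} \in A$ iff $r=s$ and $v \in I$ (else in $B$), and putting $w^u_{r,j} \in B$ iff $r=s$ and $u \in I$ (else in $A$). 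Because $I$ is independent, for every edge $e_j$ of $G_s$ at most one of $w^u_{s,j}$, $w^v_{s,j}$ is in $B$, so all exclusivity constraints are respected.

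\textbf{Verification of the extension.} To see $G[A]$ is a cluster graph with at most $d$~clusters, note that every new vertex placed in $A$ is adjacent to a unique anchor and hence joins one of the $d$~existing anchor-clusters. No $P_3$ is created: each $x_{s,i,v} \in A$ with $v \in I$ has, inside $A$, only its anchor as a neighbor (the only other candidate neighbors $w^v_{s,j}$ are all in $B$ by choice), and each $w$-vertex in $A$ lies in the clique $D^{4+2k}_j$ while its only external neighbor $x_{r,i,u}$ is in $A$ only if $r=s$ and $u \in I$, in which case $w^u_{s,j} \in B$ and no conflict arises. For $G[B] \in \Pi$, I use \cref{lem:exclusive-completeness} applied to each triple/pair made exclusive: I will verify that (a) in each such (pair or) triple at least one vertex is in $A$ (immediate from the case analysis), and (b) every vertex introduced outside helpers is adjacent in $G[B]$ only to vertices in its own exclusivity construction. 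Condition (b) follows by tracing non-helper neighbors: each $x_{r,i,v} \in B$ has its $w$-neighbors in $A$ (by the rule above), and each $w^u_{s,j} \in B$ has $x_{s,i,u} \in A$ and all its dial-mates in $A$ (only one $w$-vertex per dial ends up in $B$ because $I$~is independent). The number of clusters is exactly the number of anchors, $d$, since anchors lie in pairwise different dials and, by \cref{inv:dials}-(ii), are pairwise nonadjacent.

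\textbf{Main obstacle.} The main technical difficulty is verifying, in the backward direction, that no stray induced copy of $M$ appears in $G[B]$ after splicing in the verification vertices; the critical point that makes this work is precisely that $I$ is independent, so that for every edge gadget at most one endpoint-marker is in $B$, ensuring that the only $B$-neighborhoods of new vertices are through helper vertices of exclusivity gadgets, which \cref{lem:exclusive-completeness} handles cleanly.
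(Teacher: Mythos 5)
Your proof is correct and follows essentially the same route as the paper's: the forward direction combines \cref{lem:vertsel}-(i) with the exclusivity of $(\psi_{s,i}(u),x_{s,i,u})$ and of $(w^u_{s,j},w^v_{s,j})$ to derive the contradiction, and the backward direction extends the partition from \cref{lem:vertsel}-(ii) by exactly the same assignment of the $x$- and $w$-vertices, verified via the invariants and \cref{lem:exclusive-completeness}. The only (harmless) cosmetic difference is that you check the cluster-graph property by direct inspection of each new vertex's $A$-neighborhood, whereas the paper argues by contradiction from a hypothetical induced $P_3$.
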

\begin{proof}
  Assume that $G$ admits a \pipartition~$(A, B)$ with $d$ clusters in~$G[A]$.
  Note that \cref{lem:vertsel} refers to a subgraph of~$G$, the graph resulting from constructing all the vertex-selection gadgets.
  By restricting $(A, B)$ to that subgraph, from \cref{lem:vertsel}-(i) we infer that there is an instance $s \in [t]$ such that, for each color~$i \in [k]$, there is a vertex~$v_i \in V(G_s)$ such that $\psi_{s, i}(v_i) \in B$.
  We claim that $V' := \{v_i \mid i \in [k]\}$ is an independent set in~$G_s$.
  Suppose $V'$ is not an independent set and let $e_j \in E(G_s)$ be such that $e_j \subseteq V'$.
  Let $e_j = \{u, v\}$ and let $i, i'$ be the colors of $u$ and $v$, respectively.
  Since $\psi_{s, i}(u)$ and $x_{s, i, u}$ are exclusive, we have $x_{s, i, u} \in A$.
  Thus, $w^{u}_{s, j} \in B$ as, otherwise, $a^{3 + k + i}_\ell$, $x_{s, i, u}$, and $w^{u}_{s, j}$, would form an induced~$P_3$ in $G[A]$, where $\ell$ is the index of~$u$.
  Similarly, $w^{v}_{s, j} \in B$.
  However, $w^{v}_{s, j}$ and $w^{u}_{s, j}$ are exclusive and each of them is contained in~$B$.
  This contradicts the fact that~$G[B] \in \Pi$.
  Hence, $V'$ is an independent set.
  Clearly, $V'$ contains exactly one vertex of each~color.

  Now assume that for some instance $s \in [t]$, there is an
  independent set~$V' = \{v_i \mid i \in [k]\} \subseteq V(G_s)$ with
  exactly one vertex~$v_i$ of each color~$i \in [k]$. Let $G'$ be the
  graph obtained in the construction before constructing the verification gadgets. By
  \cref{lem:vertsel}-(ii), there is a \pipartition~$(A', B')$ for~$G'$
  with $d$ clusters in~$G'[A']$ such that, for each~$i \in [k]$, we
  have $\psi_{s, i}(v_i) \in B'$ (and these vertices are isolated
  in~$G[B']$), and all other choice vertices of each vertex-selection
  gadget are in~$A'$.

  We now construct a \pipartition~$(A, B)$ for~$G$ from $(A', B')$.
  Put $A = A'$ and $B = B'$.
  For each instance $r \in [t]$ including~$s$, and for each $v \in V(G_r)$, let $i$ be the color of~$v$. If $v$ is not in the independent set~$V'$, then put $x_{r, i, v} \in B$ and if $v \in V'$, then put $x_{r, i, v} \in A$ instead. For each edge $e_j \in E(G_r)$, and each of its endpoints, $v \in e_j$, if $v \notin V'$, then put $w^v_{r, j} \in A$ and if $v \in V'$, then put $w^u_{r, j} \in B$.
  Clearly, not both endpoints can be in the independent set.

  Observe that $(A, B)$ is a bipartition of~$V(G)$.
  We claim that $(A, B)$ is a \pipartition\ for $G$ with at most $d$~clusters in~$G[A]$.
  We first show that $G[A]$ is a cluster graph.
  Suppose that $G[A]$ contains an induced~$P_3$, say~$Q$.
  Since $G'[A']$ is a cluster graph, $Q$ contains a vertex in~$V(G) \setminus V(G')$.
  By \cref{inv:dials}, $Q$ involves a non-dial vertex, that is, a vertex~$v$ from a vertex gadget.
  Since $v \in A$, by definition of~$(A, B)$, we have $v = x_{s, i, v_i} \in V(Q)$ for some $v_i \in V'$.
  The only neighbors of $x_{s, i, v_i}$ in~$G$ are $\psi_{r, i}(i)$ and $w^{v_i}_{r, j}$, where $j \in J$
  for some set $J \subseteq [m]$ (apart from helper vertices).
  By definition of~$(A, B)$, each of these vertices is in~$B$, a contradiction to the existence of~$Q$.
  Hence, $G[A]$ is a cluster graph.
  To see that $G[A]$ contains at most $d$~connected components, observe that $G'[A']$ has at most $d$~connected components, one for each anchor, and each vertex in~$A \setminus A'$ is connected to an anchor in~$G[A]$.

  It remains to show that $G[B] \in \Pi$.
  Recall that $G'[B'] \in \Pi$.
  The only edges in $G$ between vertices in $V(G')$ and newly-introduced vertices in $V(G) \setminus V(G')$ are incident with either an anchor or some choice vertex of some vertex-selection gadget.
  The anchors are in~$A$ and if some of the choice vertices are in $B'$, then they are isolated in~$G'[B']$ by~\cref{lem:vertsel}~(ii).
  Thus, it is enough to show that these choice vertices and the newly-introduced vertices induce a subgraph of~$G$ that satisfies~$\Pi$.
  Since all of these vertices have been made exclusive, it is enough to show that the conditions of \cref{lem:exclusive-completeness} are satisfied for each pair that has been made exclusive.
  Each such pair has the form (i) $(w^u_{r, j}, w^v_{r, j})$ or (ii) $(\psi_{r, i}(v), x_{r, i, v})$.
  By definition of~$(A, B)$, out of each pair, at least one vertex is in~$A$.
  Thus, it remains to prove the adjacency condition of \cref{lem:exclusive-completeness}.
  As $\psi_{r, i}(v)$, if contained in~$B$, is a singleton in~$G'[B]$, by construction, there is no edge in $G[B]$ between any two pairs of form~(ii).
  There is no edge between two pairs of form~(i) since, by definition of~$B$, for each edge gadget $j \in [m]$, there is exactly one pair of form~(i) containing a vertex in~$B$ and there is no edge between any two pairs of form~(i) for distinct edge gadgets~$j$.
  Finally, there is no edge in $G[B]$ between two pairs of form~(i) and (ii): Assume there is such an edge $e$ and let $v \in V(G_r)$ and $j \in [m]$ correspond to the two pairs.
  By construction, $e$ is between two pairs of form~(i) and (ii) that correspond to the same instance~$r$ (otherwise, no edge has been introduced between them).
  Moreover, $v \in e_j$ for $e_j \in E(G_r)$.
  That is, $e = \{w^v_{r, j}, x_{r, i, v}\}$.
  We have $w^v_{r, j} \in B$ only if $v \in V'$.
  However, $x_{r, i , v} \in A$ by definition and, thus, $e \not\subseteq B$.
  Thus, the conditions of \cref{lem:exclusive-completeness} are satisfied, meaning that $G[B] \in \Pi$.
 It follows that $(A, B)$ is the required \pipartition.
\end{proof}

It is not hard to verify that the construction can be carried out in
polynomial time. Since
$d \leq \poly(\log t + \max_{i = 1}^t |V(G_i)|)$, we thus have shown
that all the conditions of cross-compositions are satisfied, yielding
\cref{thm:lowerbound}.

\section{Kernels for Parameterization by the Size of One of the Parts}
In this section, we prove that \textsc{$(\Pi_{A},\Pi_{B})$-Recognition} has a polynomial kernel parameterized by the size of one of the parts of the bipartition when $\Pi_A$ and $\Pi_B$ satisfy certain general technical conditions. To simplify the presentation, we pick $B$ to be the part whose size is at most the parameter $k$. We then consider the conditions that $\Pi_A$ is characterized by forbidden induced subgraphs, each of size at most $d$, and $\Pi_B$ is hereditary (closed under taking induced subgraphs). In the first subsection, we give a polynomial kernel with $\Oh(d!\, (k+1)^d)$ vertices in this general setting. In the second subsection, we consider the restricted setting of \deltacluster: $\Pi_A$ is the set of all cluster graphs ($P_3$-free graphs) and $\Pi_B$ a hereditary property that contains only graphs of degree at most~$\Delta$. Although the result of the first subsection implies a kernel with $\Oh(k^{3})$ vertices in this setting, we prove that {\deltacluster} actually has a kernel with $\Oh((\Delta^2+1) k^2)$~vertices.

\subsection{A Kernel in the Generic Setting}
In this subsection, we prove that \textsc{$(\Pi_{A},\Pi_{B})$-Recognition} has a polynomial kernel with $\Oh((d+1)!\, (k+1)^d)$ vertices,  when $\Pi_A$ can be characterized by forbidden induced subgraphs, each of size at most $d$, and $\Pi_B$ is hereditary. We obtain the kernel by applying a powerful lemma of Fomin~\etal~\cite{FominSV13} that is based on the Sunflower Lemma (see~\cite{FG06,CFK+15}).

Let $\mathcal{U}$ be a universe and let $\mathcal{F}$ be a set of subsets of $\mathcal{U}$. Recall that $X \subseteq \mathcal{U}$ is a \emph{hitting set} of $\mathcal{U}$ if $X \cap F \not= \emptyset$ for every $F \in \mathcal{F}$.

\begin{lemma}[{\cite[Lemma~2]{FominSV13}}]\label[lemma]{lem:sunf}
Let $\mathcal{U}$ be a universe and let $k$ be an integer. Let $\mathcal{F}$ be a set of subsets of $\mathcal{U}$, each of size at most $d$. Then in $\Oh(|\mathcal{F}|\, (k+|\mathcal{F}|))$ time, we can find a set $\mathcal{F}' \subseteq \mathcal{F}$ of size at most $d!\, (k+1)^d$ such that for every $X \subseteq \mathcal{U}$ of size at most $k$, $X$ is a minimal hitting set of $\mathcal{F}$ if and only if $X$ is a minimal hitting set of~$\mathcal{F}'$.
\end{lemma}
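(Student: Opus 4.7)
The plan is to derive \cref{lem:sunf} from iterated application of the classical Erd\H{o}s--Rado Sunflower Lemma. Recall that this lemma asserts that any family of sets of common size $s$ with more than $s!\cdot(p-1)^s$ elements must contain a \emph{sunflower} with $p$ petals: $p$ sets $S_1, \ldots, S_p$ with a common pairwise intersection (the \emph{core}) $C$, and whose \emph{petals} $S_i \setminus C$ are mutually disjoint. I will apply this with $p = k+2$.

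The algorithm that produces $\mathcal{F}'$ partitions $\mathcal{F}$ into subfamilies $\mathcal{F}_s$ of sets of size $s$ for each $s \in \{1,\ldots,d\}$, and repeatedly, while $|\mathcal{F}_s| > s!\cdot(k+1)^s$, finds a $(k+2)$-petal sunflower in $\mathcal{F}_s$ and discards an arbitrary petal from $\mathcal{F}$. The central claim to verify is that every such removal is invisible to hitting sets of size at most $k$: for any $X \subseteq \mathcal{U}$ with $|X| \le k$ and any $(k+2)$-petal sunflower $S_1, \ldots, S_{k+2}$ with core $C$, if $X$ hits $S_1,\ldots,S_{k+1}$ but $X \cap C = \emptyset$, then $X$ would have to contain at least one element in each of the $k+1$ pairwise disjoint petals $S_i \setminus C$ (for $i \in [k+1]$), contradicting $|X| \le k$; hence $X$ must hit the core and therefore hits the discarded $S_{k+2}$ as well. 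By induction on the number of removals, a set $X$ of size at most $k$ is a hitting set of $\mathcal{F}$ if and only if it is a hitting set of $\mathcal{F}'$. Since $\mathcal{F}'\subseteq \mathcal{F}$ and minimality is witnessed by proper subsets of $X$, which are also of size at most $k$, this equivalence immediately upgrades to the claimed equivalence of \emph{minimal} hitting sets of size at most $k$.

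The resulting family satisfies $|\mathcal{F}'| = \sum_s |\mathcal{F}_s| \le \sum_{s=1}^d s!\cdot(k+1)^s = \Oh(d!\cdot(k+1)^d)$, since the $s=d$ term dominates the geometric-like sum. For the running time, I will invoke the constructive proof of the Sunflower Lemma, which by iteratively extracting a maximal disjoint subfamily and recursing on the popular elements produces a $(k+2)$-petal sunflower in time polynomial in $|\mathcal{F}|$ and $k$; careful amortization across the at most $|\mathcal{F}|$ removal steps yields the stated $\Oh(|\mathcal{F}|(k + |\mathcal{F}|))$ bound. The main technical subtlety---and the only real obstacle worth flagging---is the choice of $k+2$ (rather than $k+1$) petals: with only $k+1$ petals, a size-$k$ set $X$ could distribute one element into each of $k$ of the petals and miss both the core $C$ and the removed petal, so the pigeonhole argument would fail and the removal would not be safe.
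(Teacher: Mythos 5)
The paper never proves this statement: it is imported verbatim as a black box from Fomin, Saurabh, and Villanger \cite{FominSV13}, so there is no in-paper proof to compare against, and your proposal should be judged on its own. Your route is the standard ``sunflower-plucking'' argument, and its core is correct: for a $(k+2)$-petal sunflower with core $C$, any $X$ with $|X|\le k$ that hits the $k+1$ surviving sets must meet $C$ (otherwise it would need one element in each of $k+1$ pairwise disjoint nonempty petals), hence it also hits the discarded set; this covers the empty-core case vacuously, the downward induction over removals is sound because the $k+1$ siblings of a removed set are still present in the family at the moment of removal, and your lift from hitting sets of size at most $k$ to \emph{minimal} hitting sets of size at most $k$ is exactly right (it is the same subset argument the paper reuses in \cref{prp:sunf:obs}). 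Your observation that $k+1$ petals would not suffice is also the correct crux.

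Two caveats keep this from being a complete proof of the lemma as stated. First, the per-cardinality bookkeeping only yields $|\mathcal{F}'|\le\sum_{s=1}^{d}s!\,(k+1)^s$, which can exceed the claimed bound $d!\,(k+1)^d$ (for $d=2$, $k=1$ the sum is $10$ versus $8$); you only conclude $\Oh(d!\,(k+1)^d)$. To get the literal bound you should apply the sunflower lemma to the whole family of sets of size \emph{at most} $d$ at once (the Erd\H{o}s--Rado induction goes through unchanged in the non-uniform setting), plucking petals while $|\mathcal{F}|>d!\,(k+1)^d$; alternatively, the constant-factor-weaker bound is harmless for \cref{thm:general-kernel}, but then you are proving a slightly weaker lemma than the one cited. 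Second, the running-time bound $\Oh(|\mathcal{F}|\,(k+|\mathcal{F}|))$ is asserted via ``careful amortization'' rather than argued: you should at least note that there are at most $|\mathcal{F}|$ plucking rounds, that one constructive sunflower extraction (maximal disjoint subfamily, then recursing on a frequent element, depth at most $d$) costs time linear in $|\mathcal{F}|$ up to factors depending on $d$, and that the hidden constant in the stated bound therefore depends on $d$---which is the regime in which the paper uses the lemma. With these two points repaired or conceded, your derivation is a faithful, self-contained substitute for the cited result.
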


We also need the following observation, inspired by a similar observation of Kratsch~\cite[Lemma~3]{Kratsch12}.

\begin{proposition} \label[proposition]{prp:sunf:obs}
Let $\mathcal{U}$ be a universe and let $\mathcal{F}',\mathcal{F}^*,\mathcal{F}$ be sets of subsets of $\mathcal{U}$ such that $\mathcal{F}' \subseteq \mathcal{F}^* \subseteq \mathcal{F}$. Suppose that for every $X \subseteq \mathcal{U}$ of size at most $k$, $X$ is a minimal hitting set of $\mathcal{F}$ if and only if $X$ is a minimal hitting set of $\mathcal{F}'$. Then for every $X \subseteq \mathcal{U}$ of size at most $k$, $X$ is a minimal hitting set of $\mathcal{F}$ if and only if $X$ is a minimal hitting set of $\mathcal{F}^{*}$.
\end{proposition}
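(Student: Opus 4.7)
The plan is to prove the two directions separately, exploiting the fact that any hitting set of a larger family is automatically a hitting set of every subfamily, while any subset of a hitting set that still hits everything must equal it when we start from a minimal hitting set.

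For the forward direction, suppose $X\subseteq\mathcal{U}$ with $|X|\le k$ is a minimal hitting set of $\mathcal{F}$. From $\mathcal{F}^{*}\subseteq\mathcal{F}$, $X$ hits every set of $\mathcal{F}^{*}$. To get minimality with respect to $\mathcal{F}^{*}$, I would use the hypothesis that $X$ is also a minimal hitting set of $\mathcal{F}'$: any proper subset $X'\subsetneq X$ fails to hit some $F\in\mathcal{F}'$, and since $\mathcal{F}'\subseteq\mathcal{F}^{*}$, that same $F$ witnesses that $X'$ fails to hit $\mathcal{F}^{*}$. So $X$ is minimal for $\mathcal{F}^{*}$.

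For the backward direction, let $X\subseteq\mathcal{U}$ with $|X|\le k$ be a minimal hitting set of $\mathcal{F}^{*}$. Since $\mathcal{F}'\subseteq\mathcal{F}^{*}$, $X$ hits $\mathcal{F}'$, hence contains some minimal hitting set $Y\subseteq X$ of $\mathcal{F}'$. The key observation is that $|Y|\le|X|\le k$, so the hypothesized equivalence between minimal hitting sets of size at most $k$ of $\mathcal{F}$ and of $\mathcal{F}'$ applies to $Y$: it gives that $Y$ is a minimal hitting set of $\mathcal{F}$. In particular $Y$ hits $\mathcal{F}$, so $Y$ hits $\mathcal{F}^{*}\subseteq\mathcal{F}$, and together with $Y\subseteq X$ and the minimality of $X$ as a hitting set of $\mathcal{F}^{*}$, this forces $Y=X$. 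Hence $X$ is a minimal hitting set of $\mathcal{F}$.

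I do not anticipate a real obstacle here; the argument is a short chase through the inclusions $\mathcal{F}'\subseteq\mathcal{F}^{*}\subseteq\mathcal{F}$, and the only subtle point is remembering to pass to a minimal hitting set $Y$ of $\mathcal{F}'$ inside $X$ in the backward direction so that the size bound $|Y|\le k$ lets us invoke the hypothesis. No additional properties of $\mathcal{U}$ or the families are needed, and the argument is essentially a general lattice-style sandwich argument that could be abstracted from the hitting-set context entirely.
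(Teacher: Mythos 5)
Your proof is correct and follows essentially the same route as the paper's: the forward direction uses $\mathcal{F}'\subseteq\mathcal{F}^*$ to transfer the minimality witness, and the backward direction extracts a minimal hitting set of $\mathcal{F}'$ inside $X$, lifts it to $\mathcal{F}$ via the hypothesis, and concludes $Y=X$ by minimality of $X$ for $\mathcal{F}^*$ (the paper phrases this last step as a contradiction with a proper subset $X'\subsetneq X$, which is the same argument). Your explicit check that $|Y|\le k$ so the hypothesis applies is a point the paper leaves implicit.
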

\begin{proof}
Let $X \subseteq \mathcal{U}$ be of size at most $k$. Suppose that $X$ is a minimal hitting set of $\mathcal{F}$. Then $X$ is a hitting set of $\mathcal{F}^*$, as $\mathcal{F}^* \subseteq \mathcal{F}$.
If a set $X' \subset X$ would be a hitting set of $\mathcal{F}^*$, then $X'$ would also be a hitting set of $\mathcal{F}'$, because $\mathcal{F}' \subseteq \mathcal{F}^*$. However, using the assumption in the proposition statement, $X$ is already a minimal hitting set of $\mathcal{F}'$, a contradiction. Hence, $X$ is a minimal hitting set of $\mathcal{F}^*$.

Suppose that $X$ is a minimal hitting set of $\mathcal{F}^*$. Then $X$ is a hitting set of $\mathcal{F}'$, as $\mathcal{F}' \subseteq \mathcal{F}^*$. Suppose that $X' \subset X$ is a minimal hitting set of $\mathcal{F}'$. This implies that $X'$ is a minimal hitting set of $\mathcal{F}$ by the assumption in the proposition statement. Then $X'$ is also a hitting set of $\mathcal{F}^*$, as $\mathcal{F}^* \subseteq \mathcal{F}$, contradicting the minimality of $X$. Hence, $X$ is a minimal hitting set of $\mathcal{F}'$. Then the assumption in the proposition statement implies that $X$ is a minimal hitting set of $\mathcal{F}$.
\end{proof}

In the remainder, let $(G,k)$ be an instance of \textsc{$(\Pi_{A},\Pi_{B})$-Recognition} where $\Pi_A$ can be characterized by a collection $\mathcal{H}$ of forbidden induced subgraphs, each of size at most $d$, and $\Pi_B$ is hereditary. Throughout, for a graph $G'$, let $\mathcal{F}(G')$ be the set of subsets of $V(G')$ that induce a subgraph of $G'$ isomorphic to a member of $\mathcal{H}$. We observe the following.

\begin{proposition} \label[proposition]{prp:sunf:obs2}
If $B \subseteq V(G)$ is a hitting set of $\mathcal{F}(G)$ and $G[B] \in \Pi_B$, then $(V(G) \setminus B, B)$ is a $(\Pi_A,\Pi_B)$-partition of $G$.
\end{proposition}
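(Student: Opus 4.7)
The plan is to verify directly the two defining conditions of a $(\Pi_A,\Pi_B)$-partition for the candidate partition $(V(G)\setminus B, B)$. The second condition, $G[B]\in\Pi_B$, is handed to us as a hypothesis, so no work is required there. All the content lies in establishing $G[V(G)\setminus B]\in\Pi_A$, and I would attack this by contrapositive using the forbidden-induced-subgraph characterization of $\Pi_A$.

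Concretely, I would assume for contradiction that $G[V(G)\setminus B]\notin\Pi_A$. Since $\Pi_A$ is characterized by $\mathcal{H}$, this means there exists $H\in\mathcal{H}$ that appears as an induced subgraph of $G[V(G)\setminus B]$. Unfolding this, there is a vertex set $S\subseteq V(G)\setminus B$ such that $G[S]$ is isomorphic to $H$. The key observation is then that $G[S]$ is also an induced subgraph of $G$ isomorphic to $H\in\mathcal{H}$, so by the definition of $\mathcal{F}(G)$ stated just before the proposition, we have $S\in\mathcal{F}(G)$. Because $B$ is a hitting set of $\mathcal{F}(G)$, this forces $B\cap S\neq\emptyset$, contradicting $S\subseteq V(G)\setminus B$.

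I do not foresee any real obstacle here; the statement is essentially a reformulation of the definitions of ``hitting set'' and ``forbidden induced subgraph characterization.'' The only point requiring a little care is to make sure the induced subgraph of $G[V(G)\setminus B]$ isomorphic to $H$ is also an induced subgraph of $G$ itself (so that it belongs to $\mathcal{F}(G)$, not merely to $\mathcal{F}(G[V(G)\setminus B])$); this is immediate because induced subgraphs of an induced subgraph are induced subgraphs of the ambient graph. After that observation, the proof is a one-line contradiction. I expect the write-up to take only a few sentences, and it will fit naturally as a lead-in to the subsequent kernelization argument which will invoke \cref{lem:sunf} and \cref{prp:sunf:obs} on the family $\mathcal{F}(G)$ together with this proposition to produce the polynomial kernel of \cref{thm:general-kernel}.
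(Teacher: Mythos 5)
Your proof is correct and follows essentially the same route as the paper's: both argue by contradiction that a forbidden induced subgraph surviving in $G - B$ would yield a set of $\mathcal{F}(G)$ disjoint from $B$, contradicting the hitting-set hypothesis. Your explicit remark that an induced subgraph of $G[V(G)\setminus B]$ is also an induced subgraph of $G$ is a detail the paper leaves implicit, but the argument is the same.
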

\begin{proof}
Suppose that a subgraph $H$ of $G-B$ is isomorphic to a member of~$\mathcal{H}$. Then $V(H)$ is contained in $\mathcal{F}(G)$. Hence, $H$ contains a vertex of $B$, a contradiction. Therefore, $G-B \in \Pi_A$.
\end{proof}

\begin{proposition} \label[proposition]{prp:sunf:minimal}
If $G$ admits a $(\Pi_A,\Pi_B)$-partition, then $G$ admits a $(\Pi_A,\Pi_B)$-partition $(A,B)$ with $|B|$ minimum such that $B$ is a minimal hitting set of $\mathcal{F}(G)$.
\end{proposition}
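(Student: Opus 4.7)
My plan is to take any $(\Pi_A,\Pi_B)$-partition of $G$ with $|B|$ minimum and argue that $B$ is automatically a minimal hitting set of $\mathcal{F}(G)$, so no ``repair'' step is needed. The argument splits into two observations, one showing that $B$ hits $\mathcal{F}(G)$ and one showing that minimality follows by contradiction from the hereditariness of $\Pi_B$ combined with \cref{prp:sunf:obs2}.

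First I would observe that for \emph{any} $(\Pi_A,\Pi_B)$-partition $(A,B)$ of $G$, the set $B$ must hit every member of $\mathcal{F}(G)$. Indeed, if some $F\in\mathcal{F}(G)$ were disjoint from $B$, then $F\subseteq A$ and $G[F]$ would be a subgraph of $G[A]$ isomorphic to a member of $\mathcal{H}$, contradicting $G[A]\in\Pi_A$. Hence whenever $G$ admits a $(\Pi_A,\Pi_B)$-partition, it admits one, say $(A,B)$, with $|B|$ minimum, and for this one $B$ is certainly a hitting set of $\mathcal{F}(G)$.

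Next I would suppose, for contradiction, that $B$ is not a minimal hitting set of $\mathcal{F}(G)$, and derive a smaller-$B$ partition. By the assumed nonminimality there is a proper subset $B'\subsetneq B$ that still hits $\mathcal{F}(G)$. Since $\Pi_B$ is hereditary and $G[B]\in\Pi_B$, we also have $G[B']\in\Pi_B$. Applying \cref{prp:sunf:obs2} to the hitting set $B'$ then yields that $(V(G)\setminus B',B')$ is a $(\Pi_A,\Pi_B)$-partition of $G$ with $|B'|<|B|$, contradicting the choice of $(A,B)$ as having $|B|$ minimum. Therefore $B$ must be a minimal hitting set of $\mathcal{F}(G)$, which is exactly the statement to be proved. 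The whole argument is essentially a one-step exchange and the only ingredients needed are \cref{prp:sunf:obs2} and the hereditariness of $\Pi_B$; there is no real obstacle, so I do not expect any step to be delicate.
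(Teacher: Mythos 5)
Your proof is correct and follows essentially the same route as the paper: take a partition with $|B|$ minimum, note $B$ must hit $\mathcal{F}(G)$ since otherwise $G[A]$ would contain a forbidden subgraph, and then use hereditariness of $\Pi_B$ together with \cref{prp:sunf:obs2} to turn any proper hitting subset $B'\subsetneq B$ into a $(\Pi_A,\Pi_B)$-partition contradicting minimality. No gaps.
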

\begin{proof}
Let $(A,B)$ be a $(\Pi_A,\Pi_B)$-partition of $G$ such that $|B|$ is minimum. Clearly, $B$ is a hitting set of $\mathcal{F}(G)$, or $G[A]$ would still contain a subgraph isomorphic to a member of $\mathcal{H}$. Suppose there exists a set $B' \subset B$ such that $B'$ is still a hitting set of $\mathcal{F}(G)$. Note that $G[B'] \in \Pi_B$, because $\Pi_B$ is hereditary. Hence, \cref{prp:sunf:obs2} implies that $(A \cup (B \setminus B'), B')$ is a $(\Pi_A,\Pi_B)$-partition of $G$. However, this contradicts the choice of $(A,B)$, particularly the minimality of $|B|$. The proposition follows.
\end{proof}

We now describe the single reduction rule of the kernel.

\begin{rrule} \label[rrule]{rule:bss}
Apply the algorithm of \cref{lem:sunf} with $\mathcal{U} = V(G)$, $\mathcal{F} = \mathcal{F}(G)$, and $k$, and let $\mathcal{F}'$ be the resulting set. Let $T = \bigcup_{F \in \mathcal{F}'} F$ be the set of vertices contained in $\mathcal{F}'$ and let $R = V(G) \setminus T$. Remove $R$ from $G$.
\end{rrule}
\begin{proof}
Let $G' = G[T] = G-R$. We prove that $(G,k)$ is a yes-instance if and only if $(G',k)$ is.

Suppose that $(G,k)$ is a yes-instance, and let $(A,B)$ be a $(\Pi_A,\Pi_B)$-partition of~$G$ such that $|B| \leq k$. Since $\Pi_A$ can be characterized by a collection of forbidden induced subgraphs, it is hereditary. Recall that $\Pi_B$ is hereditary as well. Hence, $G[A \setminus R] \in \Pi_A$ and $G[B \setminus R] \in \Pi_B$. Therefore, $(A \setminus R, B \setminus R)$ is a $(\Pi_A,\Pi_B)$-partition of $G'$, and $(G',k)$ is a yes-instance.

Suppose that $(G',k)$ is a yes-instance, and let $(A',B')$ be a $(\Pi_A,\Pi_B)$-partition of~$G'$ such that $|B'| \leq k$. By \cref{prp:sunf:minimal}, we may assume that $B'$ is a minimal hitting set of $\mathcal{F}(G')$. Recall that by \cref{lem:sunf}, for every $X \subseteq \mathcal{U} = V(G)$ of size at most $k$, $X$ is a minimal hitting set of $\mathcal{F} = \mathcal{F}(G)$ if and only if $X$ is a minimal hitting set of $\mathcal{F}'$. Also note that by the definition of $T$ and $G'$, it follows that $\mathcal{F}' \subseteq \mathcal{F}(G') \subseteq \mathcal{F}(G) = \mathcal{F}$. Combined with \cref{prp:sunf:obs}, all this implies that $B'$ is a (minimal) hitting set of $\mathcal{F}(G)$. But then \cref{prp:sunf:obs2} implies that $(V(G) \setminus B', B') = (A' \cup R, B')$ is a $(\Pi_A,\Pi_B)$-partition of $G$ such that $|B'| \leq k$. Therefore, $(G,k)$ is a yes-instance.
\end{proof}

\begin{proof}[Proof of \cref{thm:general-kernel}]
Let $(G,k)$ be an instance of \textsc{$(\Pi_{A},\Pi_{B})$-Recognition}, let $\Pi_A$ be characterized by a collection $\mathcal{H}$ of forbidden induced subgraphs, each of constant size, and let $\Pi_B$ be hereditary. Apply \cref{rule:bss}. Since the number of sets in $\mathcal{F}(G)$ is $\Oh(d|V(G)|^d)$, both constructing $\mathcal{F}(G)$ and the algorithm of \cref{lem:sunf} take time polynomial in $|V(G)|^d$, $d$, and $k$. Moreover, the produced set $\mathcal{F}'$ has size at most $d!\,(k+1)^d$, implying that $|V(G')| \leq d \cdot d!\, (k+1)^d \leq (d+1)!\, (k+1)^d$ where~$G'$ is the graph produced by the rule.  By the correctness of \cref{rule:bss} and the fact that the number of edges in~$G'$ is at most quadratic in~$|V(G')|$, this is indeed a polynomial kernel.
\end{proof}

\subsection{Smaller Kernels for a Restricted Setting: \deltaclustertitle}
In this subsection, we prove that {\deltacluster}, parameterized by the size~$k$ of $B$, has a kernel with $\Oh((\Delta^2+1) k^2)$ vertices. This improves on \cref{thm:general-kernel}, which implies a kernel with $\Oh((k+1)^3)$ vertices. Recall that {\deltacluster} is the restriction of {\picluster} to the case when all graphs containing a vertex of degree at least~$\Delta+1$ are forbidden induced subgraphs of $\Pi$. Throughout, we say that a cluster-$\Delta$ partition of $G$ is \emph{valid} if $|B| \leq k$.

The first step of the kernel is to compute a maximal set~$\mathcal{P}$ of vertex-disjoint induced~$P_3$s. We call~${\cal P}$ a \emph{$P_3$-packing}. We let~$V({\cal P})$ denote the set of vertices of the~$P_3$s in~${\cal P}$.

\newcommand{\Gr}{\ensuremath{G-V({\cal P})}}

\begin{rrule}
\label[rrule]{rule:is-packing-size}
Let $(G, k)$ be an instance of \deltacluster, and let~${\cal P}$ be a $P_3$-packing. If~$|{\cal P}|>k$, then reject.
\end{rrule}
\begin{proof}% 
For each~$P_3$, at least one vertex must be in~$B$. Therefore, if $|{\cal P}|>k$, then~$|B|> k$ for any cluster-$\Pi_\Delta$ partition $(A, B)$ of $G$.
\end{proof}
Since~${\cal P}$ is a maximal set of~$P_3$s, $G-V({\cal P})$ is a cluster graph. The first step of the kernelization is to identify vertices of~$V({\cal P})$ that are in~$B$ in every valid cluster-$\Pi_\Delta$ partition. % 

\begin{definition}\label[definition]{def:fixed}
For a vertex~$u\in V({\cal P})$, we say that $u$ is \emph{fixed} if either:

\begin{enumerate}
\item $u$ has neighbors in at least~$k+2$ different clusters of~$\Gr$; or
\item there is a cluster $C$ in~$G-V({\cal P})$ such that~$u$ has (at least) $\Delta + 2$ neighbors and (at least) $\Delta+2$ nonneighbors in~$C$.
\end{enumerate}

A fixed vertex $u$ is said to be \emph{heavy} if it has neighbors in at least~$k+2$ different clusters of~$\Gr$ (\ie, satisfies condition~1 above); otherwise, $u$ is \emph{nonheavy}.
\end{definition}

\begin{lemma}
\label[lemma]{lem:fixed}
  Let $(G, k)$ be an instance of \deltacluster, let~${\cal P}$ be a $P_3$-packing, and let~$u$ be a fixed vertex. If~$G$ has a valid cluster-$\Pi_\Delta$ partition~$(A,B)$, then~$u\in B$.
\end{lemma}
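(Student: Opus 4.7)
The plan is to proceed by contradiction: assume $(A,B)$ is a valid cluster-$\Pi_\Delta$ partition with $u\in A$, and derive in each of the two cases of \cref{def:fixed} that either $|B|>k$ or $G[B]$ contains a vertex of degree at least $\Delta+1$, contradicting validity.

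For Case~1 (the heavy case), I would pick from each of the $k+2$ clusters $C_1,\ldots,C_{k+2}$ of $\Gr$ in which $u$ has a neighbor a representative neighbor $v_i\in C_i\cap N(u)$. The key observation is that at most one of these $v_i$ can lie in $A$: indeed, vertices $v_i\in C_i$ and $v_j\in C_j$ with $i\neq j$ are nonadjacent (they live in distinct clusters of $\Gr$), and both are adjacent to $u$; so if both were in $A$ together with $u$, then $v_i u v_j$ would induce a $P_3$ in $G[A]$, contradicting that $G[A]$ is a cluster graph. Hence at least $k+1$ of the $v_i$ lie in $B$, giving $|B|\geq k+1$, contradicting validity.

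For Case~2 (the nonheavy case), let $C$ be the cluster in $\Gr$ witnessing the definition, and write $N_C(u)$ and $\bar N_C(u)$ for $u$'s neighbors and nonneighbors in $C$, both of size $\geq\Delta+2$. The structural step is to show that $C\cap A$ is contained either entirely in $N_C(u)$ or entirely in $\bar N_C(u)$: if $w'\in C\cap A\cap N(u)$ and $w\in C\cap A\setminus N(u)$ both existed, then (since $C$ is a clique) $w$ and $w'$ are adjacent, and $u,w',w$ would induce a $P_3$ in $G[A]$. In the first subcase, all of $\bar N_C(u)$ lies in $B$, so $G[B]$ contains a clique of size $\geq\Delta+2$, whose vertices have degree $\geq\Delta+1$ in $G[B]$; in the second subcase the same conclusion follows using $N_C(u)\subseteq B$. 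Either way, $G[B]$ contains a forbidden induced subgraph of $\Pi$, contradicting $G[B]\in\Pi$.

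The argument is largely routine once the right representatives are chosen; the only genuine subtlety is the dichotomy $C\cap A\subseteq N_C(u)$ versus $C\cap A\subseteq \bar N_C(u)$ in Case~2, which is the place where the $P_3$-freeness of $G[A]$ interacts with the clique structure of $C$. I will state this dichotomy as a brief intermediate claim for clarity before invoking it to conclude the size/degree contradiction.
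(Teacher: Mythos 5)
Your proposal is correct and follows essentially the same argument as the paper: Case~1 is identical (at most one of the $k+2$ clusters can contribute a neighbor of $u$ to $A$, else a $P_3$ arises, so $|B|\geq k+1$), and Case~2 uses the same two ingredients as the paper --- a clique on $\Delta+2$ vertices cannot lie entirely in $B$, and $u$ plus a neighbor and a nonneighbor of $u$ in $C$ inside $A$ would induce a $P_3$ --- merely arranged in the contrapositive order (you derive the dichotomy on $C\cap A$ from $P_3$-freeness and then violate the degree bound in $B$, whereas the paper first places one vertex of each set in $A$ via the degree bound and then exhibits the $P_3$).
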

\begin{proof}% 
  \emph{Case~1:~$u$  is heavy.} If~$u\in A$, then there is at most one cluster~$C$
  of~$G-V({\cal P})$ such that~$A$ contains vertices of~$N(u)\cup C$. Therefore,~$B$ contains
  vertices of~$k+1$ clusters of~$G-V({\cal P})$ and thus~$|B|>k$.

  \emph{Case~2: $u$ is nonheavy.} Since~$u$ is fixed, there is a cluster~$C$ in~$G-V({\cal P})$ such
    that~$u$ has (at least) $\Delta + 2$ neighbors and (at least) $\Delta +2$ nonneighbors in~$C$. Let~$v_1,v_2, \ldots, v_{\Delta+2}$ be $\Delta+2$ neighbors
  of~$u$ in~$C$, and let~$w_1,w_2, \ldots, w_{\Delta+2}$ be $\Delta+2$ nonneighbors of~$u$ in~$C$. Assume, towards a
  contradiction, that there is a cluster-$\Pi_\Delta$ partition~$(A,B)$ with~$u\in A$. Since each of $G[\{v_1,v_2, \ldots, v_{\Delta+2}\}]$ and $G[\{w_1,w_2, \ldots, w_{\Delta+2}\}]$ is a clique on $\Delta+2$ vertices (and hence of degree $\Delta+1$), $A$ must contain at least one vertex $v_i \in \{v_1,v_2, \ldots, v_{\Delta+2}\}$ and at least one vertex $w_j \in \{w_1,w_2, \ldots, w_{\Delta+2}\}$. But then $(u, v_i, w_j)$ forms an induced~$P_3$ in~$A$.
\end{proof}
Next, we label certain vertices in~$V\setminus V({\cal P})$ as \emph{important} using the following scheme.\\

\noindent {\bf Labeling Scheme}
\begin{itemize}
\item[(i)] For each (fixed) heavy vertex~$u$ of~$V({\cal P})$, pick
  $k+2$ (distinct) clusters \\ $C_1, \ldots, C_{k+2}$ in $\Gr$ such
  that, for each $i \in [k+2]$, $C_i$ contains a neighbor $v_i$ of $u$
  and label~$v_1, v_2, \ldots, v_{k + 2}$ as important.

\item[(iii)] For each (fixed) nonheavy vertex~$u$ of~$V({\cal P})$, pick an arbitrary cluster~$C$ of~$G-V({\cal P})$ such that~$u$
  has $\Delta +2$ neighbors $v_1,v_2, \ldots, v_{\Delta+2}$ and $\Delta +2$ nonneighbors~$w_1,w_2, \ldots, w_{\Delta+2}$~in~$C$, and label~$v_1,v_2, \ldots, v_{\Delta+2}, w_1, w_2, \ldots, w_{\Delta+2}$ as important.

\item[(iii)]   For each nonfixed vertex~$u$ of~$V({\cal P})$, and each cluster~$C$ of~$G-V({\cal P})$
  containing at least one neighbor of~$u$, label $\min\{\Delta+2, |N(u) \cap C|\}$ (arbitrary) neighbors of $u$ in $C$ and $\min\{\Delta+2, |C- N(u)|\}$ (arbitrary) nonneighbors of $u$ in $C$ as important.
\end{itemize}
Any vertex in $V\setminus V({\cal P})$ that was not labeled in this scheme is called \emph{unimportant}.

\begin{observation}\label[observation]{obs:important-verts}
  If~$(G,k)$ is reduced with respect to \cref{rule:is-packing-size}, then the number of
  vertices that are marked as important is~$\Oh((\Delta +1)\cdot k^2)$.
\end{observation}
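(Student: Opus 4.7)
The plan is to bound the number of important vertices by summing the contribution of each vertex $u \in V(\mathcal{P})$ according to which of the three cases of the labeling scheme applies to it, and then invoking the bound on $|V(\mathcal{P})|$ obtained from Reduction Rule~\ref{rule:is-packing-size}.

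First, I would note that since $(G,k)$ is reduced with respect to \cref{rule:is-packing-size}, we have $|\mathcal{P}| \leq k$ and hence $|V(\mathcal{P})| \leq 3k$. The labeling scheme only marks vertices because of some $u \in V(\mathcal{P})$, so it suffices to bound, for each such $u$, the number of vertices that $u$ causes to be labeled important.

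Then, I would go through the three cases of the labeling scheme. In case~(i), a heavy vertex $u$ contributes exactly $k+2$ important vertices. In case~(ii), a nonheavy fixed vertex $u$ contributes exactly $2(\Delta+2)$ important vertices, coming from a single cluster of $\Gr$. In case~(iii), a nonfixed vertex $u$ is in particular not heavy, so $u$ has neighbors in at most $k+1$ distinct clusters of $\Gr$; each such cluster contributes at most $2(\Delta+2)$ important vertices, giving at most $(k+1)\cdot 2(\Delta+2)$ important vertices per nonfixed $u$.

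Summing over all $u \in V(\mathcal{P})$, the total number of important vertices is at most
\[
 3k \cdot \bigl((k+2) + 2(\Delta+2) + (k+1)\cdot 2(\Delta+2)\bigr) \;=\; \Oh((\Delta+1)\cdot k^2),
\]
as claimed. There is no real obstacle here; the statement is essentially a direct counting once the case analysis of \cref{def:fixed} and the labeling scheme is transcribed, together with the observation that a nonheavy vertex (fixed or not) has neighbors in at most $k+1$ clusters of $\Gr$.
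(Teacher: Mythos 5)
Your proof is correct and follows essentially the same route as the paper's: bound $|V(\mathcal{P})|$ by $3k$ via \cref{rule:is-packing-size}, count at most $k+2$, $2(\Delta+2)$, and $(k+1)\cdot 2(\Delta+2)$ important vertices per heavy, fixed nonheavy, and nonfixed vertex respectively (the last using that a nonfixed vertex meets at most $k+1$ clusters), and sum. No gaps.
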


\begin{proof}
After \cref{rule:is-packing-size}, we have $|V({\cal P})| \leq 3k$. Each heavy vertex in $V({\cal P})$ labels $k+2$ vertices in $V\setminus V({\cal P})$ as important, according to condition (i) of the labeling scheme. Therefore, the total number of vertices in $V\setminus V({\cal P})$ labeled as important by heavy vertices is $\Oh(k^2)$. Each fixed nonheavy vertex in $V({\cal P})$ labels $2\Delta +4$ vertices in $V\setminus V({\cal P})$ as important, according to condition (ii) of the labeling scheme. Therefore, the total number of vertices in $V\setminus V({\cal P})$ labeled as important by fixed nonheavy vertices is $\Oh(\Delta \cdot k +k)$. Each nonfixed vertex $v \in V({\cal P})$ is adjacent to at most $k+1$ clusters in $G - V({\cal P})$  (otherwise $v$ would be fixed), and can label at most $2\Delta +4$ vertices in each adjacent cluster as important (according to condition (iii) of the labeling scheme). Therefore, a nonfixed vertex $v$ labels $\Oh(\Delta \cdot k+k)$ many vertices in $V\setminus V({\cal P})$ as important. It follows that the at most $3k$ vertices in $V({\cal P})$ label $\Oh(\Delta \cdot k^2 + k^2)=\Oh((\Delta+1) \cdot k^2)$ vertices of $V\setminus V({\cal P})$ as important.
\end{proof}

We now present several reduction rules that use the above labeling scheme.

\begin{rrule}\label[rrule]{rule:is-unimportant-clique}
  If there is a cluster~$C$ in~$G - V({\cal P})$ such that all vertices in~$C$ are unimportant, then remove~$C$ from~$G$.
\end{rrule}
\begin{proof}% 
 If $G$ has a valid cluster-$\Pi_\Delta$ partition $(A, B)$, then obviously so does $G-C$. To prove the converse, suppose that $(A, B)$ is a valid cluster-$\Pi_\Delta$ partition of
$G-C$. We claim that $(A \cup C, B)$ is a cluster-$\Pi_\Delta$ partition, which obviously satisfies $|B| \leq k$, and hence, is valid.

Suppose not. Then there must exist a vertex $u \in A$ that has a neighbor in $C$. Clearly, $u \in V({\cal P})$ because $G - V({\cal P})$ is a cluster graph containing cluster $C$ and $u \notin C$.
Vertex $u$ cannot be fixed; otherwise, since no vertex in $C$ is important, $u$~would remain fixed in $G-C$, and hence, $u$ would not belong to $A$ by \cref{lem:fixed}. Since $u$ is adjacent to $C$, it follows from condition (iii) of the labeling scheme that at least $\min\{\Delta+2, |N(u) \cap C|\} > 0$ (since $u$ is adjacent to $C$) neighbors of $u$ in $C$ are labeled important. This, however, contradicts the assumption of the reduction rule.
\end{proof}

\begin{rrule}\label[rrule]{rule:is-many-unimportant}
  If there is a cluster~$C$ in~$G - V({\cal P})$ such that~$C$ contains (at least) $\Delta +3$ unimportant vertices, then remove one of these unimportant vertices.
\end{rrule}
\begin{proof}% 
Let $w$ be an unimportant vertex in~$C$~that is removed by an application of this rule. If $G$ has a valid cluster-$\Pi_\Delta$ partition $(A, B)$, then clearly so does $G-w$. To prove the converse, suppose that $G-w$ has a valid cluster-$\Pi_\Delta$ partition $(A, B)$. We claim that $(A \cup \{w\}, B)$ is a cluster-$\Pi_\Delta$ partition of $G$, which obviously will be valid.

Since $C$ contains
 $\Delta+2$ neighbors $w_1, \ldots, w_{\Delta+2}$ of $w$ that are unimportant and the maximum degree of $G[B]$ is at most $\Delta$, at least one of these vertices, say $w_1$, belongs to a cluster $C'$ in $A$. Every vertex in $C'$ that is in $V \setminus V({\cal P})$ must be in $C$, and hence, is adjacent to $w$. Now suppose that a vertex $u \in V({\cal P})$ is in $C'$. We will show that $u$ must be adjacent to $w$. Suppose, towards a contradiction, that $u$ is not adjacent to $w$. Since $w$ is unimportant, $u$ cannot be fixed (otherwise, $u$ would be fixed in $G-w$, and would belong to~$B$ by \cref{lem:fixed}).
 Since $u$ is adjacent to $w_1 \in C$, and $u$ is nonfixed, condition (iii) of the labeling scheme applies to $u$, and in particular, $\min\{\Delta+2, |C- N(u)|\}$ nonneighbors of $u$ in $C$ are labeled as important. Since $w$ is a nonneighbor of $u$ in $C$, and $w$ is unimportant, it follows that there are $\Delta+2$ nonneighbors of $u$ in $C$ that are different from $w$, and that are labeled important. At least one of these vertices, say $x$, must be in $A$. But then $(u, w_1, x)$ forms an induced $P_3$ in $A$ (note that $w_1$ is adjacent to $x$ since both of them are in~$C$). This is a contradiction. It follows that each vertex in~$C'$ is adjacent to $w$, and hence, $C' \cup \{w\}$ is a cluster in $A \cup \{w\}$.

 To conclude that $G[A \cup \{w\}]$ is a cluster graph, it remains to show that no vertex~$u$ that belongs to another cluster $C'' \neq C'$ in $G[A \cup \{w\}]$ is adjacent to $w$. Suppose not. Then clearly $u \in V({\cal P})$, and by the same arguments as above, $u$ cannot be fixed. Since $u$ is adjacent to $w \in C$, and $u$ is nonfixed, condition (iii) of the labeling scheme applies to $u$, and in particular, $\min\{\Delta+2, |C \cap N(u)|\}$ neighbors of $u$ in $C$ are labeled as important. Since $w$ is unimportant, it follows that there are $\Delta+2$ neighbors of $u$ in~$C$ that are different from $w$, and that are labeled important. One of these neighbors, say $x$, must be in $A$, and hence, must belong to the same cluster as both $u$ and $w_1$ (because $w_1 \in C$). But then this implies that $C' = C''$, contradicting our assumption that $u$ belongs to a different cluster than $C'$.

 It follows that $(A \cup \{w\}, B)$ is a valid cluster-$\Pi_\Delta$ partition of $G$.
\end{proof}

\begin{lemma}\label[lemma]{lem:kernelbound}
  Let~$(G,k)$ be an instance of \deltacluster~that is reduced with respect to the above reduction rules, then~$G$ has~$\Oh((\Delta^2 +1) \cdot k^2)$ vertices.
\end{lemma}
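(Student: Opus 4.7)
The plan is to bound separately the vertices in $V({\cal P})$, the important vertices in $V \setminus V({\cal P})$, and the unimportant vertices in $V \setminus V({\cal P})$, and then to add these three bounds together.

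First, since \cref{rule:is-packing-size} does not apply, we have $|{\cal P}| \leq k$, and thus $|V({\cal P})| \leq 3k$. Second, by \cref{obs:important-verts}, the number of important vertices is $\Oh((\Delta + 1) \cdot k^2)$.

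Third, I would bound the unimportant vertices in $V \setminus V({\cal P})$. Since \cref{rule:is-unimportant-clique} does not apply, every cluster of $G - V({\cal P})$ contains at least one important vertex. Hence the number of clusters in $G - V({\cal P})$ is at most the total number of important vertices, which is $\Oh((\Delta + 1) \cdot k^2)$. Since \cref{rule:is-many-unimportant} does not apply, every such cluster contains at most $\Delta + 2$ unimportant vertices. Multiplying these two bounds gives $\Oh((\Delta + 1)(\Delta + 2) \cdot k^2) = \Oh((\Delta^2 + 1) \cdot k^2)$ unimportant vertices in $V \setminus V({\cal P})$.

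Summing the three contributions $3k + \Oh((\Delta + 1) \cdot k^2) + \Oh((\Delta^2 + 1) \cdot k^2)$ yields the claimed bound of $\Oh((\Delta^2 + 1) \cdot k^2)$ vertices. No step here is a real obstacle, since all the non-trivial work (the correctness of the rules and the count of important vertices) has already been carried out; the lemma is essentially a bookkeeping conclusion of the preceding rules.
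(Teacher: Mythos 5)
Your proposal is correct and follows essentially the same argument as the paper: bound $|V({\cal P})|$ via \cref{rule:is-packing-size}, invoke \cref{obs:important-verts} for the important vertices, bound the number of clusters of $G-V({\cal P})$ by the number of important vertices via \cref{rule:is-unimportant-clique}, and multiply by the per-cluster bound on unimportant vertices from \cref{rule:is-many-unimportant}. (The paper states the per-cluster bound as $\Delta+3$ rather than your $\Delta+2$, an immaterial difference for the asymptotic claim.)
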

\begin{proof}
  Since~$(G,k)$ is reduced with respect to \cref{rule:is-packing-size},~$|V({\cal
    P})|\le 3k$. By \cref{obs:important-verts}, the number of important vertices
  in~$V\setminus V({\cal P})$ is~$\Oh((\Delta +1) \cdot k^2)$. Thus, to show the upper bound on the kernel size, it
  remains to upper bound the number of unimportant vertices in~$V\setminus V({\cal P})$.

  To this end, we first upper bound the number of clusters in~$\Gr$. Since~$(G,k)$ is reduced
  with respect to \cref{rule:is-unimportant-clique}, every cluster in~$\Gr$ contains
  at least one important vertex. By \cref{obs:important-verts}, the number of
  important vertices in~$G$ is~$\Oh((\Delta +1)\cdot k^2)$. Thus, the total number of clusters in~$\Gr$
  is~$\Oh((\Delta+1) \cdot k^2)$.

  Now, observe that since~$(G,k)$ is reduced with respect to
  \cref{rule:is-many-unimportant}, there are at most $\Delta+3$ unimportant vertices in
  each cluster, and thus~$\Oh((\Delta^2 +1) \cdot k^2)$ unimportant vertices overall.
\end{proof}

{
\begin{theorem}
\deltacluster, parameterized by the size $k$ of $B$, has a polynomial kernel with $\Oh((\Delta^2 +1) \cdot k^2)$ vertices, that is computable in time $\Oh(k\cdot (m+n))$, where $n$ and $m$ are the number of vertices and edges, respectively, in the graph.
\end{theorem}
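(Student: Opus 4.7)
The plan is to bundle the earlier developments into the final algorithmic statement: apply \cref{rule:is-packing-size} after computing a maximal $P_3$-packing, compute the labeling scheme that marks important vertices, then apply \cref{rule:is-unimportant-clique,rule:is-many-unimportant} exhaustively, and finally invoke \cref{lem:kernelbound} for the size bound. Correctness is essentially already in place: the correctness of each reduction rule has been established, and the size bound of $\Oh((\Delta^2+1)\cdot k^2)$ vertices follows verbatim from \cref{lem:kernelbound}. So the only genuine work left is bookkeeping and establishing the running time, which is where I expect the main obstacle to lie (in particular, avoiding an extra factor of $n$ in the labeling phase).

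The first step is to compute a maximal $P_3$-packing $\mathcal{P}$ in $\Oh(n+m)$ time in a standard greedy fashion: scan vertices, and for each vertex $v$ that is still ``free'', check its adjacency list to find two more free vertices forming a $P_3$ with $v$, mark them as used, and continue. Then \cref{rule:is-packing-size} is checked by simply counting $|\mathcal{P}|$. Next, since $G - V(\mathcal{P})$ is a cluster graph, its connected components (clusters) can be enumerated in $\Oh(n+m)$ time; label every vertex of $G - V(\mathcal{P})$ with its cluster index so that cluster membership can be queried in $\Oh(1)$.

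The labeling scheme is the running-time bottleneck. For each of the at most $3k$ vertices $u \in V(\mathcal{P})$, I would scan $N(u)$ once, bucketing neighbors by cluster index. This already determines whether $u$ is heavy (more than $k+1$ distinct buckets) and, for each cluster $C$ touched by $u$, whether $|N(u)\cap C| \geq \Delta+2$. Marking the required important vertices of type (i) and the neighbor side of types (ii)/(iii) is straightforward in time $\Oh(\deg(u))$. The subtle part is the nonneighbor side in types (ii)/(iii): determining $|C \setminus N(u)|$ and picking up to $\Delta+2$ nonneighbors in $C$. Here we use that the cluster sizes are available in $\Oh(1)$, so $|C \setminus N(u)| = |C| - |N(u) \cap C|$; to actually pick the vertices, I walk through the vertex list of $C$ and reject those that lie in $N(u)$ (a membership bit-array for $N(u)$ is built in $\Oh(\deg(u))$ and torn down in the same time), stopping after $\Delta+2$ successes. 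Because the nonneighbor side is only invoked on clusters already fixed for $u$ (one cluster in case~(ii), and the $\le k+1$ clusters adjacent to $u$ in case (iii)), the number of vertices examined per $u$ for the nonneighbor side is $\Oh((\Delta+1)(k+1))=\Oh(\Delta k + k)$. Combined with the $\Oh(\deg(u))$ cost of the neighbor side, and summed over the at most $3k$ vertices $u\in V(\mathcal{P})$, the labeling step takes $\Oh(k\cdot(m+n))$ time.

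Applying \cref{rule:is-unimportant-clique} and \cref{rule:is-many-unimportant} is then cheap: one more pass over each cluster in $G - V(\mathcal{P})$ suffices, since we only need to count important vertices per cluster and, if the cluster has at least $\Delta+3$ unimportant vertices, discard the surplus; this takes $\Oh(n+m)$ total. After these applications the preconditions of \cref{lem:kernelbound} are met, giving a kernel with $\Oh((\Delta^2+1)\cdot k^2)$ vertices. Combining the running-time bounds for all the above steps yields the claimed $\Oh(k(m+n))$ overall running time, which completes the proof.
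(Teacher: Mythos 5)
Your proposal is correct and follows essentially the same route as the paper: the same sequence of reduction rules, the same labeling scheme, the size bound via \cref{lem:kernelbound}, and a running-time analysis based on cluster indexing and per-vertex neighbor scans (the paper likewise handles the nonneighbor selection by partitioning $N(u)$ by cluster color). The only slight imprecision is your claim that a maximal $P_3$-packing can be found in $\Oh(n+m)$ total by a single adjacency-list scan — detecting an induced $P_3$ through a given vertex is not that immediate — but the paper itself only uses an $\Oh(k\cdot(m+n))$ bound for this step (one $\Oh(m+n)$ search per $P_3$), which is all the theorem requires.
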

}
\begin{proof}
Given an instance $(G, k)$ of \deltacluster, we start by computing a $P_3$-packing ${\cal P}$. Afterwards, we apply \cref{rule:is-packing-size}--\cref{rule:is-many-unimportant}. If after the application of these reduction rules the instance $(G, k)$ is not rejected, then these reduction rules result in an equivalent instance $(G', k)$ of \deltacluster~satisfying $|V(G')| = \Oh((\Delta^2 +1) \cdot k^2)$ by \cref{lem:kernelbound}. Clearly, this implies also that the size of~$G'$ is polynomial in~$k$. Therefore, what is left is analyzing the running time taken to apply \cref{rule:is-packing-size}--\cref{rule:is-many-unimportant}.

First, it is important to observe that each reduction rule is applied exhaustively once, meaning that we apply a particular reduction rule exhaustively, but no more after any of the other reduction rules have been applied. In particular, after applying any of the reduction rules, $\Gr$ is still a cluster graph, because the reduction rules only remove vertices. % 
Moreover, the reduction rules leave unchanged the status of each vertex $u \in V({\cal P})$ as (fixed) heavy, (fixed) nonheavy, or nonfixed, because only (edges to) unimportant vertices are removed and the important vertices maintain the status of $u$. The reduction rules also leave unchanged the label of each vertex in $V\setminus V({\cal P})$ as important or unimportant, for the same reason. Therefore, it suffices to analyze the running time of a single, exhaustive application of each of the reduction rules.

To apply \cref{rule:is-packing-size}, we observe that, as is well known, a $P_3$ in $G$ can be recognized in $\Oh(m+n)$ time. (For instance, this can done by computing the connected components of $G$, and the degree of each vertex in $G$. We can then identify a connected component that is not a clique, which must exist if a $P_3$ exists. A $P_3$ in such a component can then be computed in linear time.) Therefore, ${\cal P}$ can be greedily computed in time $\Oh(k\cdot (m+n))$ (note that if more than $k$ $P_3$'s are identified in $G$, then the instance can be immediately rejected). It follows from the preceding that \cref{rule:is-packing-size} can be applied in $\Oh(k\cdot (m+n))$ time.

Next, we show that we can classify the vertices in $V({\cal P})$ into fixed heavy, fixed nonheavy, and nonfixed in $\Oh(m+n)$ time. To do so, we first compute the clusters in $G-V({\cal P})$, and
color the vertices of different clusters with different colors, \ie, each vertex in the $i$-th cluster receives color $i$, for some arbitrary numbering of the clusters. We then iterate through the vertices in $V({\cal P})$, and for each vertex $v \in V({\cal P})$, we iterate through its neighbors in $G-V({\cal P})$. If $v$ has at least $k+2$ neighbors in $G-V({\cal P})$ with different colors (this can be determined in time $\Oh(\deg(v))$ by sorting the colors of the neighbors of $v$ using Counting Sort), then we
define $v$ to be fixed and heavy. For each vertex in $V({\cal P})$ that has not been classified yet, we iterate through its neighbors in $G-V({\cal P})$, and partition its neighbors into subsets, such that all neighbors in the same subset have the same color (belong to the same cluster); for each such subset of neighbors of size $s \geq \Delta +2$ that belong to a cluster $C$, we check if $|C| \geq s+\Delta +2$, and if it is, we classify $v$ as fixed but nonheavy. All the remaining vertices in $V({\cal P})$ are defined to be nonfixed. Clearly, this whole process can be done in $\Oh(m+n)$ time.

Afterwards, we label the vertices in $G-V({\cal P})$ as important or unimportant. To do so, for each heavy vertex $v$ in $V({\cal P})$, we iterate through its neighbors in $G-V({\cal P})$ to pick $k+2$ neighbors of distinct colors, and label them important. This can be done in time $\Oh(\deg(v))$, and hence, $\Oh(m+n)$ time overall. For each fixed nonheavy vertex $v$ in $V({\cal P})$, we iterate through its neighbors to determine a cluster $C$ such that $v$ has $\Delta +2$ neighbors and $\Delta+2$ nonneighbors in $C$, and label those vertices as important. Again, this can be done in time $\Oh(\deg(v))$, and hence, $\Oh(m+n)$ time overall.
Finally, for each nonfixed vertex $v$ in $V({\cal P})$, we iterate through its neighbors to partition them into subsets of the same color; for each subset of neighbors of the same color that belong to a cluster $C$, we label $\min\{\Delta+2, |N(u) \cap C|\}$ (arbitrary) neighbors of~$u$ in~$C$ and $\min\{\Delta+2, |C- N(u)|\}$ (arbitrary) nonneighbors of $v$ in $C$ as important. This can be done in time $\Oh(\Delta + \deg(v))$, and hence in time $\Oh(\Delta \cdot (m+n))=\Oh(k \cdot (m+n))$ overall.

To apply \cref{rule:is-unimportant-clique}, we go over every cluster $C$ in $G-V({\cal P})$, checking if it contains any important vertices, and if not, we remove $C$ from $G$. This can be done in $\Oh(m+n)$ time.

Finally, to apply \cref{rule:is-many-unimportant}, we again go over every cluster $C$ in $G-V({\cal P})$, and remove all but $\Delta +2$ unimportant vertices from $C$. Again, this can be done in $\Oh(m+n)$ time. It follows that the kernelization algorithm runs in $\Oh(k \cdot (m+n))$ time.
\end{proof}

\section{Conclusion and Outlook}
As we have seen in this paper, the pushing process is not only useful for finding
efficient algorithms for \textsc{$(\Pi_A,\Pi_B)$-Recognition} as demonstrated by Kanj \etal~\cite{KKSL17}, but can
also be used to classify when or when not such problems admit polynomial kernels. Herein, we focused on the well-motivated case when
$\Pi_A$ is the set of cluster graphs on the first level above
triviality; when $\Pi_A$ is characterized by forbidden induced
subgraphs of order at least three. A natural next step is to
check to which extent our results carry over to other sets of
forbidden subgraphs for~$\Pi_A$. % 

The lower bound given in \cref{thm:main} should in a straightforward
manner extend to graph classes~$\Pi_A$ that are closed under disjoint
union, have neighborhood diversity\footnote{The neighborhood diversity
  of a graph is the number of different open neighborhoods.} at
most~$k$, and contain cluster graphs. A more challenging avenue is to
try to apply our techniques to related partitioning problems such as
\textsc{Rectangle Stabbing}.

Finally, when we parameterized by the size~$k$ of one of the parts, we
obtained a kernel with $\Oh(k^d)$ vertices (\cref{thm:general-kernel}), where
$d$ is the largest order of a forbidden subgraph of the other part.
Since the techniques used herein are similar to the ones for
$d$-\textsc{Hitting Set}, it is natural to ask whether this upper bound
can be improved to~$\Oh(k^{d - 1})$.

\bibliographystyle{abbrvnat}
\bibliography{kernel}

\end{document}

